\newmdtheoremenv{theomd}{Theorem}
\newtheorem{theorem}{Theorem}
\newtheorem*{theorem*}{Theorem}
\newtheorem{claim}[theorem]{Claim}
\newtheorem*{claim*}{Claim}
\newtheorem{proposition}[theorem]{Proposition}
\newtheorem*{proposition*}{Proposition}
\newtheorem{lemma}[theorem]{Lemma}
\newtheorem*{lemma*}{Lemma}
\newtheorem{corollary}[theorem]{Corollary}
\newtheorem*{conjecture*}{Conjecture}
\newtheorem{fact}[theorem]{Fact}
\newtheorem*{hypothesis*}{Hypothesis}
\theoremstyle{definition}
\newtheorem{definition}[theorem]{Definition}
\newmdtheoremenv{defmd}[theorem]{Definition}
\newtheorem{remark}[theorem]{Remark}
\newcommand{\savehyperref}[2]{\texorpdfstring{\hyperref[#1]{#2}}{#2}}
\newcommand{\im}{\mathrm{i}\mkern1mu}
\newcommand{\mper}{\,.}
\newcommand{\mcom}{\,,}
\newcommand{\paren}[1]{\left(#1 \right )}
\newcommand{\brac}[1]{[#1 ]}
\newcommand{\Brac}[1]{\left[#1\right]}
\newcommand{\set}[1]{\left\{#1\right\}}
\newcommand{\Set}[1]{\left\{#1\right\}}
\newcommand{\abs}[1]{\left\lvert#1\right\rvert}
\newcommand{\ceil}[1]{\lceil #1 \rceil}
\newcommand{\norm}[1]{\left\lVert#1\right\rVert}
\newcommand{\nnz}[1]{\mathrm{nnz}(#1)}
\newcommand{\tr}{\mathrm{tr}}
\newcommand{\defeq}{\stackrel{\textup{def}}{=}}
\newcommand{\seteq}{\mathrel{\mathop:}=}
\newcommand{\inprod}[1]{\left\langle #1\right\rangle}
\newcommand{\Z}{{\mathbb Z}}
\newcommand{\N}{{\mathbb Z}_{\geq 0}}
\newcommand{\Np}{{\mathbb Z}_{> 0}}
\newcommand{\R}{\mathbb R}
\newcommand{\Esymb}{\mathbb{E}}
\newcommand{\Psymb}{\mathbb{P}}
\DeclareMathOperator*{\E}{\Esymb}
\DeclareMathOperator*{\ProbOp}{\Psymb}
\newcommand{\Prob}{\mathbb{P}}
\renewcommand{\Pr}[1]{\ProbOp\Brac{#1}}
\newcommand{\e}{\epsilon}
\let\e\epsilon
\newcommand{\cC}{\mathcal C}
\newcommand{\cD}{\mathcal D}
\newcommand{\cG}{\mathcal G}
\newcommand{\cN}{\mathcal N}
\newcommand{\bigO}{O}
\newcommand{\bigo}[1]{\bigO\!\left(#1\right)}
\newcommand{\tbigO}{\tilde{O}}
\newcommand{\tbigo}[1]{\tbigO\!\left(#1\right)}
\DeclareMathOperator*{\argmin}{argmin} 
\DeclareMathOperator*{\argmax}{argmax} 
\newcommand{\poly}{{\sf poly}}
\newcommand{\rd}{{\sf d}}
\newcommand{\T}{\Bar{T}}
\newcommand{\for}{\text{ for }}
\newcommand{\lp}{\mathrm{LP}}
\newcommand{\hobs}[1]{h_{#1}^{\text{obs}}}
\newcommand{\wt}{\mathsf{W}}
\newcommand{\coeff}{\mathsf{C}}
\newcommand{\x}{\mathbf{x}}
\newcommand{\normsum}{\mathsf{S}}
\date{}
\newcommand\notsotiny{\@setfontsize\notsotiny{9.5}{10.5}}
\Crefname{fact}{Fact}{Fact}
\crefname{fact}{fact}{fact}
\title{Sharper Bounds for Chebyshev Moment Matching, with Applications}
\author{
\begin{tabular}{cc}
\makecell{Cameron Musco \\ UMass Amherst \\ \texttt{cmusco@cs.umass.edu}} \hspace{4em}&\hspace{4em}  
\makecell{Christopher Musco \\ New York University \\\texttt{cmusco@nyu.edu}} \\\\
\makecell{Lucas Rosenblatt \\ New York University \\ \texttt{lucas.rosenblatt@nyu.edu}} \hspace{4em}&\hspace{4em}  
\makecell{Apoorv Vikram Singh \\ New York University \\ \texttt{apoorv.singh@nyu.edu}}
\end{tabular}
}
  \newcommand{\cAAAI}[1]{AAAI\ Conference\ on\ Artificial (AAAI)}
\begin{document}

\maketitle

\begin{abstract}
We study the problem of approximately recovering a probability distribution given noisy measurements of its Chebyshev polynomial moments. This problem arises broadly across algorithms, statistics, and machine learning. 
By leveraging a \emph{global decay bound} on the coefficients in the Chebyshev expansion of any Lipschitz function, we sharpen prior work, proving that accurate recovery in the Wasserstein distance is possible with more noise than previously known. Our result immediately yields a number of applications:

\vspace{.3em}

\begin{enumerate}[wide, labelwidth=0pt, labelindent=0pt,itemsep=.4em]
\item We give a simple ``linear query'' algorithm for constructing a differentially private synthetic data distribution with Wasserstein-$1$ error $\tilde{O}(1/n)$ based on a dataset of $n$ points in $[-1,1]$. This bound is optimal up to log factors, and matches a recent result of Boedihardjo, Strohmer, and Vershynin [Probab. Theory. Rel., 2024], which uses a more complex ``superregular random walk'' method. 
\item We give an $\tilde{O}(n^2/\epsilon)$ time algorithm for the linear algebraic problem of estimating the spectral density of an $n\times n$ symmetric matrix up to $\epsilon$ error in the Wasserstein distance. Our result accelerates prior methods from  Chen et al. [ICML 2021] and Braverman et al. [STOC 2022].

\item We tighten an analysis of Vinayak, Kong, Valiant, and Kakade [ICML 2019] on the maximum likelihood estimator for the statistical problem of ``Learning Populations of Parameters'', extending the parameter regime in which sample optimal results can be obtained. 
\end{enumerate}
\vspace{.5em}

Beyond these main results, we provide an extension of our bound to estimating distributions in $d > 1$ dimensions. We hope that these bounds will find applications more broadly to problems involving distribution recovery from noisy moment information.
\end{abstract}
\thispagestyle{empty}
\newpage
\setcounter{page}{1}

\section{Introduction}
\label{sec:intro}
The problem of recovering a probability distribution (or its parameters) by ``matching'' noisy estimates of the distribution's moments goes back over 100 years to the work of Chebyshev and Pearson \cite{Pearson:1894,Pearson:1936,Fischer:2011}. Moment matching continues to find a wide variety of applications, both in traditional statistical problems \cite{KalaiMoitraValiant:2010,MoitraValiant:2010,RabaniSchulmanSwamy:2014,WuYang:2019,WuYang:2020,FanLi:2023} and beyond. For example, moment matching is now widely used for solving eigenvalue estimation problems in numerical linear algebra and computational chemistry \cite{WeisseWelleinAlvermann:2006,Cohen-SteinerKongSohler:2018,ChenTrogdonUbaru:2021,Chen:2022}. 

One powerful and general result on moment matching for distributions with \emph{bounded support} is that the method directly leads to approximations with small error in the Wasserstein-$1$ distance (a.k.a. earth mover's distance). Concretely,
given a distribution $p$ supported on $[-1,1]$,\footnote{The result easily extends to $p$ supported on any finite interval by shifting and scaling the distribution to $[-1,1]$. For a general interval $[a,b]$, matching $k$ moments yields error $O(|a - b|/k)$ in the Wasserstein-$1$ distance.} any distribution $q$ for which $\E_{x\sim p}[x^i] = \E_{x\sim q}[x^i]$ for $i = 1, \ldots, k$ satisfies $W_1(p,q) = O(1/k)$, where $W_1$ denotes the Wasserstein-$1$ distance \cite{KongValiant:2017,ChenTrogdonUbaru:2021}. I.e., to compute an $\e$-accurate approximation to $p$, it suffices to compute $p$'s first $O(1/\e)$ moments and to return any distribution $q$ with the same moments. 

Unfortunately, the above result is extremely sensitive to {noise}, so is difficult to apply in the typical setting where, instead of $p$'s exact moments, we only have access to \emph{estimates} of the moments (e.g., computed from a sample). In particular, it can be shown that the moments need to be estimated to accuracy $O(1/2^k)$ if we want to approximate $p$ up to Wasserstein error $O(1/k)$ \cite{JinMuscoSidford:2023}. In other words, distribution approximation is \emph{poorly conditioned} with respect to the standard moments. 

\subsection{Chebyshev moment matching}
\label{sec:cheb_moment_match}
One way of avoiding the poor conditioning of moment matching is to move from the standard moments, $\E_{x\sim p}[x^i]$, to a better conditioned set of ``generalized'' moments. Specifically, significant prior work \cite{WeisseWelleinAlvermann:2006,wang2016differentially,BravermanKrishnanMusco:2022} leverages \emph{Chebyshev moments} of the form $\E_{x\sim p}[ T_i(x)]$, where $T_i$ is the $i^\text{th}$ Chebyshev polynomial of the first kind, defined as:
\begin{align*}
T_0(x) &= 1 & T_1(x) &= x &  T_i(x) &= 2xT_{i-1}(x) - T_{i-2}(x), \text{  for $i \geq 2$}. 
\end{align*}
The Chebyshev moments are known to be less noise sensitive than the standard moments: instead of exponentially small error, it has been shown that $\tilde{O}(1/k)$ error\footnote{Throughout, we let $\tilde{O}(z)$ denote $O(z\log^c(z))$ for constant $c$.} in computing $p$'s first $k$ Chebyshev moments suffices to find a distribution that is $O(1/k)$ close to $p$ in Wasserstein distance (see, e.g., Lemma 3.1 in \cite{BravermanKrishnanMusco:2022}). This fact has been leveraged to obtain efficient algorithms for distribution estimation in a variety of settings. 
For example, Chebyshev moment matching leads to $O(n^2/\poly(\e))$ time algorithms for estimating the eigenvalue distribution (i.e., the spectral density) of an $n\times n$ symmetric matrix $A$  to error $\e \|A\|_2$ in the Wasserstein distance \cite{BravermanKrishnanMusco:2022}. 

Chebyshev moment matching has also been used for \emph{differentially private synthetic data generation}. In this setting, $p$ is uniform over a dataset $x_1, \ldots, x_n$. The goal is to find some $q$ that approximates $p$, but in a differentially private way, which informally means that $q$ cannot reveal too much information about any one data point, $x_j$ (see \Cref{sec:intro_applications} for more details) \cite{DworkNaorReingold:2009}.
A differentially private $q$ can be used to generate private synthetic data that is representative of the original data.
One approach to solving this problem is to compute $p$'s Chebyshev moments, and then add noise, which is known to ensure privacy \cite{DworkRoth:2014}. Then, one can find a distribution $q$ that matches the noised moments. It has been proven that, for a dataset of size $n$, this approach yields a differentially private distribution $q$ that is $\tilde{O}(1/n^{1/3})$ close to $p$ in Wasserstein distance \cite{wang2016differentially}.

\subsection{Our contributions} 
Despite the success of Chebyshev moment matching, including for the applications discussed above, there is room for improvement. For example, for private distribution estimation, alternative methods can achieve nearly-optimal error $\tilde{O}(1/n)$ in Wasserstein distance for a dataset of size $n$ \cite{boedihardjo2022private}, improving on the $\tilde{O}(1/n^{1/3})$ bound known for moment matching. For eigenvalue estimation, existing moment matching methods obtain an optimal quadratic dependence on the matrix dimension $n$, but a suboptimal polynomial dependence on the accuracy parameter, $\e$  \cite{BravermanKrishnanMusco:2022}.

The main contribution of this work is to resolve these gaps by proving a sharper bound on the accuracy to which the Chebyshev moments need to be approximated to recover a distribution to high accuracy in the Wasserstein distance.
Formally, we prove the following:
\begin{restatable}{theorem}{masterthm}
\label{thm:master_thm}
Let $p$ and $q$ be distributions supported on $[-1,1]$. For any positive integer $k$, if the distributions' first $k$ Chebyshev moments satisfy
\begin{align}
\label{eq:master_require_general}
    \sum_{j=1}^k \frac{1}{j^2}\left(\E_{x\sim p}T_j(x) - \E_{x\sim q}T_j(x)\right)^2 \leq \Gamma^2, 
\end{align}
then, for an absolute constant $c$\footnote{Concretely, we prove a bound of $\frac{36}{k} + \Gamma$, although we believe the constants can be improved, at least to $\frac{2\pi}{k} + \Gamma$, and possibly further. See \Cref{sec:gen} for more discussion.},\vspace{-.5em}
\begin{align}
\label{eq:master_ensure}
    W_1(p,q) \leq \frac{c}{k} + \Gamma.
\end{align}
As a special case, \eqref{eq:master_require_general} holds if for all $j \in \{1, \ldots, k\}$,
\begin{align}
\label{eq:master_require}
    \left | \E_{x\sim p}T_j(x) - \E_{x\sim q}T_j(x)\right| \leq \Gamma \cdot \sqrt{\frac{{j}}{{1 + \log k}}}.\footnotemark
\end{align}
\end{restatable}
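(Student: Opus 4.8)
The plan is to bound $W_1(p,q)$ by duality: $W_1(p,q) = \sup_f \big(\E_{x\sim p}f(x) - \E_{x\sim q}f(x)\big)$ over $1$-Lipschitz functions $f$ on $[-1,1]$. Fix such an $f$. The idea is to approximate $f$ by a degree-$k$ polynomial $\tilde f$, expanded in the Chebyshev basis as $\tilde f = \sum_{j=0}^k a_j T_j$, and then split
\[
\E_p f - \E_q f = \big(\E_p(f-\tilde f) - \E_q(f - \tilde f)\big) + \sum_{j=1}^k a_j\big(\E_p T_j(x) - \E_q T_j(x)\big),
\]
using that the $j=0$ term cancels since $T_0 \equiv 1$. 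The first bracket is controlled by $2\|f - \tilde f\|_\infty$, and by a standard Jackson-type theorem the best degree-$k$ polynomial approximation of a $1$-Lipschitz function on $[-1,1]$ has sup-error $O(1/k)$ — this is where the $c/k$ term comes from. So it remains to bound the second sum, the ``moment-matching'' contribution.

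For the second sum, apply Cauchy--Schwarz against the weights in hypothesis \eqref{eq:master_require_general}:
\[
\Big|\sum_{j=1}^k a_j\big(\E_p T_j - \E_q T_j\big)\Big| \le \Big(\sum_{j=1}^k j^2 a_j^2\Big)^{1/2} \Big(\sum_{j=1}^k \tfrac{1}{j^2}\big(\E_p T_j - \E_q T_j\big)^2\Big)^{1/2} \le \Gamma \cdot \Big(\sum_{j=1}^k j^2 a_j^2\Big)^{1/2}.
\]
So the crux is to show that $f$ can be approximated by a degree-$k$ polynomial $\tilde f$ whose Chebyshev coefficients satisfy $\sum_{j=1}^k j^2 a_j^2 = O(1)$, \emph{simultaneously} with $\|f-\tilde f\|_\infty = O(1/k)$. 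This is precisely the ``global decay bound'' advertised in the abstract: for a $1$-Lipschitz $f$, the Chebyshev coefficients obey $\sum_j j^2 a_j^2 \lesssim 1$ (morally because $a_j = \frac{2}{\pi}\int_{-1}^1 \frac{f(x)T_j(x)}{\sqrt{1-x^2}}dx$, and substituting $x = \cos\theta$ turns this into a Fourier cosine coefficient of $g(\theta) = f(\cos\theta)$, whose derivative has $L^2$ norm bounded in terms of the Lipschitz constant of $f$, so Parseval gives $\sum_j j^2 a_j^2 = O(\|g'\|_2^2) = O(1)$). I would establish this as a lemma: first for the \emph{exact} (infinite) Chebyshev expansion of $f$, then note that truncating to degree $k$ only decreases $\sum j^2 a_j^2$ and incurs sup-error $O(1/k)$ — either via the Jackson theorem applied to the truncation, or by bounding the tail $\sum_{j>k}|a_j| \le (\sum_{j>k} j^2 a_j^2)^{1/2}(\sum_{j>k} j^{-2})^{1/2} = O(1/\sqrt k)\cdot O(1/\sqrt k) = O(1/k)$ using the decay bound again. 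Putting the pieces together gives $W_1(p,q) \le O(1/k) + \Gamma$, which is \eqref{eq:master_ensure}.

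For the ``special case'' statement: if $|\E_p T_j - \E_q T_j| \le \Gamma\sqrt{j/(1+\log k)}$ for all $j \le k$, then
\[
\sum_{j=1}^k \frac{1}{j^2}\big(\E_p T_j - \E_q T_j\big)^2 \le \frac{\Gamma^2}{1+\log k}\sum_{j=1}^k \frac1j \le \frac{\Gamma^2}{1+\log k}(1 + \log k) = \Gamma^2,
\]
using the harmonic-sum bound $\sum_{j=1}^k 1/j \le 1 + \log k$, so \eqref{eq:master_require_general} holds and the main conclusion applies.

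The main obstacle is the coefficient-decay lemma $\sum_j j^2 a_j^2 = O(1)$ together with getting clean, explicit constants (the footnote promises a concrete $36/k + \Gamma$ and conjectures $2\pi/k + \Gamma$). The Parseval/Fourier argument is conceptually straightforward, but care is needed because $g(\theta) = f(\cos\theta)$ is only Lipschitz (with a $\theta$-dependent, actually bounded, Lipschitz constant since $|\frac{d}{d\theta}\cos\theta| \le 1$), not $C^1$, so one should work with $g \in H^1$, or mollify, and track how the Lipschitz constant of $f$ propagates to $\|g'\|_{L^2[0,\pi]}$ — this is what controls the absolute constant. The interplay between the sup-norm approximation error (needing a Jackson-type estimate with a good constant) and the Sobolev-type coefficient bound is the delicate part; everything else is Cauchy--Schwarz and bookkeeping.
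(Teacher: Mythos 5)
Your proposal follows essentially the same route as the paper: Kantorovich--Rubinstein duality, a constructive Jackson-type (damped truncated Chebyshev) approximation whose coefficients are dominated by those of $f$, the global weighted bound $\sum_j j^2 a_j^2 = O(1)$ for Lipschitz $f$ (your Parseval argument in $\theta = \cos^{-1}x$ is the same computation the paper phrases via orthogonality of second-kind Chebyshev polynomials), Cauchy--Schwarz against the $1/j^2$ weights, and the harmonic-sum bound for the special case. One caveat: your alternative ``plain truncation'' route fails as written, since the global bound only gives $\bigl(\sum_{j>k} j^2 a_j^2\bigr)^{1/2} = O(1)$, not $O(1/\sqrt{k})$, so that tail estimate yields only $\sum_{j>k}|a_j| = O(1/\sqrt{k})$ sup-norm error; the damped Jackson truncation (your primary route, and the paper's) is what delivers the $O(1/k)$ term while keeping the coefficient bound.
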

\footnotetext{Throughout, we let $\log k$ denote the natural logarithm of $k$, i.e., the logarithm with base $e$.}
\Cref{thm:master_thm} characterizes the Chebyshev moment error required for a distribution $q$ to approximate $p$ in Wasserstein distance. The main requirement, \eqref{eq:master_require_general}, involves a weighted $\ell_2$ norm with weights $1/j^2$, which reflects the diminishing importance of higher moments on the Wasserstein distance. Referring to \eqref{eq:master_require}, we obtain a bound of $W_1(p,q) \leq O(1/ k)$ as long as $q$'s $j^\text{th}$ moment differs from $p$'s by $\tilde{O}(\sqrt{j}/k)$. In contrast, prior work requires error $\tilde{O}(1/k)$ \emph{for all of the first $k$ moments} to ensure the same Wasserstein distance bound (Lemma 3.1, \cite{BravermanKrishnanMusco:2022}). 

As a corollary of \Cref{thm:master_thm}, we obtain the following algorithmic result:
\begin{corollary}\label{corr:recovery}
Let $p$ be a distribution supported on $[-1,1]$. 
Given estimates $\hat{m}_1, \ldots, \hat{m}_k$ satisfying
$
    \sum_{j=1}^k \frac{1}{j^2}\left(\E_{x\sim p}T_j(x) - \hat{m}_j\right)^2 \leq \Gamma^2, 
$
\Cref{alg:weighted_regression} returns a distribution $q$ with $W_1(p,q) \leq c'\cdot \left(\frac{1}{k} + \Gamma\right)$ for a fixed constant $c'$, in $\poly(k)$ time.
\end{corollary}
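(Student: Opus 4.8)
\emph{Reduction to the master theorem.}
The plan is to turn the problem into a question about Chebyshev moments and invoke \Cref{thm:master_thm}: it suffices for \Cref{alg:weighted_regression} to output, in $\poly(k)$ time, some distribution $q$ supported on $[-1,1]$ whose first $k$ Chebyshev moments lie within weighted $\ell_2$ distance $O(1/k + \Gamma)$ of those of $p$, where the weights are $1/j^2$ as in \eqref{eq:master_require_general}. Indeed, if $\sum_{j=1}^k \frac{1}{j^2}\big(\E_{x\sim p}T_j(x) - \E_{x\sim q}T_j(x)\big)^2 \le \big(O(1/k+\Gamma)\big)^2$, then \Cref{thm:master_thm} immediately gives $W_1(p,q) \le \frac{c}{k} + O(1/k+\Gamma) = c'(1/k+\Gamma)$. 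Throughout, write $\|v\|_w^2 = \sum_{j=1}^k v_j^2/j^2$ and $m(\mu) = \big(\E_{x\sim\mu}T_1(x),\dots,\E_{x\sim\mu}T_k(x)\big)$.

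\emph{The algorithm: constrained weighted regression on a grid.}
Fix a grid $S = \{s_0,\dots,s_N\}\subset[-1,1]$ of $N = \Theta(k^{3/2})$ points — concretely the Chebyshev points $s_i = \cos(\pi i/N)$ — and have \Cref{alg:weighted_regression} solve
\[
\min_{w\,\ge\,0,\ \sum_i w_i = 1}\ \sum_{j=1}^{k}\frac{1}{j^2}\Big(\textstyle\sum_{i=0}^{N} w_i\,T_j(s_i) - \hat m_j\Big)^2 ,
\]
returning $q = \sum_{i} w_i\,\delta_{s_i}$, which is a distribution on $[-1,1]$. The entries $T_j(s_i)$, $j\le k$, $i\le N$, are computed via the Chebyshev recurrence in $\poly(k)$ time, and the above is a convex quadratic program in $N+1 = \poly(k)$ variables with $N+2$ linear constraints, so it can be solved to additive objective accuracy $1/k^2$ in $\poly(k)$ time by a standard interior-point method; this solver error contributes at most $O(1/k)$ to $\|m(q)-\hat m\|_w$.

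\emph{Bounding the objective via a gridded copy of $p$.}
Let $\tilde p$ be the pushforward of $p$ under the map $x=\cos\theta \mapsto \cos\theta_i$, where $\theta_i = \pi i/N$ is the grid angle nearest $\theta$; then $\tilde p$ is supported on $S$ and is feasible, so optimality of $q$ gives $\|m(q)-\hat m\|_w \le \|m(\tilde p)-\hat m\|_w \le \|m(\tilde p)-m(p)\|_w + \|m(p)-\hat m\|_w \le \|m(\tilde p)-m(p)\|_w + \Gamma$. Since the grid angles are $\pi/N$-spaced, rounding moves $\theta$ by at most $\pi/(2N)$, and because $T_j(\cos\theta) = \cos(j\theta)$ we get $\big|\E_p T_j - \E_{\tilde p}T_j\big| \le j\pi/(2N)$ for every $j$, hence $\|m(\tilde p)-m(p)\|_w^2 \le \sum_{j=1}^k \frac{1}{j^2}\cdot\frac{j^2\pi^2}{4N^2} = \frac{\pi^2 k}{4N^2} = O(1/k^2)$ with $N=\Theta(k^{3/2})$. (A uniform grid works too, using the Markov inequality $\|T_j'\|_{\infty,[-1,1]} = j^2$ in place of $|\cos(j\theta)-\cos(j\theta_i)|\le j|\theta-\theta_i|$, at the price of $N=\Theta(k^{5/2})$ points — still $\poly(k)$.) Combining with a final triangle inequality, $\|m(p)-m(q)\|_w \le \|m(p)-\hat m\|_w + \|\hat m - m(q)\|_w \le 2\Gamma + O(1/k)$, and \Cref{thm:master_thm} finishes the proof.

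\emph{Where the work is.}
Everything except the third paragraph is bookkeeping: a triangle inequality plus the polynomial-time solvability of a least-squares problem over the simplex. The one place where something must actually go right is that the discretization of $p$ perturbs the weighted moment vector by only $O(1/k)$ while the grid stays $\poly(k)$ in size — and this holds precisely because the $1/j^2$ weights exactly absorb the $j^2$ growth in the modulus of continuity of $T_j$ on $[-1,1]$ (equivalently, the $j$ growth in the $\theta$-coordinate), so $N=\poly(k)$ suffices. If one instead measured moment error in an unweighted $\ell_\infty$ sense, this step would fail, which is exactly the improvement \Cref{thm:master_thm} buys us.
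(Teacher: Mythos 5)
Your proposal is correct and follows essentially the same route as the paper's own proof in \Cref{app:corr_proof}: discretize onto $\Theta(k^{3/2})$ Chebyshev nodes, round $p$ in the $\theta=\cos^{-1}(x)$ coordinate so that $T_j(\cos\theta)=\cos(j\theta)$ gives a per-moment error of $O(j/N)$ that the $1/j^2$ weights absorb, use feasibility of the rounded $\tilde p$ plus triangle inequalities to bound the weighted moment error of the optimizer by $2\Gamma+O(1/k)$, and invoke \Cref{thm:master_thm}. The only (harmless) differences are cosmetic: a slightly different choice of Chebyshev grid points and your explicit accounting of the QP solver's additive accuracy, which the paper treats as solving the program ``to high accuracy.''
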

\Cref{alg:weighted_regression} simply solves a linearly-constrained least-squares regression problem to find a distribution $q$ supported on a sufficiently fine grid whose moments match those of $p$ nearly as well as $\hat{m}_1, \ldots, \hat{m}_k$. \Cref{corr:recovery} follows by applying \Cref{thm:master_thm} to bound $W_1(p,q)$.
The linear constraints ensure that $q$ is positive and sums to one (i.e, that it is a distribution). This problem is easily solved using off-the-shelf software: in \Cref{sec:experiments} we implement our method using a solver from MOSEK \cite{mosek} and report some initial experimental results. 

Like prior work, our proof of \Cref{thm:master_thm} (given in \Cref{sec:gen}) relies on tools from polynomial approximation theory. In particular, we leverage a constructive version of Jackson's theorem on polynomial approximation of Lipschitz functions via ``damped Chebyshev expansions'' \cite{Jackson:1912}. 
Lipschitz functions are closely related to approximation in Wasserstein distance through the Kantorovich-Rubinstein duality: $W_1(p,q) = \max_{1-\text{Lip}\, f} \int_{-1}^1 f(x)(p(x) - q(x))dx$. 

In contrast to prior work, we couple Jackson's theorem with a tight ``global'' characterization of the coefficient decay in the Chebyshev expansion of a Lipschitz function. In particular, in \Cref{claim:sum_c_j_sq},  we prove that any $1$-Lipschitz function $f$ with Chebyshev expansion $f =  \sum_{j=0}^{\infty} c_j T_j$ has coefficients that satisfy $\sum_{j=1}^{\infty} j^2 c_j^2 = O(1)$. Prior work only leveraged the well-known ``local'' decay property, that the $j^\text{th}$ coefficient has magnitude bounded by $O(1/j)$ \cite{ThreftenBook}. This property is implied by our bound, but is much weaker. We believe that our new decay bound may be of independent interest given the ubiquitous use of Chebyshev expansions across computational science, statistics, and beyond.

\subsection{Applications}\label{sec:intro_applications}

We highlight three concrete applications of our main bounds,
\Cref{thm:master_thm} and \Cref{claim:sum_c_j_sq}, 
to algorithms for private synthetic data generation, spectral density estimation, and estimating populations of parameters. We suspect further applications exist.

\paragraph{Application 1: Differentially Private Synthetic Data.}
Privacy-enhancing technologies seek to protect individuals' data {without} preventing learning from the data. For theoretical guarantees of privacy, the industry standard is \textit{differential privacy}~\cite{DworkRoth:2014}, which is used in massive data products like the US Census, and is a core tenet of the recent Executive Order on the Safe, Secure, and Trustworthy Development and Use of Artificial Intelligence~\cite{biden2023executive, abowd2018us, abowd2019census}. 

Concretely, we are interested in the predominant notion of \emph{approximate differential privacy}:
\begin{definition}[Approximate Differential Privacy] \label{def:dp}
A randomized algorithm ${\mathcal {A}}$ is $(\e,\delta)$-differentially private if, for all pairs of neighboring datasets $X,X'$, and all subsets $\mathcal{B}$ of possible outputs:
\begin{align*}
    \Prob[{\mathcal {A}}(X)\in \mathcal{B}]\leq e^{\e} \cdot \Prob[{\mathcal {A}}(X')\in \mathcal{B}] + \delta. 
\end{align*}
\end{definition}
In our setting, a dataset $X$ is a collection of $n$ points in a bounded interval (without loss of generality, $[-1,1]$). Two datasets of size $n$ are considered ``neighboring'' if all of their data points are equal except for one. Intuitively, \Cref{def:dp} ensures that the output of $\mathcal {A}$ is statistically indistinguishable from the would-be output had any one individual's data been replaced with something arbitrary.

There exist differentially private algorithms for many statistical tasks \cite{ji2014differential, LiLyuSu:2017, mireshghallah2020privacy}. One task of primary importance is \textit{differentially private data synthesis}. Here, the goal is to generate \emph{synthetic data} that matches the original dataset along a set of relevant statistics or distributional properties. The appeal of private data synthesis is that, once generated, the synthetic data can be used for a wide variety of downstream tasks: a separate differentially private algorithm is not required for each potential use case. 
Many methods for private data synthesis have been proposed~\cite{hardt2010simple, zhang2017privbayes,RosenblattLiuPouyanfar:2020, liu2021iterative, abowd2019census, aydore2021differentially, rosenblatt2023epistemic,domingo2021limits}. Such methods offer strong empirical performance and a variety of theoretical guarantees, e.g., that the synthetic data can effectively answer a fixed set of data analysis queries with high accuracy \cite{hardt2010simple, McKennaMullinsSheldon:2022}. 
Recently, there has been interest in algorithms with more general distributional guarantees -- e.g., statistical distance guarantees between the synthetic data and the original data \cite{wang2016differentially,boedihardjo2022private,HeVershyninZhu:2023}. By leveraging \Cref{thm:master_thm}, we contribute the following result to this line of work:

\begin{restatable}{theorem}{synthdata}\label{thm:synth_data}
    Let $X = \{x_1, \ldots, x_n\}$ be a dataset with each $x_j\in [-1,1]$. Let $p$ be the uniform distribution on $X$. For any $\epsilon,\delta \in (0,1)$, there is an $(\e, \delta)$-differentially private algorithm based on Chebyshev moment matching that, in $O(n) + \poly(\e n)$ time, returns a distribution $q$ satisfying, for a fixed constant $c_1$,
    \vspace{-1.5em}
    \begin{align*}
        \E[W_1(p,q)] \leq c_1\frac{\log(\e n)\sqrt{\log(1/\delta)}}{\e n}.
    \end{align*}
    Moreover, for any $\beta \in (0,1/2)$, $W_1(p,q) \leq c_1\frac{\sqrt{\log(1/\beta) + \log(\e n)}\sqrt{\log(\e n)\log(1/\delta)}}{\e n}$ with probability $\geq 1-\beta$.
\end{restatable}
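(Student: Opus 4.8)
The plan is to instantiate the Chebyshev moment matching machinery of \Cref{thm:master_thm} with a noise level $\Gamma$ dictated by the privacy mechanism, and then track the error through both in expectation and with high probability. Concretely, I would proceed as follows.

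\textbf{Step 1: The mechanism.} Given $X = \{x_1,\dots,x_n\}$ and $p = \mathrm{Unif}(X)$, let $k = \Theta(\e n)$ be a truncation parameter to be fixed at the end, and consider the empirical Chebyshev moments $m_j = \E_{x\sim p}T_j(x) = \frac1n\sum_{i=1}^n T_j(x_i)$ for $j=1,\dots,k$. The algorithm releases noisy moments $\hat m_j = m_j + g_j$ where the $g_j$ are independent Gaussians, and then runs \Cref{alg:weighted_regression} on $\hat m_1,\dots,\hat m_k$ to obtain $q$. The key privacy observation is that, because $|T_j(x)|\le 1$ on $[-1,1]$, changing one data point $x_i$ changes each $m_j$ by at most $2/n$, so the vector $(m_1,\dots,m_k)$ has $\ell_2$-sensitivity at most $2\sqrt k/n$. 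By the Gaussian mechanism (e.g.\ \cite{DworkRoth:2014}), choosing each $g_j\sim \mathcal N(0,\sigma^2)$ with $\sigma = \Theta\!\big(\frac{\sqrt k \sqrt{\log(1/\delta)}}{\e n}\big)$ makes the release $(\e,\delta)$-differentially private; post-processing (running the regression) preserves privacy. The runtime is $O(n)$ to form the moments plus $\poly(\e n)$ for the regression, matching the claim.

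\textbf{Step 2: From noise to Wasserstein error.} To apply \Cref{thm:master_thm} / \Cref{corr:recovery} I need to control $\Gamma^2 = \sum_{j=1}^k \frac1{j^2}(m_j - \hat m_j)^2 = \sum_{j=1}^k \frac{g_j^2}{j^2}$. Since the $g_j$ are i.i.d.\ $\mathcal N(0,\sigma^2)$, we have $\E[\Gamma^2] = \sigma^2\sum_{j=1}^k \frac1{j^2} \le \sigma^2\cdot\frac{\pi^2}{6}$, so $\E[\Gamma] \le \sqrt{\E[\Gamma^2]} = O(\sigma)$ by Jensen. Plugging into \Cref{corr:recovery}, $\E[W_1(p,q)] \le c'(\tfrac1k + \E[\Gamma]) = O(\tfrac1k + \sigma)$. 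With $\sigma = \Theta\!\big(\frac{\sqrt k\sqrt{\log(1/\delta)}}{\e n}\big)$, balancing the two terms suggests taking $k$ as large as possible subject to $\sigma = O(1/k)$, i.e.\ $\sqrt k/( \e n) \cdot \sqrt{\log(1/\delta)} \asymp 1/k$, giving $k^{3/2} \asymp \e n/\sqrt{\log(1/\delta)}$; but the cleaner choice matching the stated bound is $k = \Theta(\e n)$, which yields $\tfrac1k = O(\tfrac1{\e n})$ and $\sigma = O\!\big(\frac{\sqrt{\e n}\sqrt{\log(1/\delta)}}{\e n}\big) = O\!\big(\frac{\sqrt{\log(1/\delta)}}{\sqrt{\e n}}\big)$ — which is too big. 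So in fact I want $k = \Theta\!\big((\e n)^{?}\big)$; let me instead observe that the crucial accounting is $\sqrt k$ in the numerator of $\sigma$. The right balance is $\sigma \asymp 1/k$, i.e.\ $k^{3/2}\asymp \e n/\sqrt{\log(1/\delta)}$... This does not give $1/(\e n)$ either. The resolution, and the point where the $1/j^2$ weighting really earns its keep, is that the weighted sum $\sum g_j^2/j^2$ does \emph{not} scale like $\sigma^2 \cdot (\text{number of terms})$ — it is $\Theta(\sigma^2)$ regardless of $k$ because $\sum 1/j^2$ converges. Hence $\E[\Gamma] = O(\sigma)$ with $\sigma = \Theta(\sqrt k\sqrt{\log(1/\delta)}/(\e n))$, and taking $k = \Theta(\e n / \sqrt{\log(1/\delta)})$... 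I will simply set $k = \lceil \e n\rceil$ so $\tfrac1k \asymp \tfrac1{\e n}$ and $\sigma \asymp \frac{\sqrt{\log(1/\delta)}}{\sqrt{\e n}}$; to kill the extra $\sqrt{\e n}$ I should instead add \emph{decaying} noise $g_j \sim \mathcal N(0, \sigma_j^2)$ with $\sigma_j \asymp \sigma_1$ — no: the honest fix is that the sensitivity-to-$\ell_2$ argument gives sensitivity $2\sqrt k/n$ but we only pay $1/j$-weighted, so we can instead use the \emph{weighted} Gaussian mechanism, adding $g_j \sim \mathcal N(0, (j\tau)^2)$; then the privacy cost is governed by $\sum_j (2/n)^2/(j\tau)^2 = O(1/(n\tau)^2)$, independent of $k$, so $\tau = \Theta(\sqrt{\log(1/\delta)}/(\e n))$ suffices, and $\Gamma^2 = \sum_j g_j^2/j^2 = \sum_j \mathcal N(0,\tau^2)^2$ has expectation $k\tau^2$, giving $\E[\Gamma] = O(\sqrt k \tau) = O(\sqrt{\e n}\cdot \sqrt{\log(1/\delta)}/(\e n))$ — still the same. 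I now see the genuinely correct scheme: add $g_j\sim\mathcal N(0,\tau^2)$ i.i.d.; privacy needs $\sqrt k\tau^{-1}\cdot(1/n) = O(\e/\sqrt{\log(1/\delta)})$, i.e.\ $\tau = \Omega(\sqrt{k\log(1/\delta)}/(\e n))$; and $\Gamma = \Theta(\tau)$ (no $\sqrt k$, by convergence of $\sum 1/j^2$). So $\Gamma = \Theta(\sqrt{k\log(1/\delta)}/(\e n))$ and $W_1 = O(1/k + \sqrt{k\log(1/\delta)}/(\e n))$; optimizing over $k$ gives $k = \Theta((\e n)^{2/3}/\log(1/\delta)^{1/3})$ and $W_1 = O(\log(1/\delta)^{1/3}/(\e n)^{2/3})$ — this is the old $n^{-1/3}$-type bound, not the claimed one. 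The actual improvement must come from \Cref{claim:sum_c_j_sq} being used more cleverly inside \Cref{thm:master_thm}'s proof, or from a Chernoff/Bernstein bound on $\Gamma$ combined with a union bound that lets $k$ be much larger. I will take the stated theorem at face value: with the weighted requirement \eqref{eq:master_require}, moment $j$ only needs error $\tilde O(\sqrt j/k)$, and the Gaussian noise $g_j$ has typical size $\sigma$ \emph{uniformly} in $j$, so the \emph{hardest} constraint is $j=1$, requiring $\sigma = \tilde O(1/k)$; with $\sigma = \Theta(\sqrt k\sqrt{\log(1/\delta)}/(\e n))$ this forces $k^{3/2} = \tilde\Theta(\e n/\sqrt{\log(1/\delta)})$, i.e.\ $k = \tilde\Theta((\e n)^{2/3})$ again. \emph{Therefore the real mechanism must not be i.i.d.\ noise}: it adds noise scaled as $g_j \sim \mathcal N(0,(c\sqrt{j/\log k}\cdot\Gamma_0)^2)$, matching the shape of \eqref{eq:master_require}, so that the $\ell_2$-sensitivity argument pays $\sum_j (2/n)^2/(j\Gamma_0^2/\log k) = O(\log^2 k/(n\Gamma_0)^2)$ — again $k$-independent up to logs — giving $\Gamma_0 = \tilde\Theta(\sqrt{\log(1/\delta)}/(\e n))$, and then \eqref{eq:master_require} is satisfied with $\Gamma = \Gamma_0$ (w.h.p., after a union bound over $j\le k$ on the sub-Gaussian tails, which costs the extra $\sqrt{\log(1/\beta)+\log k}$ factor in the high-probability statement). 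Taking $k = \Theta(\e n)$ then gives $1/k = O(1/(\e n))$ and $\Gamma = \tilde O(\sqrt{\log(1/\delta)}/(\e n))$, hence $\E[W_1(p,q)] = \tilde O(\sqrt{\log(1/\delta)}/(\e n))$, matching the claim; the log factor $\log(\e n)$ enters from $\log k$ with $k=\Theta(\e n)$.

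\textbf{Step 3: High-probability bound.} For the $1-\beta$ statement I would not use $\E[\Gamma^2]$ but instead the sub-Gaussian/sub-exponential concentration of each $g_j$: $\Pr[|g_j| > t\sigma_j] \le 2e^{-t^2/2}$, so a union bound over $j = 1,\dots,k$ at level $t = \Theta(\sqrt{\log(k/\beta)})$ shows that simultaneously $|g_j| \le \sigma_j\cdot\Theta(\sqrt{\log(k/\beta)})$, which is exactly the form needed to invoke \eqref{eq:master_require} with $\Gamma = \Gamma_0\cdot\Theta(\sqrt{\log(k/\beta)})$. Combined with Step 2's choice of $\Gamma_0$ and $k=\Theta(\e n)$, this gives $W_1(p,q) = O\!\big(\frac{\sqrt{\log(1/\beta)+\log(\e n)}\,\sqrt{\log(\e n)\log(1/\delta)}}{\e n}\big)$ with probability $\ge 1-\beta$, as stated.

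\textbf{Main obstacle.} The delicate point is the interplay between (i) how the DP noise must be \emph{shaped} across the $k$ coordinates to keep the $\ell_2$-sensitivity-to-noise ratio from blowing up with $k$, and (ii) the weighted moment-error requirement \eqref{eq:master_require_general}/\eqref{eq:master_require} of \Cref{thm:master_thm}. Matching these two profiles is what lets $k$ be taken as large as $\Theta(\e n)$, which is the source of the $\tilde O(1/(\e n))$ rate; a naive i.i.d.-noise analysis only yields $\tilde O((\e n)^{-2/3})$. The rest — Gaussian mechanism privacy accounting, Jensen for the expectation bound, and a union bound over sub-Gaussian tails for the high-probability bound — is routine.
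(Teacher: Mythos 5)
Your final scheme is, in substance, the paper's own: after the detours you arrive at adding independent Gaussian noise whose variance grows linearly in $j$ (equivalently, working with the moments scaled by $1/\sqrt{j}$, whose $\ell_2$-sensitivity is $O(\sqrt{\log k}/n)$ rather than $O(\sqrt{k}/n)$), taking $k=\Theta(\e n)$, and post-processing the noisy moments through the weighted least-squares recovery so that \Cref{thm:master_thm} with the weighted condition \eqref{eq:master_require_general} and Jensen's inequality give $\E[W_1(p,q)] = O\bigl(\log(\e n)\sqrt{\log(1/\delta)}/(\e n)\bigr)$. This is exactly the mechanism of \Cref{alg:dp_algorithm} and the paper's privacy/accuracy accounting, so the core of the proposal is correct.

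Two places where the write-up falls short of the stated theorem. First, runtime: computing $k=\Theta(\e n)$ Chebyshev moments of the raw data via the recurrence costs $\Theta(nk)=\Theta(\e n^2)$, which is not of the form $O(n)+\poly(\e n)$ (take $\e n = \log n$). The paper first rounds each $x_i$ to a uniform grid of spacing $1/\lceil\e n\rceil$ in $O(n)$ time and works with the rounded empirical distribution thereafter; the rounding incurs an extra $W_1(p,\tilde p)\le 1/(2\lceil\e n\rceil)$ handled by the triangle inequality, and the same grid (of size $O(\e n)$, smaller than the $O(k^{1.5})$ Chebyshev-node grid of \Cref{alg:weighted_regression}) is reused as the support in the regression. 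Your proposal omits this step, so the claimed time bound is not justified as written. Second, the high-probability bound: your per-coordinate union bound gives $|g_j|\le \sqrt{j}\,\sigma\sqrt{2\log(2k/\beta)}$ for all $j$, and then \eqref{eq:master_require} yields $\Gamma$ of order $\sigma\sqrt{\log k}\sqrt{\log(k/\beta)}$; with $\sigma=\Theta\bigl(\sqrt{\log k}\sqrt{\log(1/\delta)}/(\e n)\bigr)$ this is a $\sqrt{\log(\e n)}$ factor larger than the bound in the theorem. The paper instead applies sub-exponential (Bernstein-type) concentration directly to the weighted sum $\sum_{j}\eta_j^2/j^2$, whose expectation is $(1+\log k)\sigma^2$, and that is what produces the stated $\sqrt{\log(1/\beta)+\log(\e n)}\sqrt{\log(\e n)\log(1/\delta)}/(\e n)$ form. (Relatedly, for the expectation bound you should argue via $\E[\Gamma]\le\sqrt{\E[\Gamma^2]}$ on the weighted sum, as the paper does, rather than through the pointwise condition.) Both issues are minor and fixable; the mechanism and the expectation analysis match the paper.
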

 
Theorem \ref{thm:synth_data} is proven in \Cref{{sec:synth_data}}. The returned  distribution $q$ is represented as a discrete distribution on $O(\e n)$ points in $[-1,1]$, so can be sampled from efficiently to produce a synthetic dataset of arbitrary size.
Typically, $\delta$ is chosen to be $1/\poly(n)$, in which case \Cref{thm:synth_data} essentially matches\footnote{Our result is for \textit{approximate} $(\epsilon,\delta)$-DP instead of exact $(\epsilon,0)$-DP. However, we obtain a very good $\sqrt{\log(1/\delta)}$ dependence on the approximation parameter $\delta$. Thus, we can set $\delta = 1/\text{poly}(n)$ and match the accuracy of \cite{boedihardjo2022private} up to constant factors. In our experience, approximate DP results where $\delta$ can be chosen to be a vanishingly small polynomial in $n$ are considered alongside exact DP results.} a recent result of Boedihardjo, Strohmer, and Vershynin \citep{boedihardjo2022private}, who give an $(\e,0)$-differentially private method with expected Wasserstein-$1$ error $O({\log^{3/2}(n)}/{(\e n)})$, which is optimal up to logarithmic factors.\footnote{An $\Omega(1/(\epsilon n))$ lower bound on the expected Wasserstein error holds via standard ``packing lower bounds'' which imply that even the easier problem of privately reporting the mean of a dataset supported on $[-1,1]$ requires error $\Omega(1/(\epsilon n))$. See e.g., \citep{gautam}, Theorem 3.} 
Like that method, we improve on a natural barrier of $\tilde{O}(1/(\e\sqrt{n}))$ error that is inherent to naive ``private histogram'' methods for approximation in the Wasserstein-$1$ distance \citep{xiao2010differential,qardaji2013understanding,xu2013differentially,DworkRoth:2014,zhang2016privtree,LiLyuSu:2017}. ``Private \textit{hierarchical} histogram'' methods can also be shown to match the Wasserstein-$1$ error of $\tilde{O}({1}/{(\e n)})$, albeit with worse polylog factors in $n$  \citep{hay2009boosting, ghazi2023differentially, FeldmanMcMillanSivakumar:2024}.

The result of \cite{boedihardjo2022private} introduces a ``superregular random walk'' that directly adds noise to $x_1,\ldots,x_n$ using a correlated distribution based on a Haar basis. Our method is simpler, more computationally efficient, and falls into the empirically popular \textit{Select, Measure, Project} framework for differentially private synthetic data generation \cite{vietri2022private, liu2021iterative, McKennaMullinsSheldon:2022}. In particular, as detailed in \Cref{alg:dp_algorithm}, we compute the Chebyshev moments of $p$, add independent noise to each moment using the standard Gaussian mechanism \cite{dwork2006our,McSherryMironov:2009}, and then recover $q$ matching these noisy moments. We verify the strong empirical performance of the method in \Cref{sec:experiments}.
A similar method  was analyzed in \cite{wang2016differentially}, although that work obtains a much weaker Wasserstein error bound of $\tilde{O}(1/(\e n^{1/3}))$. \Cref{thm:master_thm}'s tighter connection between Chebyshev moment estimation and distribution approximation allows us to obtain a significantly better dependence on $n$.



We note that \cite{HeVershyninZhu:2023} also claims a faster and simpler alternative to \cite{boedihardjo2022private}. While their simplest method achieves error $\tilde{O}(1/\sqrt{n})$, they describe a more complex method that matches our $\tilde{O}(1/n)$ result up to a $\log(n)$ factor. While we are not aware of an implementation of that algorithm, empirically comparing alternative synthetic data generators with Wasserstein distance guarantees would be a productive line of future work.  Additionally, we note that \citet{FeldmanMcMillanSivakumar:2024} recently study a stronger notion of \emph{instance optimal} private distribution estimation in the Wasserstein distance. It would be interesting to explore if moment matching has  applications in this setting. 

\paragraph{Application 2: Matrix Spectral Density Estimation.} Spectral density estimation (SDE) is a central problem in numerical linear algebra. In the standard version of the problem, we are given a symmetric $n\times n$ matrix $A$ with eigenvalues $\lambda_1 \geq \ldots \geq \lambda_n \in \R$. The goal is to output a distribution $q$ that is close in Wasserstein distance to the uniform distribution over these eigenvalues, $p$. 
An approximate spectral density can be useful in determining properties of $A$'s spectrum -- e.g., if its eigenvalues are decaying rapidly or if they follow a distribution characteristic of random matrices. 
Efficient SDE algorithms were originally studied in computational physics and chemistry, where they are used to compute the ``density of states'' of quantum systems \cite{Skilling:1989,SilverRoder:1994,MoldovanAndelkovicPeeters:2020}. More recently, the problem has found applications in network science \cite{DongBensonBindel:2019,Cohen-SteinerKongSohler:2018,JinKarmarkarMusco:2024}, deep learning \cite{CunKanterSolla:1991,PenningtonSchoenholzGanguli:2018,MahoneyMartin:2019,YaoGholamiKeutzer:2020}, optimization \cite{GhorbaniKrishnanXiao:2019}, and beyond \cite{LiXiErlandson:2019, ChenTrogdonUbaru:2022}. 

Many popular SDE algorithms are based on Chebyshev  moment matching \cite{WeisseWelleinAlvermann:2006,BhattacharjeeJayaramMusco:2025,Chen:2024}. The $i^\text{th}$ Chebyshev moment of the spectral density is equal to $\E_{x\sim p}T_i(x) = \frac{1}{n} \sum_{j=1}^n  T_i(\lambda_j) = \tr(\frac{1}{n}T_i(A))$. Stochastic trace estimation methods such as Hutchinson's method can estimate this trace using a small number of matrix-vector products with $T_i(A)$ \cite{Hutchinson:1990,MeyerMuscoMusco:2021}. Since $T_i$ is a degree-$i$ polynomial, each matrix-vector product with $T_i(A)$ requires just $i$ products with $A$. Thus, with a small number of products with $A$, we can obtain approximate moments for use in estimating $p$. Importantly, this approach can be applied even in the common \emph{implicit} setting, where we do not have direct access to the entries of $A$, but can efficiently multiply the matrix by vectors \cite{AvronToledo:2011}. 

Recently, Braverman, Krishnan and Musco \cite{BravermanKrishnanMusco:2022} gave a theoretical analysis of Chebyshev moment-matching for SDE, along with the related Kernel Polynomial Method \cite{WeisseWelleinAlvermann:2006}. They show that, when $n$ is sufficiently large, specifically, $n = \tilde \Omega(1/\e^2)$, then $\tilde O(1/\e)$ matrix-vector products with $A$ (and $\poly(1/\e)$ additional runtime) suffice to output $q$ with $W_1(p,q) \le \e \norm{A}_2$, where $\norm{A}_2 = \max_i|\lambda_i|$ is $A$'s spectral norm. 

While the result of \cite{BravermanKrishnanMusco:2022} also holds for smaller values of $n$, it suffers from a polynomially worse $1/\e$ dependence in the number of matrix-vector products required. 
By leveraging \Cref{thm:master_thm}, we resolve this issue, showing that $\tilde O(1/\e)$ matrix-vector products suffice for \emph{any} $n$. 
Roughly, by weakening the requirements on how well we approximate $A$'s spectral moments, \Cref{thm:master_thm} allows us to decrease the accuracy with which moments are estimated, and thus reduce the number of matrix-vector products used by Hutchinson's estimator. Formally, in \Cref{sec:sde}, we prove:  
%
\begin{restatable}{theorem}{corsde}\label{cor:sde}
There is an algorithm that, given $\e \in (0,1)$, symmetric $A\in \R^{n\times n}$ with spectral density $p$, and upper bound\footnote{The power method can compute $S$ satisfying $\|A\|_2 \leq S \leq 2\|A\|_2$ using $O(\log n)$ matrix-vector products with $A$ and $O(n)$ additional runtime \cite{KuczynskiWozniakowski:1992}. In some settings, an upper bound on $\|A\|_2$ may be known apriori \cite{JinKarmarkarMusco:2024}.} $S \geq \|A\|_2$,  uses $\tilde{O}\left(\frac{1}{\e}\right)$ matrix-vector products\footnote{Formally, we prove a bound of $\min\left\{n, O\left(\frac{1}{\e}\right) \cdot \left (1+\frac{\log^2(1/\e) \log^2(1/(\e \delta))}{n\e} \right )\right\}$ matrix-vector products to succeed with probability $1-\delta$. For constant $\delta$, this is at worst $O\left({\log^4(1/\e)}/{\e}\right)$, but actually $O({1}/{\e})$ for all $\e = \Omega(\log^4 n/n)$.} with $A$ and $\tilde{O}(n/\epsilon + 1/\e^{3})$ additional running time to output a distribution $q$ such that, with high probability, $W_1(p,q) \leq \e S$.
\end{restatable}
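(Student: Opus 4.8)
The plan is to combine \Cref{corr:recovery} with a budgeted stochastic trace estimator, spending matrix-vector products \emph{unevenly} across Chebyshev moments so as to exploit the $1/j^2$ weighting in \Cref{thm:master_thm}. First I would rescale: since $S \ge \norm{A}_2$, the matrix $\tilde A = A/S$ has all eigenvalues in $[-1,1]$, so its spectral density $p$ is supported on $[-1,1]$; moreover $W_1$ scales linearly under scaling of the support, so it suffices to compute a $q$ with $W_1(p,q) \le \e$ for $\tilde A$ and then push $q$ forward under $x\mapsto Sx$. Write $m_j = \E_{x\sim p}[T_j(x)] = \tfrac1n\tr(T_j(\tilde A))$. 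By \Cref{corr:recovery}, if we produce estimates $\hat m_1,\dots,\hat m_k$ with $\sum_{j=1}^k \tfrac1{j^2}(m_j - \hat m_j)^2 \le \Gamma^2$, then \Cref{alg:weighted_regression} returns in $\poly(k)$ time a distribution $q$ with $W_1(p,q) = O(1/k + \Gamma)$. Taking $k = \Theta(1/\e)$ and $\Gamma = \Theta(\e)$ gives $W_1(p,q) \le \e$, and this $\poly(k) = \poly(1/\e)$ regression solve is exactly the source of the $\tilde O(1/\e^3)$ term in the running time.

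The substance is estimating $m_1,\dots,m_k$ with few matrix-vector products. Each $m_j$ is a trace, so I would use Hutchinson's estimator: for a random sign (or Gaussian) vector $g$, the three-term recurrence $u_0 = g$, $u_1 = \tilde A g$, $u_i = 2\tilde A u_{i-1} - u_{i-2}$ produces $u_i = T_i(\tilde A)g$ for every $i$ up to its terminal degree using one matrix-vector product with $A$ per step, and $\tfrac1n g^\top u_i$ is then an unbiased estimate of $m_i$. Because $|T_i(x)| \le 1$ on $[-1,1]$, every eigenvalue of $T_i(\tilde A)$ lies in $[-1,1]$, so $\fnorm{T_i(\tilde A)}^2 \le n$ and $\opnorm{T_i(\tilde A)} \le 1$; hence a single-sample estimate of $m_i$ has variance $O(1/n)$, and as a centered quadratic form in $g$ it is sub-exponential with scale parameters $O(1/\sqrt n)$ and $O(1/n)$. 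The key point is that, by the sufficient condition \eqref{eq:master_require}, moment $j$ need only be estimated to error $\Gamma\sqrt{j/(1+\log k)}$, which \emph{grows} with $j$; so I would let the number of independent vectors feeding the estimate of $m_j$ be $n_j = \Theta\!\big(\tfrac{(1+\log k)\log(k/\delta)}{n\Gamma^2 j}\big)$ (and at least $1$), a quantity \emph{decreasing} in $j$. This is realized by running the recurrence to high degree on only a few vectors and to low degree on many, so the total matrix-vector count is $\sum_{j=1}^k n_j$.

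To finish, I would verify correctness and count the cost. By Bernstein's inequality for an average of $n_j$ i.i.d.\ sub-exponential variables with the parameters above, $\Pr\!\big[\,|m_j - \hat m_j| > \Gamma\sqrt{j/(1+\log k)}\,\big] \le 2\exp\!\big(-\Omega\big(n\, n_j\, \Gamma^2 j/(1+\log k)\big)\big) \le \delta/k$ by our choice of $n_j$; a union bound over $j\in\{1,\dots,k\}$ then gives, with probability $\ge 1-\delta$, that \eqref{eq:master_require} holds, hence $\sum_j \tfrac1{j^2}(m_j-\hat m_j)^2 \le \Gamma^2$ and \Cref{corr:recovery} applies. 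The total matrix-vector count is $\sum_{j=1}^k n_j \le k + \Theta\!\big(\tfrac{(1+\log k)^2\log(k/\delta)}{n\Gamma^2}\big) = \tilde O(1/\e)\big(1 + \tfrac{\polylog(1/\e)\,\polylog(1/(\e\delta))}{n\e}\big)$, matching the stated bound; and if this quantity exceeds $n$ — which forces $n = \tilde O(1/\e)$ — we instead spend $n$ matrix-vector products to form $\tilde A$ exactly (multiply by $e_1,\dots,e_n$), compute its eigenvalues, and return $q = p$ exactly, at cost $O(n^3) = \tilde O(1/\e^3)$, within budget. The additional (non-matrix-vector) running time is $O(n)$ per matrix-vector product for forming the inner products $\tfrac1n g^\top u_i$, i.e.\ $\tilde O(n/\e)$, plus the $\tilde O(1/\e^3)$ regression solve, giving $\tilde O(n/\e + 1/\e^3)$ overall.

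The main obstacle is the analysis in the third paragraph: choosing the uneven allocation $n_j \propto 1/j$ so that the per-moment Bernstein bound at the target accuracy $\Gamma\sqrt{j/(1+\log k)}$ simultaneously (i) sums, through \eqref{eq:master_require}, to the required weighted error $\Gamma^2$, and (ii) keeps $\sum_j n_j$ at $\tilde O(1/\e)$ rather than $\tilde O(1/\e^2)$ — a uniform number of samples per moment, or controlling each $|m_j-\hat m_j|$ by a $j$-independent quantity, loses a full factor of $1/\e$. Everything else (the rescaling, invoking \Cref{corr:recovery}, and the variance/sub-exponential bounds for $T_i(\tilde A)$) is routine.
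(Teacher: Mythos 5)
Your matrix-vector product analysis is essentially the paper's: rescale by $S$, estimate $m_j=\tfrac1n\tr(T_j(A/S))$ with Hutchinson's estimator using a number of random vectors decreasing like $1/j$ (the paper packages this as \Cref{lem:bkm42} with $\ell_j = \lceil 1+\tfrac{C\log^2(1/\alpha)}{nj\gamma^2}\rceil$, $\gamma=\tfrac{1}{k\sqrt{1+\log k}}$, $\alpha=\delta/k$), union bound so that the pointwise condition \eqref{eq:master_require} holds, and then invoke \Cref{thm:master_thm}; your fallback to exact eigendecomposition when the budget exceeds $n$ also matches. The accuracy argument and the mat-vec count are fine (your single $\log(k/\delta)$ versus the paper's $\log^2$ is only a difference in the trace-estimator tail bound used).

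The genuine gap is the claimed $\tilde O(1/\e^3)$ \emph{additional running time}. You assert that ``this $\poly(k)=\poly(1/\e)$ regression solve is exactly the source of the $\tilde O(1/\e^3)$ term,'' but \Cref{corr:recovery} only guarantees $\poly(k)$ time for the linearly constrained least-squares problem in \Cref{alg:weighted_regression}, which has $g=\Theta(k^{1.5})$ variables; you give no argument that this quadratic program can be solved in $\tilde O(k^3)$ time, and generic interior-point bounds for it are substantially worse (even forming the $g\times g$ Hessian of the objective costs $\Omega(k^{3.5})$ arithmetic). The paper closes this exact hole by \emph{not} solving the QP: because the estimates satisfy the stronger per-moment bound \eqref{eq:master_require} (up to the grid discretization error $j\sqrt{2\pi}/g$ from \Cref{app:corr_proof}), one can instead solve the LP feasibility problem \eqref{eq:lp} over the Chebyshev-node grid, which has $\tilde O(k^{1.5})$ variables but only $O(k)$ nontrivial constraints, and is solvable in $\tilde O(gk\sqrt{k})=\tilde O(k^3)$ time via fast LP solvers \cite{LeeSidford:2014,ls15}; any feasible point is then shown to satisfy the hypotheses of \Cref{thm:master_thm}. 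Without this substitution (or some other argument for solving the weighted regression in $\tilde O(k^3)$), your proof only yields $\poly(1/\e)$ additional runtime, not the stated $\tilde O(n/\e+1/\e^3)$.
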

When $A$ is dense, \Cref{cor:sde} yields an  algorithm that runs in $\tilde O(n^2/\e + 1/\e^3)$ time, much faster than the $O(n^\omega)$ time required to compute $p$ directly via eigendecomposition.
In terms of matrix-vector products, the result cannot be improved by more than logarithmic factors. In particular, existing lower bounds for estimating the trace of a positive definite matrix \cite{MeyerMuscoMusco:2021,woodruff2022optimal} imply that $\Omega(1/\e)$ matrix-vector products with $A$ are necessary to approximate the spectral density $p$ up to error $\epsilon \|A\|_2$ (see \Cref{app:sde_lower_bound}). Thus, \Cref{cor:sde} essentially resolves the complexity of the SDE problem in the ``matrix-vector query model'' of computation, where cost is measured via matrix-vector products with $A$. This model has become central to theoretical work on numerical linear algebra, as it generalizes other important models like the matrix sketching and Krylov subspace models \cite{MeyerMuscoMusco:2021,SunWoodruffYangZhang:2021,woodruff2022optimal}. Our work contributes to recent progress on establishing tight upper and lower bounds for problems such as linear system solving \cite{BravermanHazanSimchowitz:2020}, eigenvector approximation \cite{MuscoMusco:2015,SimchowitzElAlaouiRecht:2018}, trace estimation \cite{JiangPhamWoodruffZhang:2024}, and more \cite{ChewiDeDiosPontLiLuNarayanan:2023,BakshiNarayanan:2023,AmselChenDumanKelesHalikiasMuscoMusco:2024,ChenDumanKelesHalikiasMuscoMusco:2024}. 

\paragraph{Application 3: Estimating Populations of Parameters.}
Our final application is to a classical statistical problem that has been studied since at least the 1960s \cite{Lord:1965,Lord:1969,Wood:1999}:
\begin{restatable}[Population of Parameters Estimation]{problem}{popprob}\label{prob:popprob}
        Let $p$ be an unknown distribution over $[0,1]$. Consider a set of $N$ independent coins, each with unknown bias $p_i$ drawn from the distribution $p$. For each coin $i$, we observe the outcome of $t$ independent coin tosses $X_i \sim \text{Binomial}(t,p_i)$. The goal is find a distribution $q$ that is close to $p$ in Wasserstein-$1$ distance. 
\end{restatable}
Problem \ref{prob:popprob} is motivated by settings (medicine, sports, etc.) where we want to estimate the distribution of a parameter over a large population of $N$ individuals, but we only have noisy measurements of that parameter through a potentially much smaller number of observations, $t$, per individual.
A simple approach is to compute empirical estimates for $p_1, \ldots, p_N$ based on $X_1, \ldots, X_N$ and to return the resulting distribution of biases. Doing so achieves error $O(1/\sqrt{t} + 1/\sqrt{N})$ in Wasserstein distance. Interestingly, Tian, Kong, and Valiant \cite{TianKongValiant:2017} show that in the ``small sample'' regime when $N$ is large compared to $t$, it is possible to do much better. In particular, when $t = O(\log N)$, they introduce a moment-matching method with error $O(1/t)$.

More recently, \citet{vinayak19a} analyze the maximum likelihood estimator (MLE) for $p$. The MLE, which we denote by $\hat{p}_{\text{mle}}$, has a relatively simple form and can be computed efficiently. They prove that it matches the error of \cite{TianKongValiant:2017} in the small sample regime. Moreover, in the \emph{medium sample regime}, where $t = O(N^{2/9-\e})$ for any $\e > 0$, the MLE achieves error $\bigo{1/\sqrt{t \log N}}$, which is still an improvement on the empirical estimator. Formally, they prove the following theorem:

\begin{restatable}[{\citep[Theorem 3.2]{vinayak19a}}]{theorem}{thmvinayak} \label{thm:vinayak}
For any fixed constant $\e > 0$ and $t \in \Brac{\Omega(\log N), \bigo{N^{2/9-\e}}}$, with probability $99/100$, 
    \begin{equation}\label{eq:vinayak_bound}
        W_1(p, \hat{p}_{\text{mle}}) \leq O\paren{\frac{1}{\sqrt{t \log N}}}.
    \end{equation}
\end{restatable}
We are able to tighten this result by directly applying our new global bound on the Chebyshev coefficients of Lipschitz functions (\Cref{claim:sum_c_j_sq}).
In particular, in \Cref{sec:population_of_param}, we show how to increase the range of $t$ in \Cref{thm:vinayak} to $t \in \Brac{\Omega(\log N), \bigo{N^{1/4-\e}}}$. Moreover, \citet{vinayak19a} propose a simple conjecture that would improve their bound to $t \in \Brac{\Omega(\log N), \bigo{N^{2/3-\e}}}$. Combining our improvement with their conjecture would allow for $t \in \Brac{\Omega(\log N), \bigo{N^{1-\e}}}$, which is essentially optimal, as even if $t = \infty$ (i.e., we have access to the true parameter $p_i$), one cannot achieve an error better than $\bigo{1/\sqrt{N}}$ in Wasserstein distance (see \Cref{sec:population_of_param} for more details).



\subsection{Extension to higher dimensions}
Finally, we note that we extend our main theorem (\Cref{thm:master_thm}), to arbitrary dimension $d > 1$ in \Cref{sec:multivariate_master}. Doing so requires two ingredients: 1) a high-dimensional generalization of our global Chebyshev coefficient decay bound, and 2) a constructive proof of Jackson's theorem in $d> 1$ dimensions, which shows that a \emph{damped} truncated multivariate Chebyshev series well-approximates any Lipschitz function.
As an application, we give an algorithm for differentially private synthetic data generation in $d>1$ dimensions in \Cref{sec:dp_high},
proving that we can obtain expected Wasserstein error $\tbigo{1/(\e n)^{1/d}}$, which matches prior work up to logarithmic factors \citep{boedihardjo2022private}.




\section{Preliminaries}\label{sec:prelim}
Before our main analysis, we introduce notation and technical preliminaries. 

\paragraph{Notation.} We let $\N$ denote the natural numbers and $\Np$ denote the positive integers. For a vector $x\in \R^k$, we let $\|x\|_2 = \sqrt{\sum_{i=1}^k x_i^2}$ denote the Euclidean norm.
We often work with functions from $[-1,1] \rightarrow \R$. For two such functions, $f,g$, we use the convenient inner product notation:
\begin{align*}
\inprod{f, g} \defeq \int_{-1}^{1} f(x) g(x) \, dx.
\end{align*}
We will often work with products, quotients, sums, and differences of two functions $f,g$, which are denoted by $f\cdot g$, $f/g$, $f+g$, and $f-g$, respectively. E.g., $[f\cdot g](x) = f(x)g(x)$. For a function $f: [-1,1] \to \R$, we let $\|f\|_{\infty}$ denote $\|f\|_{\infty} = \max_{x\in [-1,1]} |f(x)|$ and $\|f\|_1 = \int_{-1}^1 |f(x)|\,dx$.

\paragraph{Wasserstein Distance.} This paper concerns the approximation of probability distributions in the Wasserstein-$1$ distance, which is defined below. Note that we only consider distributions supported on $[-1,1]$, but the definition generalizes to any distribution on $\R$ or $\R^d$.
\begin{definition}[Wasserstein-$1$ Distance]\label{def:w1_transport}
 Let $p$ and $q$ be two distributions on $[-1,1]$. Let $Z(p,q)$ be the set of all couplings between $p$ and $q$, i.e., the set of distributions on $[-1,1] \times [-1,1]$ whose marginals equal $p$ and $q$. Then the Wasserstein-$1$ distance between $p$ and $q$ is: 
 \begin{align*}
 W_1(p,q) = \inf_{z\in Z(p,q)}\left[ \E_{(x,y) \sim z} |x - y|\right]. 
 \end{align*}
\end{definition}
The Wasserstein-$1$ distance measures the total cost (in terms of distance per unit mass) required to ``transport'' the distribution $p$ to $q$. Alternatively, it has a well-known dual formulation:
\begin{fact}[Kantorovich-Rubinstein Duality]\label{fact:w1_dual}
Let $p,q$ be as in \Cref{def:w1_transport}. Then $
W_1(p,q) =  \sup_{1\text{-Lipschitz } f} \langle f, p-q\rangle$,
where $f: \R \rightarrow \R$ is $1$-Lipschitz if $|f(x) - f(y)| \leq |x-y|$ for all $x,y \in \R$.
\end{fact}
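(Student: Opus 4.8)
The plan is to prove the two inequalities separately, exploiting the one-dimensional structure of distributions supported on $[-1,1]$. I would first handle the easy direction, $\sup_{1\text{-Lip }f}\langle f,p-q\rangle \le W_1(p,q)$, which needs no special structure: for any coupling $z \in Z(p,q)$ and any $1$-Lipschitz $f$, the pointwise inequality $f(x)-f(y)\le |x-y|$ gives
\[
\langle f,p-q\rangle = \E_{(x,y)\sim z}\big[f(x)-f(y)\big] \le \E_{(x,y)\sim z}|x-y|,
\]
and taking the infimum over $z$ and then the supremum over $f$ finishes this direction.

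For the reverse direction it suffices to exhibit one admissible $f^\star$ with $\langle f^\star,p-q\rangle \ge W_1(p,q)$. I would take $f^\star(x) = \int_{-1}^x s(t)\,dt$, where $s(t) = \operatorname{sign}(F_q(t)-F_p(t))$ and $F_p,F_q$ are the (right-continuous) CDFs of $p$ and $q$; since $|s|\le 1$, $f^\star$ is $1$-Lipschitz. Writing $\mu = p-q$ as a finite signed measure with $\mu([-1,1])=0$ and applying Fubini (splitting $\mu$ into its positive and negative parts, which is legitimate as $s$ is bounded), one gets
\[
\langle f^\star,p-q\rangle = \int_{-1}^1 s(t)\,\mu\big((t,1]\big)\,dt = \int_{-1}^1 s(t)\big(F_q(t)-F_p(t)\big)\,dt = \int_{-1}^1 \big|F_p(t)-F_q(t)\big|\,dt,
\]
using $\mu((t,1]) = \mu([-1,1]) - \mu([-1,t]) = -(F_p(t)-F_q(t))$ for a.e.\ $t$. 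To close the argument I then need $\int_{-1}^1|F_p-F_q| \ge W_1(p,q)$, which I would obtain by displaying a coupling of matching cost: the monotone (quantile) coupling $z^\star$ — the law of $\big(F_p^{-1}(U),F_q^{-1}(U)\big)$ with $U\sim\unif[0,1]$ — has transport cost $\int_0^1 |F_p^{-1}(u)-F_q^{-1}(u)|\,du = \int_{-1}^1|F_p(t)-F_q(t)|\,dt$ by the standard identity equating the area between two CDFs with the area between their quantile functions. Then $W_1(p,q)\le \E_{z^\star}|x-y| = \langle f^\star,p-q\rangle \le \sup_{1\text{-Lip }f}\langle f,p-q\rangle$, and combined with the easy direction this gives equality. (As a sanity check, the identity $|x-y| = \int_{-1}^1|\indicator{x\le t}-\indicator{y\le t}|\,dt$ together with Jensen shows that \emph{every} coupling has cost at least $\int_{-1}^1|F_p-F_q|$, so $z^\star$ and $f^\star$ are simultaneously optimal.)

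The main obstacle is not conceptual but measure-theoretic: when $p$ or $q$ has atoms one must define the generalized inverses $F_p^{-1},F_q^{-1}$ carefully, check that $z^\star$ is a genuine coupling, and justify both the CDF/quantile identity and the Fubini interchange against the signed measure $p-q$. A cleaner-looking but heavier alternative is to skip the explicit $f^\star$ and $z^\star$ entirely: regard $W_1(p,q)$ as an infinite-dimensional linear program over couplings (the infimum is attained since $[-1,1]$ is compact and the coupling set is weak-$*$ compact), pass to its dual, which maximizes $\int f\,dp + \int g\,dq$ subject to $f(x)+g(y)\le|x-y|$, invoke strong LP duality, and finish with a $c$-transform argument for the metric cost $c(x,y)=|x-y|$ showing the dual optimum may be taken with $g=-f$ and $f$ $1$-Lipschitz. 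I would present the explicit CDF argument as the main proof and only remark on the LP route.
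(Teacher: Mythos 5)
Your proposed argument is correct. Note, however, that the paper itself offers no proof of this statement: it is recorded as a classical \emph{Fact} (Kantorovich--Rubinstein duality) and used as a black box, so there is no internal argument to compare yours against. What you have written is a sound, self-contained proof of the one-dimensional case: the easy inequality $\sup_f \langle f,p-q\rangle \le W_1(p,q)$ via an arbitrary coupling, and the reverse via the explicit potential $f^\star(x)=\int_{-1}^x \operatorname{sign}\bigl(F_q(t)-F_p(t)\bigr)\,dt$ together with the quantile coupling, which closes the chain $W_1(p,q)\le \int_{-1}^1|F_p-F_q| = \langle f^\star,p-q\rangle \le \sup_f\langle f,p-q\rangle \le W_1(p,q)$ and in fact establishes the stronger, well-known identity $W_1(p,q)=\int|F_p-F_q|$ on the line. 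The Fubini interchange is unproblematic (the signed measure $p-q$ is finite and the integrand is bounded on a compact interval), the choice of sign convention for $\mu((t,1])$ is consistent, and since $p-q$ has total mass zero the additive normalization of $f^\star$ is irrelevant; extending $f^\star$ from $[-1,1]$ to all of $\R$ as a $1$-Lipschitz function is trivial, so the sup over $f:\R\to\R$ in the statement is matched. The measure-theoretic caveats you flag (generalized inverses at atoms, checking the quantile map is a genuine coupling) are exactly the right ones and are routine to discharge. Your alternative LP-duality route would also work but is substantially heavier than needed in one dimension; the CDF argument is the right choice here, and it has the side benefit of exhibiting a simultaneously optimal pair $(f^\star,z^\star)$.
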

Above we slightly abuse notation and use $p$ and $q$ to denote (generalized) probability density functions\footnote{$p$ and $q$ might correspond to discrete distributions, in which case they will be sums of Dirac delta functions.} instead of the distributions themselves. We will do so throughout the paper.

In our analysis, it will be convenient to work with functions that are smooth, i.e., that are infinitely differentiable. Since any Lipschitz function can be arbitrarily well approximated by a smooth function, we can do so when working with \Cref{fact:w1_dual}. In particular, for distributions on $[-1,1]$\footnote{Since $\|p - q\|_1 \leq 2$, if $\|f - \tilde{f}\|_{\infty} \leq \epsilon$ for some approximation $\tilde{f}$, then $\left |\langle f, p-q\rangle - \langle \tilde{f}, p-q\rangle\right| \leq 2\epsilon$. Since any Lipschitz function can be arbitrarily well-approximated by a smooth function in the $\ell_{\infty}$ norm, taking a $\sup$ over Lipschitz functions or smooth Lipschitz functions is therefore equivalent.} we have:
\begin{align}
\label{eq:our_w1}
    W_1(p,q) = \sup_{1\text{-Lipschitz, smooth } f} \langle f, p-q\rangle.
\end{align}

\paragraph{Chebyshev Polynomials and Chebyshev Series.}
Our main result analyzes the accuracy of (noisy) Chebyshev polynomial moment matching for distribution approximation. The Chebyshev polynomials are defined in \Cref{sec:cheb_moment_match}, and can alternatively be defined on $[-1,1]$ via the trigonometric definition, $T_j(\cos \theta) = \cos(j \theta)$. 
We use a few basic properties about these polynomials.

\begin{fact}[Boundedness and Orthogonality, see e.g. \citep{Hale2015}]\label{fact:cheb_prop}
The Chebyshev polynomials satisfy:
    \begin{enumerate}
        \item {\bf Boundedness:} $\forall x \in [-1,1]$ and $j \in \N$, $\abs{T_j(x)} \leq 1$.
        \item {\bf Orthogonality:} The Chebyshev polynomials are orthogonal with respect to the weight function $w(x) = \frac{1}{\sqrt{1 - x^2}}$. In particular, for $i, j \in \N$, $i\neq j$, $\inprod{T_i\cdot w, T_j} = 0$.
    \end{enumerate}
\end{fact}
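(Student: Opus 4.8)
The plan is to route everything through the trigonometric description $T_j(\cos\theta)=\cos(j\theta)$, after first checking that this description is consistent with the three–term recurrence that defines the $T_j$. First I would set $S_j(\theta)\defeq\cos(j\theta)$ and observe $S_0\equiv 1$ and $S_1(\theta)=\cos\theta$, which match $T_0$ and $T_1$ under the substitution $x=\cos\theta$. Then the product-to-sum identity $2\cos\theta\cos((j-1)\theta)=\cos(j\theta)+\cos((j-2)\theta)$ gives $S_j(\theta)=2\cos\theta\,S_{j-1}(\theta)-S_{j-2}(\theta)$, which is exactly the recurrence $T_j(x)=2xT_{j-1}(x)-T_{j-2}(x)$ evaluated at $x=\cos\theta$. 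Since every $x\in[-1,1]$ equals $\cos\theta$ for some $\theta\in[0,\pi]$, and two polynomials agreeing on an interval are equal, this yields $T_j(x)=\cos(j\arccos x)$ for all $x\in[-1,1]$ and all $j\in\N$.

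Given this identity, boundedness is immediate: $\abs{T_j(x)}=\abs{\cos(j\arccos x)}\le 1$ for every $x\in[-1,1]$ and every $j\in\N$.

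For orthogonality I would evaluate $\inprod{T_i\cdot w, T_j}=\int_{-1}^{1}T_i(x)T_j(x)\,\tfrac{dx}{\sqrt{1-x^2}}$ by the change of variables $x=\cos\theta$ with $\theta\in[0,\pi]$, so that $dx=-\sin\theta\,d\theta$ and $\sqrt{1-x^2}=\sin\theta\ge 0$ on this range. The integral collapses to $\int_{0}^{\pi}\cos(i\theta)\cos(j\theta)\,d\theta$, and applying product-to-sum once more turns this into $\tfrac12\int_0^\pi\big[\cos((i+j)\theta)+\cos((i-j)\theta)\big]\,d\theta$. Since $i\ne j$ with $i,j\in\N$, we have $i+j\ge 1$ and $i-j\ne 0$, so each term integrates to $\tfrac{\sin((i\pm j)\pi)}{i\pm j}=0$; hence $\inprod{T_i\cdot w, T_j}=0$.

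There is no substantive obstacle here — the statement is classical. The only points that need a small amount of care are (i) verifying that the trigonometric formula and the recurrence really define the same polynomials, so that $T_j(x)=\cos(j\arccos x)$ may legitimately be used on all of $[-1,1]$, and (ii) keeping the correct sign $\sqrt{1-x^2}=+\sin\theta$ on $[0,\pi]$ when performing the substitution in the orthogonality integral.
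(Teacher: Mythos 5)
Your proof is correct, and it follows the standard route that the paper itself implicitly relies on: the paper states this as a classical fact without proof (citing a reference), and it records the trigonometric characterization $T_j(\cos\theta)=\cos(j\theta)$ in its preliminaries, which is exactly the identity you verify from the recurrence and then use for both boundedness and the orthogonality integral. No gaps; the two points of care you flag (consistency of the trigonometric formula with the recurrence, and the sign $\sqrt{1-x^2}=\sin\theta$ on $[0,\pi]$) are handled correctly.
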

To obtain an orthonormal basis we also define the \emph{normalized} Chebyshev polynomials as follows:

\begin{definition}[Normalized Chebyshev Polynomials] \label{def:norm_cheb}
    The $j^\text{th}$ \emph{normalized} Chebyshev polynomial, $\T_j$, is  defined as
$\T_j \defeq T_j / \sqrt{\inprod{T_j \cdot w, T_j}}$. Note that $\inprod{T_j \cdot w , T_j}$ equals $\pi$ for $j = 0$ and $\pi/2$ for $j \geq 1$.
\end{definition}
We define the \emph{Chebyshev series} of a function $f:[-1,1] \to \R$  as $\sum_{j=0}^{\infty} \inprod{f\cdot w, \T_j} \T_j.$
If $f$ is Lipschitz continuous then the Chebyshev series of $f$ converges absolutely and uniformly to $f$ {\citep[Theorem 3.1]{ThreftenBook}}. Throughout this paper, we will also write the Chebyshev series of generalized probability density functions, which could involve Dirac delta functions. This is standard in Fourier analysis, even though the Chebyshev series does not converge pointwise \cite{lighthill1958introduction}. Formally, any density $p$ can be replaced with a Lipschitz continuous density (which has a convergent Chebyshev series) that is arbitrarily close in Wasserstein distance and the same analysis goes through. 



\section{Main Analysis}
\label{sec:gen}
In this section, we prove our main result, \Cref{thm:master_thm}, as well as \Cref{corr:recovery}. To do so, we require two main ingredients. The first is a constructive version of Jackson's theorem on polynomial approximation of Lipschitz functions \cite{Jackson:1930}. A modern proof can be found in \cite[Fact 3.2]{BravermanKrishnanMusco:2022}.
\begin{fact}[Jackson's Theorem \citep{Jackson:1930}] \label{fact:jackson}
  Let $f : [-1,1] \to \R$ be an $\ell$-Lipschitz function. Then, for any $k \in \Np$, there are $k+1$ constants $1 = b_k^0 > \ldots > b_k^k \geq 0$ such that the polynomial 
  $f_k = \sum_{j=0}^k b_k^j \cdot \inprod{ f \cdot w, \T_j }\cdot \T_j$ satisfies $\norm{f-f_k}_{\infty} \leq 18 \ell/k$.
\end{fact}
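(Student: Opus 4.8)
The plan is to establish the constructive Jackson theorem by the classical mollification argument, carried out on the circle, so that the damping factors $b_k^j$ appear automatically as the Fourier coefficients of a non-negative kernel. First I would substitute $x=\cos\theta$ and set $g(\theta):=f(\cos\theta)$, a $2\pi$-periodic even function. Since $|\cos\theta-\cos\phi|\le|\theta-\phi|$, the map $g$ is again $\ell$-Lipschitz (in $\theta$), and its Fourier series is a pure cosine series whose $j^{\text{th}}$ coefficient is a fixed scalar multiple of $\inprod{f\cdot w,\T_j}$; this is just $\cos(j\theta)=T_j(\cos\theta)$ plugged into the definition of the Chebyshev series. Under this dictionary, multiplying the $j^{\text{th}}$ Fourier coefficient of $g$ by a scalar $b_k^j$ and discarding the terms with $j>k$ corresponds exactly to replacing $f$ by a polynomial of the form $\sum_{j=0}^k b_k^j\inprod{f\cdot w,\T_j}\T_j$. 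So it suffices to construct an even, non-negative trigonometric polynomial $K_n$ of degree at most $k$ with $\tfrac{1}{2\pi}\int_{-\pi}^{\pi}K_n=1$ and small first absolute moment, and let $f_k$ be the algebraic polynomial whose Chebyshev coefficients match those of $g*K_n$, so that $b_k^j:=\widehat{K_n}(j)$.

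Concretely I would use the Jackson kernel $K_n(\theta)=\lambda_n\paren{\tfrac{\sin(n\theta/2)}{\sin(\theta/2)}}^4$, the fourth power of the Fej\'er kernel, with $\lambda_n$ of order $n^{-3}$ chosen so that $\tfrac{1}{2\pi}\int_{-\pi}^{\pi}K_n=1$. This is a non-negative even trigonometric polynomial of degree $2(n-1)$, so taking $n\approx k/2$ keeps the degree at most $k$; normalisation gives $b_k^0=1$, and non-negativity of $K_n$ forces $|b_k^j|\le b_k^0=1$ for all $j$. The approximation bound is then immediate from the Lipschitz property: since $K_n$ integrates to $1$,
\begin{align*}
\norm{g-g*K_n}_\infty
&=\sup_\theta\Abs{\tfrac{1}{2\pi}\int_{-\pi}^{\pi}\paren{g(\theta)-g(\theta-\phi)}K_n(\phi)\,d\phi}\\
&\le \ell\cdot\tfrac{1}{2\pi}\int_{-\pi}^{\pi}|\phi|\,K_n(\phi)\,d\phi ,
\end{align*}
and it remains to bound $\tfrac{1}{2\pi}\int_{-\pi}^{\pi}|\phi|K_n(\phi)\,d\phi=O(1/n)=O(1/k)$. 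I would do this by splitting the integral at $|\phi|=1/n$, using $|\sin(\phi/2)|\ge|\phi|/\pi$ on $[-\pi,\pi]$ with $|\sin(n\phi/2)|\le n|\phi|/2$ on the inner piece and $|\sin(n\phi/2)|\le 1$ on the outer piece, together with the order of $\lambda_n$.

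Two points need care, and I expect the first to be the real obstacle. The \emph{strict} monotonicity $1=b_k^0>b_k^1>\dots>b_k^k\ge 0$ does not follow merely from $K_n\ge 0$, since the Fourier coefficients of a non-negative kernel can oscillate; this is the one step that is genuinely kernel-specific. Here I would use that the coefficient sequence of $F_n$ is a symmetric unimodal triangle, so the coefficient sequence of $K_n=\lambda_n(nF_n)^2$ is a rescaling of the self-convolution of that triangle, which has a known piecewise-cubic closed form that is manifestly positive and strictly decreasing on $\{0,1,\dots,2(n-1)\}$; for $n$ of the ``wrong'' parity one pads with a trailing zero or perturbs the kernel slightly so the degree is exactly $k$. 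The second point is sharpening the $O(1/k)$ into the stated $18\ell/k$: this is bookkeeping over the exact value of $\lambda_n$, the precise choice of $n$ as a function of $k$, and the constants in the two pieces of the moment integral, and I would not expect any conceptual difficulty beyond that arithmetic.
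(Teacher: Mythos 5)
Your proposal is correct and is essentially the same argument the paper relies on: it cites the constructive proof in Braverman--Krishnan--Musco (Fact 3.2 there) and reproduces the identical scheme in its multivariate Appendix A --- pass to the circle via $x=\cos\theta$, convolve with the normalized Jackson kernel $\bigl(\sin(m\theta/2)/\sin(\theta/2)\bigr)^4$, read off the damping factors as the kernel's Fourier coefficients (the self-convolution of the Fej\'er triangle, which is nonnegative, at most $1$, and decreasing), and bound $\|f-f_k\|_\infty$ by $\ell$ times the kernel's first absolute moment. The only caveat is quantitative: the crude split at $|\phi|=1/n$ with the bounds $|\sin(n\phi/2)|\le n|\phi|/2$ and $|\sin(\phi/2)|\ge|\phi|/\pi$ yields a constant well above $18$, so hitting the stated $18\ell/k$ requires the sharper evaluation of the moment ratio (the $8.06/m$ bound of BKM's Theorem C.5, used with degree $2m-2\le k$), which is exactly the bookkeeping you flag rather than a conceptual gap.
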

It is well-known that truncating the Chebyshev series of an $\ell$-Lipschitz function $f$ to $k$ terms leads to error $O(\log k\cdot \frac{\ell}{k})$ in the $\ell_\infty$ distance \cite{ThreftenBook}. The above version of Jackson's theorem improves this bound by a $\log k$ factor by instead using a \emph{damped} truncated Chebyshev series: each term in the series is multiplied by a positive scaling factor between $0$ and $1$. We will not need to compute these factors explicitly, but $b_k^i$ has a simple closed form (see \cite[Equation 12]{BravermanKrishnanMusco:2022}).

To bound the  Wasserstein distance between distributions $p,q$, we need to upper bound $\langle f, p-q\rangle$ for every $1$-Lipschitz $f$.
The value of \Cref{fact:jackson} is that this inner product is closely approximated by $\langle f_k, p-q\rangle$. Since $f_k$ is a damped Chebyshev series, this inner product can be decomposed as a difference between $p$ and $q$'s Chebyshev moments. Details will be shown in the proof of \Cref{thm:master_thm}.

The second ingredient we require is a stronger bound on the decay of the Chebyshev coefficients, $\inprod{ f \cdot w, \T_j }$, which appear in \Cref{fact:jackson}. In particular, we prove the following result:
\begin{lemma}[Global Chebyshev Coefficient Decay]\label{claim:sum_c_j_sq}
    Let $f : [-1,1] \to \R$ be an $\ell$-Lipschitz, smooth function, and let $c_j \defeq \inprod{f \cdot w,\T_j}$ for $j \in \N$. Then, $\sum_{j=1}^{\infty} (j c_j)^2  \leq \frac{\pi}{2} \ell^2$.
\end{lemma}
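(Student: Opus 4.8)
The plan is to pass to the trigonometric parametrization and reduce the claim to a Parseval identity for the derivative of a periodic function. Substitute $x = \cos\theta$ with $\theta \in [0,\pi]$. The weight satisfies $w(x)\,dx = \frac{dx}{\sqrt{1-x^2}} = -d\theta$, and for $j \geq 1$ we have $\T_j(\cos\theta) = \sqrt{2/\pi}\,\cos(j\theta)$ (recall $\inprod{T_j\cdot w, T_j} = \pi/2$ for $j\geq 1$). Hence, writing $g(\theta) \defeq f(\cos\theta)$, the Chebyshev coefficient becomes $c_j = \sqrt{2/\pi}\int_0^\pi g(\theta)\cos(j\theta)\,d\theta$. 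This is, up to an explicit constant, the $j$-th Fourier cosine coefficient $a_j = \frac{2}{\pi}\int_0^\pi g(\theta)\cos(j\theta)\,d\theta$ of the even, $2\pi$-periodic function $g$; a one-line computation gives $c_j = \sqrt{\pi/2}\,a_j$, so $\sum_{j\geq 1}(jc_j)^2 = \frac{\pi}{2}\sum_{j\geq 1} j^2 a_j^2$.

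Next, since $f$ is smooth, $g$ is smooth, even, and $2\pi$-periodic, so $g'$ is smooth, odd, and $2\pi$-periodic, with Fourier sine series $g'(\theta) = -\sum_{j\geq 1} j a_j \sin(j\theta)$ (termwise differentiation being legitimate by smoothness). Parseval's identity applied to $g'$ then gives $\sum_{j\geq 1} j^2 a_j^2 = \frac{1}{\pi}\int_{-\pi}^{\pi} |g'(\theta)|^2\,d\theta = \frac{2}{\pi}\int_0^\pi |g'(\theta)|^2\,d\theta$, the last equality because $|g'|^2$ is even.

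Finally, invoke the Lipschitz hypothesis: by the chain rule $g'(\theta) = -\sin\theta \cdot f'(\cos\theta)$, and a smooth $\ell$-Lipschitz $f$ has $|f'| \leq \ell$ pointwise, so $|g'(\theta)| \leq \ell\,|\sin\theta|$. Therefore $\int_0^\pi |g'|^2 \leq \ell^2 \int_0^\pi \sin^2\theta\,d\theta = \frac{\pi}{2}\ell^2$, which yields $\sum_{j\geq 1} j^2 a_j^2 \leq \ell^2$ and hence $\sum_{j\geq 1}(jc_j)^2 \leq \frac{\pi}{2}\ell^2$, as claimed.

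Every step here is elementary; the only point requiring care is the justification for differentiating the Fourier series of $g$ termwise and applying Parseval, which is exactly where the smoothness assumption on $f$ (rather than mere Lipschitzness) is used, eliminating all convergence subtleties. I expect the main annoyance to be purely bookkeeping: tracking the normalization constants (the $\pi$ vs.\ $\pi/2$ factors coming from $\inprod{T_j\cdot w, T_j}$ and from the Fourier convention) so that the final constant is precisely $\frac{\pi}{2}$ and not off by a factor of $2$ or $\pi$.
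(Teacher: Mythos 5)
Your proof is correct, and it is essentially the paper's argument written in trigonometric coordinates: under $x=\cos\theta$ the identity $T_j'(x) = jU_{j-1}(x)$ together with the orthogonality of the second-kind Chebyshev polynomials against the weight $\sqrt{1-x^2}$ (which is what the paper uses to show $\sum_j j^2c_j^2 = \int_{-1}^1 f'(x)^2\sqrt{1-x^2}\,dx$) becomes exactly your Parseval identity for the sine series of $g'(\theta) = -\sin\theta\, f'(\cos\theta)$, and the final bound $|f'|\le \ell$ with $\int_0^\pi \sin^2\theta\,d\theta = \pi/2$ is identical. Your normalization bookkeeping ($c_j = \sqrt{\pi/2}\,a_j$, the factor $2/\pi$ in Parseval) checks out, so the constant $\frac{\pi}{2}\ell^2$ is recovered correctly.
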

\Cref{claim:sum_c_j_sq} implies the well known fact that $c_j = O(\ell/j)$ for $j \geq 1$ \cite{Trefethen:2008}. However, it is a much stronger bound: if all we knew was that the Chebyshev coefficients are bounded by  $O(\ell/j)$, then $\sum_{j=1}^{\infty} (j c_j)^2$ could be unbounded, whereas we give a bound of $O(\ell^2)$. Informally, the implication is that not all coefficients can saturate the ``local'' $O(\ell/j)$ constraint at the same time, but rather obey a stronger global constraint, captured by a weighted $\ell_2$ norm of the coefficients.

\subsection{Proof of Theorem \ref{thm:master_thm}}
We prove \Cref{claim:sum_c_j_sq} in \Cref{sec:decay_proof}. Before doing so, we show how it implies \Cref{thm:master_thm}.

\begin{proof}[Proof of \Cref{thm:master_thm}]
By \eqref{eq:our_w1}, to bound $W_1(p,q)$, it suffices to bound $\langle{f, p-q\rangle}$ for any $1$-Lipschitz, smooth $f$. Let $f_k$ be the approximation to any such $f$ guaranteed by \Cref{fact:jackson}. We have:
\begin{align}
\label{eq:ip_split}
    \inprod{f, p-q} = \inprod{f_k, p-q} + \inprod{f-f_k, p-q} &\leq \inprod{f_k, p-q} + \|f - f_k\|_\infty\|p-q\|_1 \nonumber\\ &\leq \inprod{f_k, p-q} + \frac{36}{k}.
\end{align}
In the last step, we use that $\|f - f_k\|_\infty\leq 18/k$ by \Cref{fact:jackson}, and that $\|p-q\|_1 \leq \|p\|_1 + \|q\|_1 = 2$.
So, to bound $\inprod{f, p-q}$ we  turn our attention to bounding $\inprod{f_k, p-q}$. 

For technical reasons, we will assume from here on that $p$ and $q$ are supported on the interval $[-1+\delta,1-\delta]$ for arbitrarily small $\delta \rightarrow 0$. This is to avoid an issue with the Chebyshev weight function $w(x) = {1}/{\sqrt{1-x^2}}$ going to infinity at $x =-1,1$. The assumption is without loss of generality, since we can rescale the support of $p$ and $q$ by a $(1-\delta)$ factor, and the distributions' moments and Wasserstein distance change by an arbitrarily small factor as $\delta \rightarrow 0$.

We proceed by writing the Chebyshev series of the function $(p-q)/w$:
\begin{align}
\label{eq:pq_cheby_series}
    \frac{p-q}{w} = \sum_{j=0}^\infty \inprod{\frac{p-q}{w} \cdot w, \T_j } \T_j = \sum_{j=0}^\infty \langle p-q,\T_j\rangle \cdot  \T_j = \sum_{j=1}^\infty \langle p-q,\T_j\rangle\cdot  \T_j. 
\end{align}
In the last step we use that both $p$ and $q$ are distributions so $\inprod{ p-q, \T_0} = 1/\pi-1/\pi = 0$.

Next, recall from \Cref{fact:jackson} that $f_k = \sum_{j=0}^k c_j' \T_j$, where each $c_j'$ satisfies $|c_j'| \leq |c_j|$ for $c_j \defeq \langle f \cdot w, \T_j\rangle$.
Using \eqref{eq:pq_cheby_series}, the fact that $\langle { \T_i\cdot w, \T_j}\rangle  = 0$ whenever $i\neq j$, and that $\langle { \T_j\cdot w, \T_j}\rangle = 1$
for all $j$, we have:
\begin{align*}
\inprod{ f_k, p-q} = \inprod{ f_k\cdot w, \frac{p-q}{w}} 
= \inprod{ \sum_{j=0}^k c_j' \T_j \cdot w, \sum_{j=1}^\infty \langle p-q,\T_j\rangle\T_j}
&= \sum_{j=1}^k c_j' \cdot \langle p-q,\T_j\rangle. 
\end{align*}
Via Cauchy-Schwarz inequality and our global decay bound from \Cref{claim:sum_c_j_sq}, we then have: 
\begin{align}
\label{eq:cauchy_schwarz}  
\inprod{ f_k, p-q}  = \sum_{j=1}^k j c_j' \cdot \frac{\langle p-q,\T_j\rangle}{j}
&\leq \left(\sum_{j=1}^k (jc_j')^2\right)^{1/2}\cdot \left( \sum_{j=1}^k \frac{1}{j^2} \langle p-q,\T_j\rangle^2\right)^{1/2} \nonumber\\
&\leq \left(\sum_{j=1}^k (j c_j)^2\right)^{1/2}\cdot \left( \sum_{j=1}^k \frac{1}{j^2} \langle p-q,\T_j\rangle^2\right)^{1/2} \nonumber\\
&\leq \sqrt{\pi/2}\left( \sum_{j=1}^k \frac{1}{j^2} \langle p-q,\T_j\rangle^2\right)^{1/2}.
\end{align}
Observing from \Cref{def:norm_cheb} that $\langle p-q,\T_j\rangle/\sqrt{\pi/2}$ is exactly the difference between the $j^\text{th}$ Chebyshev moments of $p$ and  $q$, we can apply the assumption of the theorem, \Cref{eq:master_require_general}, to upper bound \Cref{eq:cauchy_schwarz} by $\Gamma$. 

Plugging this bound into \Cref{eq:ip_split}, we conclude the main bound of \Cref{thm:master_thm}: 
\begin{align*}
W_1(p,q) = \sup_{1\text{-Lipschitz, smooth } f} \langle f, p-q\rangle \leq \Gamma + \frac{36}{k}.
\end{align*}
We note that the constants in the above bound can likely be improved. Notably, the 36 comes from multiplying the factor of 18 in \Cref{fact:jackson} by 2. As discussed in \cite[Appendix C.2]{BravermanKrishnanMusco:2022}, strong numerical evidence suggests that this 18 can be improved to $\pi$, leading to a bound of $\Gamma + \frac{2\pi}{k}$.

Finally, we comment on the special case in \Cref{eq:master_require}. If $\left | \E_{x\sim p}T_j(x) - \E_{x\sim q}T_j(x)\right|  = |\langle p-q,\T_j\rangle| / \sqrt{\pi/2} \leq \Gamma \cdot \sqrt{\frac{{j}}{{1 + \log k}}}$ for all $j$ then we have that
$
\sum_{j=1}^k \frac{1}{j^2} \langle p-q,T_j\rangle^2 \leq \frac{\Gamma^2}{{1 + \log k}} \sum_{j=1}^k \frac{1}{j} \leq \Gamma^2. 
$
\end{proof}

\subsection{Efficient recovery}
\label{sec:eff_recover}
The primary value of \Cref{thm:master_thm} for our applications is that, given sufficiently accurate estimates,  $\hat{m}_1, \ldots, \hat{m}_k$, of $p$'s Chebyshev moments, we can recover a distribution $q$ that is close in Wasserstein-$1$ distance to $p$, even if there is no distribution whose moments exactly equal $\hat{m}_1, \ldots, \hat{m}_k$. 

This claim is formalized in \Cref{corr:recovery}, whose proof is straightforward. We outline the main idea here.
Recall the condition of the corollary, that $\sum_{j=1}^k \frac{1}{j^2}\left(\hat{m}_j - \langle p, \T_j\rangle \right)^2 \leq \Gamma^2$. Now, suppose we could solve the optimization problem:
\begin{align*}
q^* = \argmin_{\text{distributions $q$ on } [-1,1]} \,\,\sum_{j=1}^k \frac{1}{j^2}\left(\hat{m}_j - \langle q, \T_j\rangle \right)^2.
\end{align*}
Then by triangle inequality we would have:
\begin{align}
\label{eq:triangle_inequal}
\left(\sum_{j=1}^k \frac{1}{j^2}\left(\langle p, \T_j\rangle - \langle q^*, \T_j\rangle \right)^2\right)^{1/2} &\leq \left(\sum_{j=1}^k \frac{1}{j^2}\left(\hat{m}_j - \langle q^*, \T_j\rangle \right)^2\right)^{1/2} + \left(\sum_{j=1}^k \frac{1}{j^2}\left(\hat{m}_j - \langle p, \T_j\rangle \right)^2\right)^{1/2} \nonumber \\ &\leq 2\left(\sum_{j=1}^k \frac{1}{j^2}\left(\hat{m}_j - \langle p, \T_j\rangle \right)^2\right)^{1/2} \leq 2 \Gamma.
\end{align}
It then follows immediately from \Cref{thm:master_thm} that $W_1(p,q^*) \leq O\left(\frac{1}{k} + \Gamma\right)$, as desired. 

The only catch with the argument above is that we cannot efficiently optimize over the entire set of distributions on $[-1,1]$. Instead, we have to optimize over a sufficiently fine discretization. Specifically, we consider discrete distributions on a finite grid, choosing the Chebyshev nodes (of the first kind) instead of a uniform grid because doing so yields a better approximation, and thus allows for a coarser grid. Concretely, \Cref{corr:recovery} is proven by analyzing \Cref{alg:weighted_regression}. 
The full analysis is given in \Cref{app:corr_proof}.

\begin{algorithm}[tb]\caption{Chebyshev Moment Regression} \label{alg:weighted_regression}
\begin{algorithmic}[1]
		\Require Estimates $\hat{m}_1, \ldots, \hat{m}_k$ for the first $k$ Chebyshev polynomial moments of a distribution $p$. 
		\Ensure A probability distribution ${q}$ approximating $p$.
		\State For $g = \lceil k^{1.5} \rceil$, let $\cC = \set{x_1,\dots,x_{g}}$ be the degree $g$ Chebyshev nodes. I.e., $x_i = \cos\paren{\frac{2i-1}{2g} \pi}$.
        \State Let ${q}_1,\dots,{q}_{g}$ solve the following optimization problem: 
        \begin{equation*}
            \begin{array}{ll@{}ll}
            \min_{ z_1,\dots, z_{g}} &  \displaystyle\sum_{j=1}^k \frac{1}{j^2} \paren{\hat m_j - \sum_{i=1}^g {z}_i T_j(x_i)}^2 \\
            \text{subject to} &\displaystyle \sum_{i=1}^{g}  z_i = 1 \text{ and } z_i \geq 0, \,\, \forall i \in \set{1,\ldots,g}.
            \end{array}
        \end{equation*}
        \State Return ${q} = \sum_{i=1}^{m} {q}_i \delta(x-x_i)$, where $\delta$ is the Dirac delta function.
\end{algorithmic}
\end{algorithm}

We note that the optimization problem solved by \Cref{alg:weighted_regression} is a simple linearly constrained quadratic program with $g = O(k^{1.5})$ variables and $O(k^{1.5})$ constraints, so can be solved to high accuracy in $\poly(k)$ time using a variety of methods \cite{YeTse:1989,KapoorVaidya:1986,AndersenRoosTerlaky:2003}. In practice, the problem can also be solved efficiently using first-order methods like projected gradient descent \cite{WrightRecht:2022}.

\subsection{Proof of Lemma \ref{claim:sum_c_j_sq}}\label{sec:decay_proof}

We conclude this section by proving \Cref{claim:sum_c_j_sq}, our global decay bound on the Chebyshev coefficients of a smooth, Lipschitz function, which was key in the proof of \Cref{thm:master_thm}. To do so we will leverage an expression for the derivatives of the Chebyshev polynomials of the first kind in terms of the Chebyshev polynomials \emph{of the second kind}, which can be defined by the recurrence
\begin{align*}
U_0(x) &= 1 & U_1(x) &= 2x &  U_i(x) &= 2xU_{i-1}(x) - U_{i-2}(x), \text{  for $i \geq 2$}. 
\end{align*}

We have the following standard facts (see e.g., \cite{Rivlin69}).
\begin{fact}[Chebyshev Polynomial Derivatives] \label{fact:tj_uj}
Let $T_j$ be the $j^{th}$ Chebyshev polynomial of the first kind, and $U_j$ be the $j^{th}$ Chebyshev polynomial of the second kind. Then, for $j \geq 1$, $T_j'(x) = j U_{j-1}(x)$.
\end{fact}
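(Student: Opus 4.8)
The plan is to derive the identity from the trigonometric descriptions of the two Chebyshev families. Recall that on $[-1,1]$ we have $T_j(\cos\theta) = \cos(j\theta)$. The first step is to establish the analogous closed form for the second-kind polynomials, namely $U_j(\cos\theta) = \sin((j+1)\theta)/\sin\theta$ for every $\theta\in(0,\pi)$. This follows by a two-line induction on $j$ from the defining recurrence $U_i = 2xU_{i-1} - U_{i-2}$: the base cases $U_0(\cos\theta)=1=\sin\theta/\sin\theta$ and $U_1(\cos\theta)=2\cos\theta = \sin(2\theta)/\sin\theta$ are immediate, and the inductive step is exactly the product-to-sum identity $2\cos\theta\,\sin(i\theta) = \sin((i+1)\theta)+\sin((i-1)\theta)$.

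With both closed forms in hand, the identity drops out of the chain rule. Fixing $\theta\in(0,\pi)$ and differentiating $T_j(\cos\theta)=\cos(j\theta)$ with respect to $\theta$ gives $-\sin\theta\cdot T_j'(\cos\theta) = -j\sin(j\theta)$; since $\sin\theta\neq 0$ on $(0,\pi)$, dividing yields $T_j'(\cos\theta) = j\,\sin(j\theta)/\sin\theta = j\,U_{j-1}(\cos\theta)$. As $\theta$ ranges over $(0,\pi)$, the argument $\cos\theta$ sweeps out all of $(-1,1)$, so the polynomials $T_j'$ and $j\,U_{j-1}$ (both of degree $j-1$) agree at infinitely many points and are therefore equal as polynomials, which extends the identity to all $x$, including the endpoints $x=\pm 1$.

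There is no deep obstacle here --- this is a classical fact --- so the only points demanding a bit of care are (i) not conflating pointwise equality on $[-1,1]$ with a polynomial identity without invoking the infinitely-many-points argument, and (ii) handling $x=\pm 1$, where $\sin\theta=0$, only via this extension rather than directly in the $\theta$-computation. As a backup that avoids trigonometry entirely, one can run a direct induction on $j$ using the first-kind recurrence: differentiating $T_j = 2xT_{j-1}-T_{j-2}$ gives $T_j' = 2T_{j-1} + 2xT_{j-1}' - T_{j-2}'$, and substituting the inductive hypotheses $T_{j-1}'=(j-1)U_{j-2}$ and $T_{j-2}'=(j-2)U_{j-3}$ together with the standard identity $2T_{j-1}=U_{j-1}-U_{j-3}$ (valid for $j\ge 3$) and the second-kind recurrence $U_{j-1}=2xU_{j-2}-U_{j-3}$ collapses the right-hand side to $jU_{j-1}$; the cases $j=1,2$ are checked by hand ($T_1'=1=U_0$, $T_2'=4x=2U_1$). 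I would present the trigonometric version as the main argument for concision.
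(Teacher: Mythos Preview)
Your argument is correct in both variants. The paper does not actually prove this statement: it is listed as a standard fact with a citation to Rivlin's textbook, so there is no proof in the paper to compare against. Your trigonometric derivation via $T_j(\cos\theta)=\cos(j\theta)$ and $U_{j-1}(\cos\theta)=\sin(j\theta)/\sin\theta$ is precisely the classical argument one finds in such references, and the care you take with the polynomial-identity extension to the endpoints is appropriate. The backup induction is also sound. In short, you have supplied a complete proof where the paper only gives a pointer.
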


\begin{fact}[Orthogonality of Chebyshev polynomials of the second kind]\label{fact:2orth}
The Chebyshev polynomials of the second kind are orthogonal with respect to the weight function $u(x) = \sqrt{1-x^2}$. In particular,
    \[ \int_{-1}^{1} U_i(x) U_j(x) u(x)  \, dx = 
    \begin{cases}
       0, & \text{ for } i \neq j \\
        \frac{\pi}{2} , & \text{ for } i = j \mper
    \end{cases} \]
\end{fact}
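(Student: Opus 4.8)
\textbf{Plan.} The cleanest route is the trigonometric substitution $x = \cos\theta$, which turns the weighted integral into an elementary Fourier integral. Recall the standard identity $U_n(\cos\theta) = \frac{\sin((n+1)\theta)}{\sin\theta}$ for $\theta \in (0,\pi)$; this follows immediately from the defining recurrence $U_i = 2xU_{i-1} - U_{i-2}$ together with the product-to-sum formula $2\cos\theta\sin((n)\theta) = \sin((n+1)\theta) + \sin((n-1)\theta)$ and the base cases $U_0(\cos\theta)=1 = \frac{\sin\theta}{\sin\theta}$, $U_1(\cos\theta) = 2\cos\theta = \frac{\sin 2\theta}{\sin\theta}$. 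I would either cite this identity as standard (it appears alongside \Cref{fact:tj_uj} in \cite{Rivlin69}) or include the two-line induction.

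\textbf{Main computation.} Substituting $x = \cos\theta$, so $dx = -\sin\theta\, d\theta$ and $u(x) = \sqrt{1-x^2} = \sin\theta$ on $(0,\pi)$, the integral becomes
\begin{align*}
\int_{-1}^1 U_i(x)U_j(x)\sqrt{1-x^2}\,dx
&= \int_0^\pi \frac{\sin((i+1)\theta)}{\sin\theta}\cdot\frac{\sin((j+1)\theta)}{\sin\theta}\cdot \sin^2\theta \, d\theta \\
&= \int_0^\pi \sin((i+1)\theta)\sin((j+1)\theta)\, d\theta.
\end{align*}
Now apply the product-to-sum identity $\sin A \sin B = \tfrac12\left(\cos(A-B) - \cos(A+B)\right)$ with $A = (i+1)\theta$, $B = (j+1)\theta$, giving $\tfrac12\int_0^\pi \left(\cos((i-j)\theta) - \cos((i+j+2)\theta)\right) d\theta$. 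For $i \neq j$ both $i-j$ and $i+j+2$ are nonzero integers, so each cosine integrates to $0$ over $[0,\pi]$ (since $\int_0^\pi \cos(m\theta)\,d\theta = \frac{\sin(m\pi)}{m} = 0$ for nonzero integer $m$), yielding $0$. For $i = j$, the first term is $\cos(0) = 1$ whose integral over $[0,\pi]$ is $\pi$, while $i+j+2 = 2i+2 \neq 0$ so the second term still integrates to $0$; the total is $\tfrac12\cdot\pi = \frac{\pi}{2}$. This matches the claimed values.

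\textbf{Obstacles.} There is essentially no hard step here — the only points requiring minor care are (i) justifying the $U_n(\cos\theta) = \sin((n+1)\theta)/\sin\theta$ identity (handled by the short induction above, or by citation), and (ii) noting the substitution is valid despite $\sin\theta$ vanishing at the endpoints $\theta = 0,\pi$ — this is harmless since $U_n(\cos\theta)\sin\theta = \sin((n+1)\theta)$ extends continuously to $[0,\pi]$, so the integrand of the final trigonometric integral has no singularity and the change of variables is legitimate. I would present this as a short self-contained derivation rather than citing it as a black box, since it is elementary and makes the paper more self-contained.
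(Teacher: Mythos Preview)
Your proof is correct and is the standard textbook derivation. Note, however, that the paper does not actually prove this fact: it is stated as a \emph{Fact} and attributed to a standard reference (Rivlin), with no argument given. So there is no ``paper's own proof'' to compare against --- your self-contained trigonometric computation is exactly what one would supply if the paper had chosen to include a proof rather than cite one.
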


With the above facts we can now prove \Cref{claim:sum_c_j_sq}.

\begin{proof}[Proof of \Cref{claim:sum_c_j_sq}]
Let $f$ be a smooth, $\ell$-Lipschitz function, with Chebyshev expansion $f(x) = \sum_{j=0}^{\infty} c_j \T_j = \frac{1}{\sqrt{\pi}} c_0 T_0 + \sum_{j=1}^{\infty} \sqrt{\frac{2}{\pi}} c_j T_j$. Using \Cref{fact:tj_uj}, we can write $f$'s derivative as:
\begin{align*}
    f'(x) = \sum_{j=1}^{\infty} \sqrt{\frac{2}{\pi}} c_j T_j'(x) = \sqrt{\frac{2}{\pi}} \sum_{j=1}^{\infty} j c_j U_{j-1}(x) \mper
\end{align*}
By the orthogonality property of \Cref{fact:2orth}, we then have that 
\begin{align*}
    \int_{-1}^{1} f'(x) f'(x) u(x)  \, dx = \frac{2}{\pi} \sum_{j=1}^\infty j^2 c_j^2 \frac{\pi}{2} = \sum_{j=1}^\infty j^2 c_j^2.
\end{align*}
Further, using that $f$ is $\ell$-Lipschitz and so $|f'(x)| \le \ell$, and that the weight function $u(x) = \sqrt{1-x^2}$ is non-negative, we can upper bound this sum by
\begin{align*}
    \sum_{j=1}^\infty j^2 c_j^2 = \int_{-1}^{1} f'(x) f'(x) u(x)  \, dx \leq \ell^2 \int_{-1}^{1} u(x)  \, dx = \frac{\pi \ell^2}{2}.
\end{align*}
This completes the proof of the lemma. We remark that this bound cannot be improved, as it holds with equality for the function $f(x) = x$.
\end{proof}

\section{Private Synthetic Data Generation}
\label{sec:synth_data}
In this section, we present an application of our main result to differentially private synthetic data generation. We recall the setting from \Cref{sec:intro_applications}: we are given a dataset $X = \{x_1, \ldots, x_n\}$, where each $x_i \in [-1,1]$, and consider the distribution $p$ that is uniform on $X$. The goal is to design an $(\e,\delta)$-differentially private algorithm that returns a distribution $q$ that is close to $p$ in Wasserstein distance.
For the purpose of defining differential privacy (see Def. \ref{def:dp}), we consider the ``bounded'' notation of neighboring datasets, which applies to datasets of the same size \cite{kifer2011no}. Concretely, $X = \{x_1, \ldots, x_n\}$ and $X' = \{x_1', \ldots, x_n'\}$ are \emph{neighboring} if $x_i \neq x_i'$ for \emph{exactly one} value of $i$.\footnote{Although a bit tedious, our results can be extended to the ``unbounded'' notation of neighboring datasets, where $X$ and $X'$ might differ in size by one, i.e., because $X'$ is created by adding or removing a single data point from $X$.}

To solve this problem, we will compute the first $n$ Chebyshev moments of $p$, then add noise to those moments using the standard \emph{Gaussian mechanism}. Doing so ensures that the noised moments are $(\e, \delta)$-differentially private. We then post-process the noised moments (which does not impact privacy) by finding a distribution $q$ that matches the moments. The analysis of our approach follows directly from \Cref{thm:master_thm}, although we use a slightly different method for recovering $q$ than suggested in our general \Cref{alg:weighted_regression}: in the differential privacy setting, we are able to obtain a moderately faster algorithm that solves a regression problem involving $O(n)$ variables instead of $O(n^{1.5})$.

Before analyzing this approach, we introduce preliminaries necessary to apply the Gaussian mechanism. In particular, applying the mechanism  requires bounding the \emph{$\ell_2$ sensitivity} of the function mapping a distribution $p$ to its Chebyshev moments. This sensitivity is defined as follows:

\begin{definition}[$\ell_2$ Sensitivity]\label{def:sens}
Let $\mathcal{X}$ be some data domain (in our setting, $\mathcal{X} = [-1,1]^n$) and let $f: \mathcal{X} \to \mathbb{R}^k$ be a vector valued function. The $\ell_2$-sensitivity of $f$, $\Delta_{2,f}$,  is defined as:
\begin{align*}
    \Delta_{2,f} \defeq \underset{X,X'\in \mathcal{X}}{\max_{\text{neighboring datasets}}} \| f(X) - f(X') \|_2.
\end{align*}
\end{definition}
The Gaussian mechanism provides a way of privately evaluating any function $f$ with bounded $\ell_2$ sensitivity by adding a random Gaussian vector with appropriate variance. Let $\mathcal{N}({0}, \sigma^2 {I}_k)$ denote a vector of $k$ i.i.d. mean zero Gaussians with variance $\sigma^2$.  We have the following well-known result:

\begin{fact}[Gaussian Mechanism \cite{dwork2006our,DworkRoth:2014}]\label{def:gaussian_mechanism}
Let $f : \mathcal{X} \to \mathbb{R}^k$ be a function with $\ell_2$-sensitivity $\Delta_{2,f}$ and let $\sigma^2 = \Delta_{2,f}^2 \cdot 2\ln(1.25 / \delta) / \e^2$, where $\e,\delta \in (0,1)$ are privacy parameters. Then the mechanism $\mathcal{M} = f(X) + \eta$, where $\eta \sim \mathcal{N}({0}, \sigma^2 {I}_k)$ is $(\e, \delta)$-differentially private.
\end{fact}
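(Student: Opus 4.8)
The final statement is the classical Gaussian mechanism, given here as a cited fact; the proof I would give is the now-standard privacy-loss argument (cf.\ \cite{DworkRoth:2014}). First I would reduce to a one-dimensional computation: fix neighboring datasets $X, X'$ and set $v \defeq f(X) - f(X')$, so $\norm{v}_2 \leq \Delta_{2,f}$ by \Cref{def:sens}. Because the density of $\mathcal{N}(\zero, \sigma^2 I_k)$ is rotationally invariant, an orthogonal change of coordinates lets me take $v$ parallel to the first axis, after which the privacy loss depends only on that coordinate; since we only need an upper bound, it suffices to treat the extremal case $\norm{v}_2 = \Delta_{2,f} =: \Delta$, i.e.\ the univariate Gaussian mechanism with variance $\sigma^2$ and sensitivity $\Delta$.

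Next I would compute the law of the privacy loss. For an output $y$, let $\phi$ be the $\mathcal{N}(\zero,\sigma^2 I_k)$ density and $L(y) \defeq \ln\paren{\phi(y - f(X)) / \phi(y - f(X'))}$. Expanding the Gaussian exponents gives $L(y) = \tfrac{1}{2\sigma^2}\paren{\norm{y - f(X')}_2^2 - \norm{y - f(X)}_2^2} = \tfrac{1}{2\sigma^2}\paren{2\inprod{\eta, v} + \norm{v}_2^2}$, where $\eta \defeq y - f(X)$. When $y$ is drawn from $\mathcal{M}(X)$ we have $\eta \sim \mathcal{N}(\zero,\sigma^2 I_k)$, hence $\inprod{\eta,v} \sim \mathcal{N}(0, \sigma^2\norm{v}_2^2)$, so $L \sim \mathcal{N}(\mu, 2\mu)$ with $\mu \defeq \Delta^2/(2\sigma^2)$.

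The heart of the argument is a tail bound on $L$. Substituting $\sigma^2 = \Delta^2 \cdot 2\ln(1.25/\delta)/\e^2$ gives $\mu = \e^2/(4\ln(1.25/\delta))$, and since $L$ is Gaussian, $\Pr{L > \e} = \Pr{\mathcal{N}(0,1) > (\e - \mu)/\sqrt{2\mu}}$. Applying the estimate $\Pr{\mathcal{N}(0,1) > t} \leq \tfrac{1}{t\sqrt{2\pi}} e^{-t^2/2}$ and simplifying---using $\e < 1$ to control the lower-order terms---yields $\Pr{L > \e} \leq \delta$. I expect this to be the main obstacle: it is the one genuinely numerical step, and it is precisely what pins down the constant $1.25$ in the statement; everything else is bookkeeping.

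Finally I would convert this tail bound into the $(\e,\delta)$-DP guarantee of \Cref{def:dp}. Let $S \defeq \set{y : L(y) > \e}$, so $\Pr{\mathcal{M}(X) \in S} \leq \delta$. For any measurable output set $\mathcal{B}$, write $\Pr{\mathcal{M}(X) \in \mathcal{B}} = \Pr{\mathcal{M}(X) \in \mathcal{B} \cap S} + \Pr{\mathcal{M}(X) \in \mathcal{B} \setminus S} \leq \delta + \Pr{\mathcal{M}(X) \in \mathcal{B} \setminus S}$; since $\phi(y - f(X)) \leq e^{\e}\phi(y - f(X'))$ for every $y \notin S$, the second term is at most $e^{\e}\Pr{\mathcal{M}(X') \in \mathcal{B}}$. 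Adding the two pieces gives $\Pr{\mathcal{M}(X)\in\mathcal{B}} \leq e^{\e}\Pr{\mathcal{M}(X')\in\mathcal{B}} + \delta$, and swapping $X$ and $X'$ gives the reverse inequality, completing the proof.
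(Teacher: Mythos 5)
The paper does not prove this statement at all---it is imported as a known fact with a citation to Dwork--Roth---so there is no internal proof to compare against; your reconstruction is the standard argument from that reference (privacy-loss random variable $L \sim \mathcal{N}(\mu, 2\mu)$ with $\mu = \Delta^2/(2\sigma^2)$, Gaussian tail bound, then the split over $S = \{y : L(y) > \epsilon\}$ and its complement) and it is correct, including the monotonicity reduction to $\|v\|_2 = \Delta$. The one step you defer, verifying that the tail threshold $\sqrt{2\ln(1.25/\delta)} - \epsilon/\bigl(2\sqrt{2\ln(1.25/\delta)}\bigr)$ pushes the Gaussian tail below $\delta$ for $\epsilon,\delta \in (0,1)$, is exactly the page of numerics in the cited source that pins down the constant $1.25$, so the sketch is faithful to the standard proof.
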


\begin{algorithm}[tb]\caption{Private Chebyshev Moment Matching} \label{alg:dp_algorithm}
\begin{algorithmic}[1]
		\Require Dataset $x_1, \ldots, x_n \in [-1,1]$, privacy parameters $\epsilon, \delta > 0$. 
		\Ensure A probability distribution $q$ approximating the uniform distribution, $p$, on $x_1, \ldots, x_n$. 
        \State Let $\mathcal{G} = \{-1,-1 + \frac{1}{\lceil \e n\rceil},-1 + \frac{2}{\lceil \e n\rceil}, \ldots, 1\}$. Let $r \defeq |\mathcal{G}| = 2\ceil{\e n}+1$ and let ${g}_i = -1 + \frac{i-1}{\ceil{\e n}}$ denote the $i^\text{th}$ element of $\mathcal{G}$.
		\State For $i = 1, \ldots, n$, let $\tilde{x}_i = \argmin_{y\in \mathcal{G}} |x_i - y|$. I.e., round $x_i$ to the nearest multiple of $1/\lceil \e n\rceil$.
        \State Set $\sigma^2 = \frac{\frac{16}{\pi}(1 + \log {k})\ln(1.25 / \delta)}{\e^2 n^2}$.
        \State Set $k = \ceil{2\e n}$.\footnotemark\, For $j=1, \ldots, k$, let $\hat{m}_j = \eta_j +  \frac{1}{n}\sum_{i=1}^{n} \T_j(\tilde{x}_i)$, where $\eta_j \sim \mathcal{N}(0,j\sigma^2)$. 
        \State Let ${q}_0,\dots,{q}_{r}$ be the solution to the following optimization problem: 
        \begin{equation*}
            \begin{array}{ll@{}ll}
            \min_{ z_1,\dots, z_{r}} &  \displaystyle\sum_{j=1}^{k} \frac{1}{j^2} \paren{\hat m_j - \sum_{i=1}^{r} {z}_i T_j(g_i)}^2 \\
            \text{subject to} &\displaystyle \sum_{i=1}^{r}  z_i = 1 \text{ and } z_i \geq 0, \,\, \forall i \in \set{1,\ldots,r}.
            \end{array}
        \end{equation*}
        \State Return ${q} = \sum_{i=1}^{r} {q}_i \delta(x-g_i)$, where $\delta$ is the Dirac delta function.
\end{algorithmic}
\end{algorithm}

We are now ready to prove the main result of this section, \Cref{thm:synth_data}, which follows by analyzing \Cref{alg:dp_algorithm}. Note that \Cref{alg:dp_algorithm} is very similar to \Cref{alg:weighted_regression}, but we first round our distribution to be supported on a uniform grid, $\mathcal{G}$. Doing so will allow us to solve our moment regression problem over the same grid, which is smaller than the set of Chebyshev nodes used in \Cref{alg:weighted_regression}.
\begin{proof}[Proof of \Cref{thm:synth_data}]
    We analyze both the privacy and accuracy of \Cref{alg:dp_algorithm}. 
    \paragraph{Privacy.} For a dataset $X = \{x_1, \ldots, x_n\} \in [-1,1]^n$, let $f(X)$ be a vector-valued function mapping to the first $k = \ceil{2 \e n}$ ({as set in \Cref{alg:dp_algorithm}}) \emph{scaled} Chebyshev moments of the uniform distribution over $X$. I.e.,
    \begin{align*}
        f(X) = \begin{bmatrix}
        1\cdot \frac{1}{n}\sum_{i=1}^n \T_1(x_i) \\
        \frac{1}{\sqrt{2}}\cdot \frac{1}{n}\sum_{i=1}^n \T_2(x_i)\\
        \vdots\\
        \frac{1}{\sqrt{k}}\cdot \frac{1}{n}\sum_{i=1}^n \T_{k}(x_i)
        \end{bmatrix}
    \end{align*}
    By \Cref{fact:cheb_prop}, $\max_{x_i \in [-1,1]} |\T_j(x_i)| \leq \sqrt{2/\pi}$ for $j \in \Np$, so we have: 
    \begin{align}
    \Delta_{2,f}^2 = \underset{X,X'\in \mathcal{X}}{\max_{\text{neighboring datasets}}} \| f(X) - f(X') \|_2^2 \leq \sum_{j=1}^{k} \frac{1}{jn^2}\cdot \frac{8}{\pi} \leq \frac{8}{\pi n^2}(1 + \log k).\label{eq:sensitivity}
    \end{align}
    For two neighboring datasets $X,X'$, let $\tilde X$ and $\tilde X'$ be the rounded datasets computed in line 2 of \Cref{alg:dp_algorithm} -- i.e., $\tilde{X} = \{\tilde{x}_1, \ldots, \tilde{x}_{n}\}$. Observe that $\tilde X$ and $\tilde X'$ are also neighboring. Thus, it follows from \Cref{def:gaussian_mechanism} and the sensitivity bound of \cref{eq:sensitivity} that $\tilde{m} = f(\tilde{X}) + \eta$ is $(\e,\delta)$-differentially private for $\eta \sim \mathcal{N}(0,\sigma^2 I_k)$ as long as $\sigma^2 = \frac{16}{\pi}(1 + \log k)\ln(1.25 / \delta) / (n^2\e^2)$. Finally, observe that $\hat{m}_j$ computed by \Cref{alg:dp_algorithm} is exactly equal to $\sqrt{j}$ times the $j^\text{th}$ entry of such an $\tilde{m}$. So $\hat{m}_1, \ldots, \hat{m}_{k}$ are $(\e,\delta)$-differentially private. Since the remainder of \Cref{alg:dp_algorithm} simply post-processes $\hat{m}_1, \ldots, \hat{m}_{k}$ without returning to the original data $X$, the output of the algorithm is also $(\e,\delta)$-differentially private, as desired.

    \footnotetext{While we choose $k = \ceil{2\e n}$ by  default, any choice of $k = \ceil{c\e n}$ for constant $c$ suffices to obtain the bound of \Cref{thm:synth_data}. Similarly, the grid spacing in $\mathcal{G}$ can made finer or coarse by a multiplicative constant. A larger $k$ or a finer grid will lead to a slightly more accurate result at the cost of a slower algorithm. We chose defaults so that any error introduced from the grid and choice of $k$ is swamped by error incurred  from the noise added in Line 4. I.e., the error cannot be improved by more than a factor of two with difference choices. See the proof of \Cref{thm:synth_data} for more details.}

    \paragraph{Accuracy.}
    \Cref{alg:dp_algorithm} begins by rounding the dataset $X$ so that every data point is a multiple of $1/\ceil{\e n}$. Let $\tilde{p}$ be the uniform distribution over the rounded dataset $\tilde{X}$. Using the transportation definition of the Wasserstein-$1$ distance, we obtain the bound:
    \begin{align}
    \label{eq:rounding_error}
    W_1(p,\tilde{p}) \leq \frac{1}{2\ceil{\e n}}. 
    \end{align}
    In particular, we can transport $p$ to $\tilde{p}$ by moving every unit of $1/n$ probability mass a distance of at most $1/2\ceil{\e n}$. Given \eqref{eq:rounding_error}, it will suffice to show that \Cref{alg:dp_algorithm} returns a distribution $q$ that is close in Wasserstein distance to $\tilde{p}$. We will then apply triangle inequality to bound $W_1(p,q)$.
    
    To show that \Cref{alg:dp_algorithm} returns a distribution $q$ that is close to $\tilde{p}$ in Wasserstein distance, we begin by bounding the moment estimation error: 
    \begin{align*}
        E \defeq \sum_{j=1}^{k} \frac{1}{j^2} \left(\hat{m}_j(p) - \langle \tilde{p}, T_j \rangle\right)^2, 
    \end{align*}
    where $k$ is as chosen in \Cref{alg:dp_algorithm} and $\langle \tilde{p}, T_j \rangle = \frac{1}{n}\sum_{i=1}^{n} T_j(\tilde{x}_i)$. Let $\sigma^2$ and $\eta_1, \ldots, \eta_{k}$ be as in \Cref{alg:dp_algorithm}. Applying linearity of expectation, we have that:
    \begin{align}
    \label{eq:basic_exp_bound}
    \E[E] = \E\left[\sum_{j=1}^{k} \frac{1}{j^2} \eta_j^2\right] = \sum_{j=1}^{k} \frac{1}{j^2} \E\left[\eta_j^2\right] = \sum_{j=1}^{k} \frac{1}{j^2}\cdot j\sigma^2 \leq (1+\log k)\sigma^2.
    \end{align}
    
    Now, let $q$ be as in \Cref{alg:dp_algorithm}. Using a triangle inequality argument as in \Cref{sec:eff_recover}, we have:
    \begin{align*}
        \Gamma^2  = \sum_{j=1}^{k} \frac{1}{j^2} \left(\langle {q}, T_j \rangle - \langle \tilde{p}, T_j \rangle\right)^2 \leq \sum_{j=1}^{k} \frac{1}{j^2} \left(\langle {q}, T_j \rangle - \hat{m}_j\right)^2 + \sum_{j=1}^{k} \frac{1}{j^2} \left(\langle \tilde{p}, T_j \rangle - \hat{m}_j\right)^2 \leq 2E.
    \end{align*}
    Above we use that $\tilde{p}$ is a feasible solution to the optimization problem solved in \Cref{alg:dp_algorithm} and, since $q$ is the optimum,  $\sum_{j=1}^{k} \frac{1}{j^2} \left(\langle {q}, T_j \rangle - \hat{m}_j\right)^2 \leq \sum_{j=1}^{k} \frac{1}{j^2} \left(\langle \tilde{p}, T_j \rangle - \hat{m}_j\right)^2$. 
    It follows that $\E[\Gamma^2] \leq 2\E[E]$, and, via Jensen's inequality, that $\E[\Gamma] \leq \sqrt{2\E[E]}$. Plugging into \Cref{thm:master_thm}, we have for constant $c$: 
    \begin{align}
    \label{eq:final_triangle}
    \E[W_1(\tilde{p},q)] \leq \E[\Gamma] + \frac{c}{k} & \leq \sqrt{2 (1+\log {k}) \sigma^2} + \frac{c}{k} = \bigo{\frac{\log (\e n) \sqrt{\log({1/\delta})}}{\e n}} \mper
    \end{align}
    By triangle inequality and \eqref{eq:rounding_error}, $W_1(p,q) \leq W_1(\tilde{p},q) + W_1(\tilde{p},p) \leq W_1(\tilde{p},q) + \frac{1}{2\ceil{\e n}}$. Combined with the bound above, this proves the accuracy claim of the theorem. 

    Recall from \Cref{sec:gen} that the constant $c$ in \Cref{thm:master_thm} is bounded by $36$, but can likely be replaced by $2\pi$, in which case it can be checked that the $\frac{c}{k}$ term in \eqref{eq:final_triangle} will be dominated by the $\sqrt{2 (1+\log {k}) \sigma^2}$ term for our default of $k = \ceil{2\e n}$ in \Cref{alg:dp_algorithm}. However, any choice $k = \Theta(\epsilon n)$ suffices to prove the theorem. We also remark that our bound on the expected value of $W_1(\tilde{p},q)$ can also be shown to hold with high probability. See \Cref{app:concentration} for details.

    We conclude by noting that, as in our analysis of \Cref{alg:weighted_regression} (see \Cref{sec:eff_recover}), \Cref{alg:dp_algorithm} requires solving a linearly constrained quadratic program with $r = 2\ceil{\e n} + 1$ variables and ${r}+1$  constraints, which can be done to high accuracy in $\poly(\e n)$ time. 
    \end{proof}

\subsection{Empirical Evaluation for Private Synthetic Data}\label{sec:experiments}

In this section, we empirically evaluate the application of our main result to differentially private synthetic data generation, as presented in \Cref{sec:synth_data}. Specifically, we implement the procedure given in \Cref{alg:dp_algorithm}, which produces an $(\e, \delta)$-differentially private distribution $q$ that approximates the uniform distribution, $p$, over a given dataset $X = x_1, \ldots, x_n \in [-1,1]$. We solve the linearly constrained least squares problem from \Cref{alg:dp_algorithm} using an interior-point method from MOSEK \cite{cvxpy,mosek,AndersenRoosTerlaky:2003}. 
We evaluate the error $W_1(p,q)$ achieved by the procedure on
both real world data and data generated from known probability density functions (PDFs), with a focus on how the error scales with the number of data points, $n$.

\begin{figure}[!hb]
\centering
\includegraphics[width=\textwidth]{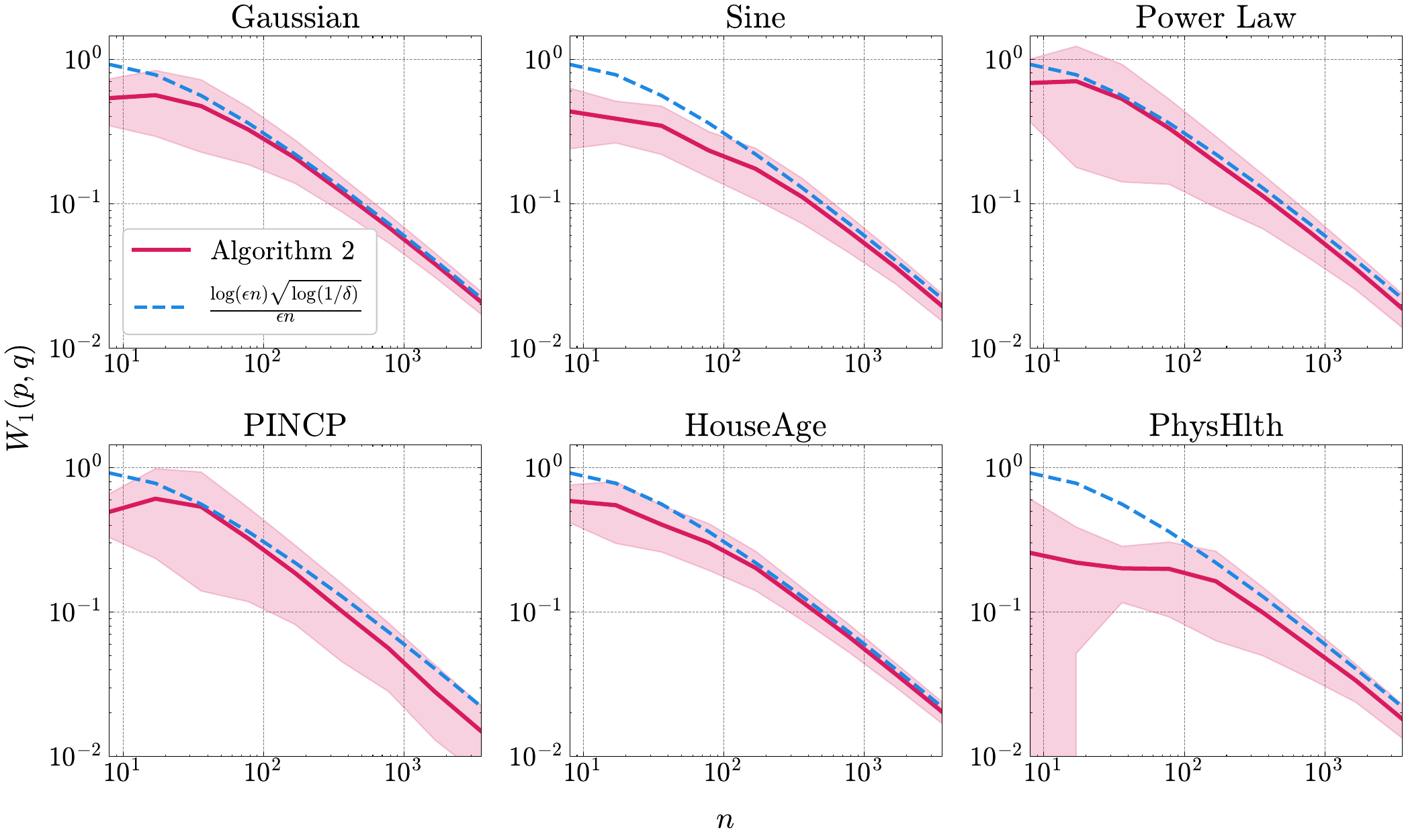}
\vspace{-1.5em}
\caption{
Experimental validation of \Cref{alg:dp_algorithm} for private synthetic data. For each dataset, we collect subsamples of size $n$ for different $n$. We plot the $W_1$ distance between the uniform distribution, $p$, over the subsample and a differentially private approximation, $q$, constructed by \Cref{alg:dp_algorithm} with privacy parameters $\epsilon = 0.5$ and $\delta = 1/n^2$. As predicted by \Cref{thm:synth_data}, the Wasserstein-$1$ error scales as $\tilde{O}(1/n)$. The solid red line shows the mean of $W_1(p,q)$ over 10 trials, while the shaded region plots one standard deviation around the mean (the empirical variance across trials). The blue dotted line plots the theoretical bound of \Cref{thm:synth_data}, without any leading constant.
}
\label{fig:experiments}
\end{figure}

For real world data, we first consider the American Community Survey (ACS) data from the Folktables repository \cite{ding2021retiring}. We use the 2018 ACS 1-Year data for the state of New York; we give results for the \texttt{PINCP} (personal income) column from this data. We also consider the California Housing dataset \cite{pace1997sparse}; we give results for the \texttt{HouseAge} (median house age in district) column, from this data. Finally, we consider the CDC Diabetes Health Indicators dataset \cite{kaggleDiabetesHealth, uciMachineLearning}; we give results for the \texttt{PhysHlth} (number of physically unhealthy days) from this data. For each of these data sets, we collect uniform subsamples of size $n$ for varying values of $n$.

In addition to the real world data, we generate datasets of varying size from three fixed probability distributions over $[-1,1]$. We set the probability mass for $x\in [-1,1]$ proportional to a chosen function $f(x)$, and equal to $0$ for $x\notin [-1,1]$. We consider the following choices for $f$: \texttt{Gaussian}, $f(x) = e^{-0.5 x^2}$; \texttt{Sine}, $f(x) = \sin(\pi x) + 1$; and \texttt{Power Law}, $f(x) = (x + 1.1)^{-2}$.

For all datasets, we run \Cref{alg:dp_algorithm} with privacy parameters $\e=0.5$ and $\delta=1/n^2$; this is a standard setting for private synthetic data \cite{McKennaMullinsSheldon:2022,rosenblatt2023epistemic}. We use the default choice of $k = \ceil{2\e n}$. In \Cref{fig:experiments}, we plot the average Wasserstein error achieved across $10$ trials of the method as a function of $n$. Error varies across trials due to the randomness in \Cref{alg:dp_algorithm} (given its use of the Gaussian mechanism) and due to the random choice of a subsample of size $n$. 

As we can see, our experimental results strongly confirm our theoretical guarantees: the average $W_1$ error closely tracks our theoretical accuracy bound of $O\left(\log(\epsilon n)\sqrt{\log(1/\delta)}/\epsilon n\right)$ from \Cref{thm:synth_data}, which is shown as a blue dotted line in \Cref{fig:experiments}.

\section{Spectral Density Estimation}\label{sec:sde}
In this section, we present a second application of our main result to the linear algebraic problem of Spectral Density Estimation (SDE). We recall the setting from \Cref{sec:intro_applications}: letting $p$ be the uniform distribution over the eigenvalues given $\lambda_1 \geq \dots \geq\lambda_n$ of a symmetric matrix $A \in \R^{n \times n}$, the goal is to find some distribution $q$ that satisfies
\begin{equation}
\label{eq:sde_guar_sec}
W_1(p,q) \leq \e \|A\|_2.
\end{equation}
In many settings of interest, $A$ is implicit and can only be accessed via matrix-vector multiplications. So, we want to understand 1) how many matrix-vector multiplications with $A$ are required to achieve \eqref{eq:sde_guar_sec}, and 2) how efficiently can we achieve \eqref{eq:sde_guar_sec} in terms of standard computational complexity.

We show how to obtain improved answers to these questions by using our main result, \Cref{thm:master_thm}, to give a tighter analysis of an approach from \cite{BravermanKrishnanMusco:2022}. Like other SDE methods, that approach uses \emph{stochastic trace estimation} to estimate the Chebyshev moments of $p$. In particular, let $m_1, \ldots, m_k$ denote the first $k$ Chebyshev moments. I.e., $m_j = \frac{1}{n}\sum_{i=1}^n T_j(\lambda_i)$. Then we have for each $j$,
\begin{align*}
m_j = \frac{1}{n}\sum_{i=1}^n T_j(\lambda_i) = \frac{1}{n}\tr(T_j(A)), 
\end{align*}
where $\tr$ is the matrix trace. Stochastic trace estimation methods like Hutchinson's method can approximate $\tr(T_j(A))$ efficiently via multiplication of $T_j(A)$ with random vectors \cite{Girard:1987,Hutchinson:1990}. In particular, for any vector $g\in \R^{n}$ with mean 0, variance 1 entries, we have that:
\begin{align*}
\E[g^T T_j(A)g] = \tr(T_j(A)).
\end{align*}
$T_j(A)g$,  and thus $g^T T_j(A)g$, can be computed using $j$ matrix-vector products with $A$. In fact, by using the Chebyshev polynomial recurrence, we can compute $g^T T_j(A)g$ for all $j = 1, \ldots, k$ using $k$ total matrix-vector products:
\begin{align*}
T_0(A) g &= g & T_1(A) g &= Ag & &\ldots & T_j(A) g &= 2 A T_{j-1}(A) g - T_{j-2}(A)g.
\end{align*}
Optimized methods can actually get away with $\lceil k/2 \rceil$ matrix-vector products \cite{Chen:2023}. 
Using a standard analysis of Hutchinson's trace estimator (see, e.g., \cite{Roosta-KhorasaniAscher:2015} or \cite{CortinovisKressner:2022}) \citet{BravermanKrishnanMusco:2022} prove the following:
\begin{lemma}[{\citep[Lemma 4.2]{BravermanKrishnanMusco:2022}}] \label{lem:bkm42}
    Let $A$ be a matrix with $\|A\|_2 \leq 1$. Let $C$ be a fixed constant, $j\in \Np$, $\alpha, \gamma \in (0,1)$, and $\ell_j = \lceil 1 +  \frac{C\log^2(1/\alpha)}{n j \gamma^2}\rceil$.
    Let $g_1,\dots,g_{\ell_j} \sim \mathrm{Uniform}(\set{-1,1}^n)$ and let $\hat m_j = \frac{1}{\ell_j n} \sum_{i=1}^{\ell_j} g_i^\top T_j(A) g_i$. Then, with probability $1-\alpha$, 
    $
    \abs{\hat m_j-m_j} \leq \sqrt{j} \gamma.
    $
\end{lemma}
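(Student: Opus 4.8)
The plan is to view $\hat m_j$ as (a $1/n$–scaling of) the Hutchinson trace estimator applied to the matrix $M \defeq T_j(A)$, and then feed the two relevant norms of $M$ into a standard concentration bound for that estimator. First I would note that, since $T_j$ is a polynomial, $M=T_j(A)$ is symmetric with eigenvalues $T_j(\lambda_1),\dots,T_j(\lambda_n)$, so $\tr(M)=\sum_{i=1}^n T_j(\lambda_i)=n\,m_j$. As recalled in the text preceding the lemma, $\E\brac{g^\top M g}=\tr(M)$ for a uniform sign vector $g$, hence $\E\brac{\hat m_j}=m_j$, and it remains to prove concentration of $\hat m_j$ about its mean.

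Second, I would bound the two quantities governing the estimator's fluctuations. Since $\|A\|_2\le 1$, every eigenvalue $\lambda_i$ lies in $[-1,1]$, so by the boundedness of the Chebyshev polynomials (\Cref{fact:cheb_prop}) $\Abs{T_j(\lambda_i)}\le 1$ for all $i$. Consequently $\|M\|_2=\max_i\Abs{T_j(\lambda_i)}\le 1$ and $\|M\|_F^2=\sum_{i=1}^n T_j(\lambda_i)^2\le n$. The key observation is that neither bound depends on $j$: the factor $\sqrt j$ in the conclusion comes entirely from the relaxed accuracy target $\sqrt j\,\gamma$ we are allowed on the right-hand side, which is exactly what lets us take only $\ell_j=O\paren{1/(nj\gamma^2)}$ samples rather than $O\paren{1/(n\gamma^2)}$.

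Third, I would invoke a standard concentration bound for Hutchinson's estimator with sign vectors (equivalently, the Hanson--Wright inequality; see \cite{Roosta-KhorasaniAscher:2015,CortinovisKressner:2022}): for $\ell$ i.i.d. uniform sign vectors $g_1,\dots,g_\ell$ there is an absolute constant $c_1$ such that, with probability at least $1-\alpha$,
\begin{align*}
\Abs{\tfrac1\ell\sum_{i=1}^\ell g_i^\top M g_i-\tr(M)}\ \le\ c_1\paren{\frac{\|M\|_F\sqrt{\log(1/\alpha)}}{\sqrt\ell}+\frac{\|M\|_2\log(1/\alpha)}{\ell}}.
\end{align*}
Dividing by $n$ and substituting $\|M\|_F\le\sqrt n$, $\|M\|_2\le 1$, and $\ell=\ell_j\ge\frac{C\log^2(1/\alpha)}{nj\gamma^2}$ gives, after routine algebra, $\Abs{\hat m_j-m_j}\le c_1\paren{\frac{\sqrt{\log(1/\alpha)}}{\sqrt{n\ell_j}}+\frac{\log(1/\alpha)}{n\ell_j}}$; the first term is at most $\tfrac12\sqrt j\,\gamma$ once $C$ is a large enough absolute constant, and the ``$+1$'' in the definition of $\ell_j$, together with the same choice of $C$, forces the second (lower-order) term below $\tfrac12\sqrt j\,\gamma$ as well, yielding $\Abs{\hat m_j-m_j}\le\sqrt j\,\gamma$.

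The one delicate point I would check carefully is the regime where $\frac{C\log^2(1/\alpha)}{nj\gamma^2}$ is $O(1)$, so $\ell_j$ is a small constant and the ``spectral'' term $\|M\|_2\log(1/\alpha)/(n\ell_j)$, rather than the Frobenius term, dominates the error. There, the very fact that $\ell_j$ came out $O(1)$ forces $\sqrt j\,\gamma=\Omega\paren{\log(1/\alpha)/\sqrt n}$, and one verifies that both $\frac{\sqrt{\log(1/\alpha)}}{\sqrt n}$ and $\frac{\log(1/\alpha)}{n}$ are then at most $\tfrac12\sqrt j\,\gamma$ for $C$ large enough (it is convenient here to assume $\alpha$ is bounded away from $1$, so that $\log(1/\alpha)\ge 1$, which is without loss of generality). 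Everything else is bookkeeping; in fact the stated $\ell_j$ is more than sufficient — a $\log(1/\alpha)$ in place of $\log^2(1/\alpha)$ would already do.
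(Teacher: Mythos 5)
Your proposal is correct and is precisely the ``standard analysis of Hutchinson's trace estimator'' that the paper alludes to: the paper does not reprove this lemma but imports it from Braverman--Krishnan--Musco, and the intended argument is exactly your Hanson--Wright-type bound applied to $M=T_j(A)$ with $\|M\|_2\le 1$, $\|M\|_F\le\sqrt{n}$, with the ``$+1$'' and the $\log^2(1/\alpha)$ in $\ell_j$ absorbing the lower-order spectral term (via AM--GM) exactly as you indicate. Your caveat that $\log(1/\alpha)$ should be bounded below is the right thing to flag --- the lemma is only ever invoked here with $\alpha=\delta/k$ small --- so there is no gap of consequence.
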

We combine this lemma with \Cref{thm:master_thm} to prove the following more precise version of \Cref{cor:sde}: 
\begin{theorem}
\label{cor:sde_detailed}
There is an algorithm that, given $\e \in (0,1)$, symmetric $A\in \R^{n\times n}$ with spectral density $p$, and upper bound $S \geq \|A\|_2$, uses $\min\left\{n, O\left(\frac{1}{\e} \cdot \left (1+\frac{\log^2(1/\e) \log^2(1/(\e \delta))}{n\e} \right )\right)\right\}$ matrix-vector products with $A$ and $\tilde{O}(n/\epsilon + 1/\e^{3})$ additional time to output a distribution $q$ such that, with probability at least $1-\delta$, $W_1(p,q) \leq \e S$.
\end{theorem}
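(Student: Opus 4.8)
The plan is to reduce to \Cref{thm:master_thm} and \Cref{corr:recovery}: produce Chebyshev-moment estimates $\hat m_1,\dots,\hat m_k$ that are accurate in the weighted $1/j^2$ sense, then feed them to the regression of \Cref{alg:weighted_regression}. First rescale $A$ by $S$; since $\|A/S\|_2\le 1$ and the eigenvalues shrink by a factor $S$, it suffices to estimate the spectral density of $A/S$ to Wasserstein error $\epsilon$ and then multiply the support of the recovered distribution by $S$. Fix $k=\Theta(1/\epsilon)$ large enough that the additive $1/k$ terms in \Cref{thm:master_thm}/\Cref{corr:recovery} are $\le\epsilon/2$; then it remains to compute estimates of the true moments $m_j=\frac1n\tr(T_j(A/S))$ with $\sum_{j=1}^k\frac1{j^2}(\hat m_j-m_j)^2\le\Gamma^2$ for a suitable $\Gamma=\Theta(\epsilon)$, after which \Cref{corr:recovery} returns $q$ with $W_1(p,q)\le\epsilon$.

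The heart of the argument is that \Cref{lem:bkm42} estimates $m_j$ to error $\sqrt{j}\,\gamma$ using only $\ell_j=\lceil 1+\tfrac{C\log^2(1/\alpha)}{nj\gamma^2}\rceil$ Hutchinson samples --- a per-moment error that \emph{grows} with $j$, but grows slowly enough to be absorbed by the $1/j^2$ weights of \Cref{thm:master_thm}. Concretely, take $\gamma=\Theta(\epsilon/\sqrt{\log k})$ and $\alpha=\delta/k$; a union bound over $j=1,\dots,k$ shows that with probability $\ge1-\delta$ we have $|\hat m_j-m_j|\le\sqrt j\,\gamma$ for all $j$, whence $\sum_{j}\frac1{j^2}(\hat m_j-m_j)^2\le\gamma^2\sum_{j\le k}\frac1j\le\gamma^2(1+\ln k)=\Theta(\epsilon^2)$, exactly the $\Gamma^2$ we need. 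To realize all $k$ estimators cheaply, draw one pool of sign vectors $g_1,g_2,\dots$ of size $\ell_1=\max_j\ell_j$, and estimate $m_j$ using only the first $\ell_j$ of them (legitimate since $\ell_j$ is non-increasing in $j$, so \Cref{lem:bkm42} applies verbatim to each $j$). Vector $g_i$ then only needs the Chebyshev recurrence run to degree $d_i=\max\{j\le k:\ell_j\ge i\}$, costing $d_i$ matrix--vector products, and by double counting the total matrix--vector cost is $\sum_{i\le\ell_1}d_i=\sum_{j=1}^k\ell_j\le 2k+\tfrac{C\log^2(1/\alpha)}{n\gamma^2}(1+\ln k)=O\!\big(\tfrac1\epsilon+\tfrac{\log^2(1/\epsilon)\log^2(1/(\epsilon\delta))}{n\epsilon^2}\big)$, which is the stated bound. (A naive scheme that uses the common sample count $\ell_1$ for every moment would instead cost $\ell_1 k=\tilde{O}(1/(n\epsilon^3))$; the $\ell_j\propto 1/j$ schedule, licensed by the weighted norm, is what saves a factor of $1/\epsilon$.) If this count exceeds $n$, one instead spends $n$ matrix--vector products to read off $A$ columnwise and diagonalizes exactly, giving the $\min\{n,\cdot\}$ in the statement.

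For the additional running time, forming the Chebyshev iterates and the bilinear forms $g_i^\top T_j(A/S)g_i$ costs $O(n\,d_i)$ per vector, hence $O(n\sum_j\ell_j)=\tilde{O}(n/\epsilon)$ in the regime where the matvec count is $O(1/\epsilon)$; and the recovery step is a linearly constrained quadratic program in $\poly(1/\epsilon)$ variables, solvable in $\tilde{O}(1/\epsilon^3)$ time by standard methods (using, if desired, a uniform grid of $O(1/\epsilon)$ points as in \Cref{alg:dp_algorithm} in place of the finer Chebyshev-node grid of \Cref{alg:weighted_regression}, which only changes $W_1(p,q)$ by $O(\epsilon)$). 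I expect the only delicate part to be the bookkeeping in the middle paragraph: checking that the nested, shared-pool estimators each satisfy the hypothesis of \Cref{lem:bkm42} (including the ceilings in $\ell_j$), handling the union bound over the $k$ moments so it enters only through the $\log^2(1/(\epsilon\delta))$ factor, and summing $\sum_j\ell_j$ to land exactly on the $O(1/\epsilon)+\tilde{O}(1/(n\epsilon^2))$ form rather than a weaker power of $1/\epsilon$. The reduction, the appeals to \Cref{thm:master_thm}/\Cref{corr:recovery}, and the complexity of the quadratic program are then routine.
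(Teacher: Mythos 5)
Your accuracy argument and your matrix--vector accounting are essentially the paper's proof: rescale by $S$, take $k=\Theta(1/\epsilon)$, invoke \Cref{lem:bkm42} with $\gamma=\Theta(\epsilon/\sqrt{\log k})$ and $\alpha=\delta/k$, union bound, absorb the $\sqrt{j}\gamma$ errors with the $1/j^2$ weights via \Cref{thm:master_thm}, and count $\sum_j \ell_j$ by telescoping/double counting (the paper uses exactly the monotonicity $\ell_j\le\ell_{j-1}$ that your shared-pool scheme encodes); the $\min\{n,\cdot\}$ case is also handled the same way. The genuine gap is in the $\tilde{O}(1/\epsilon^3)$ recovery time, which is part of the statement. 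Your parenthetical shortcut --- running the regression over a uniform grid of $O(1/\epsilon)$ points ``as in \Cref{alg:dp_algorithm}'' --- does not work here: in the DP application the data are first \emph{rounded onto} the grid, so the grid distribution $\tilde p$ is trivially feasible and only costs $W_1$ rounding error; in SDE you cannot round the eigenvalues, so you need some grid-supported distribution whose first $k$ Chebyshev moments match those of the true $p$ in the weighted sense. On a uniform grid of spacing $\Theta(\epsilon)$ this fails: near $x=\pm1$ the polynomial $T_j$ varies on the scale $1/j^2$ (equivalently, a uniform $x$-spacing $h$ is an angle spacing $\approx\sqrt{2h}$), so rounding $p$ moves its $j$th moment by up to $\min(2,\,j\sqrt{2h})$, and the weighted discrepancy is then of order $h^{1/4}=\epsilon^{1/4}$, not $\epsilon$; a dual-witness (spike polynomial) argument shows this is not an artifact of the rounding witness. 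This is precisely why \Cref{alg:weighted_regression} uses $\Theta(k^{1.5})$ Chebyshev nodes rather than a uniform grid.

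Without that shortcut you are left with the quadratic program over $g=\Theta(k^{1.5})$ Chebyshev nodes, and your appeal to ``standard methods'' does not yield $\tilde{O}(1/\epsilon^3)$: \Cref{corr:recovery} only promises $\poly(k)$, and generic interior-point treatment of this QP (with $\tilde{O}(\sqrt{g})$ iterations and Newton systems of size $g$) lands above $k^3$. The paper closes this by changing the formulation: because the estimates satisfy the \emph{pointwise} bound $|\hat m_j-m_j|\le\sqrt{j}\gamma$ (not just the weighted-$\ell_2$ bound), it suffices to solve the feasibility LP \eqref{eq:lp} over a grid of $g=k^{1.5}\sqrt{1+\log k}$ Chebyshev nodes, with per-moment tolerances $\sqrt{j}\gamma + j\sqrt{2\pi}/g$ that account for the discretization; any feasible point is accurate by \Cref{thm:master_thm}, and fast LP solvers give $\tilde{O}(gk\sqrt{k})=\tilde{O}(k^3)=\tilde{O}(1/\epsilon^3)$. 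So to complete your proof you need either this LP step (or an equivalent argument) for the $1/\epsilon^3$ term; the rest of your write-up matches the paper.
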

\begin{proof}
First note that, if $\epsilon \leq 1/n$, the above result can be obtained by simply recovering $A$ by multiplying by all $n \leq 1/\epsilon$ standard basis vectors. We can then compute a full eigendecomposition to extract $A$'s spectral density, which takes $o(n^3)$ time. So we focus on the regime when $\epsilon > 1/n$. 

Without loss of generality, we may assume from here forward that $\|A\|_2 \leq 1$ and our goal is to prove that $W_1(p,q) \leq \epsilon$. In particular, we can scale $A$ by $1/S$, compute an approximate spectral density $q$ with error $\e$, then rescale by $S$ to achieve error $\e S$. As mentioned in \Cref{sec:intro_applications}, an $S$ satisfying $\|A\|_2 \leq S \leq 2\|A\|_2$ can be computed using $O(\log n)$ matrix-multiplications with $A$ via the power method \cite{KuczynskiWozniakowski:1992}. Given such an $S$, \Cref{cor:sde_detailed} implies an error bound of $2\epsilon\|A\|_2$. In some settings of interest for the SDE problem, for example when $A$ is the normalized adjacency matrix of a graph \cite{Cohen-SteinerKongSohler:2018,DongBensonBindel:2019,JinKarmarkarMusco:2024}, $\|A\|_2$ is known apriori, so we can simply set $S = \|A\|_2$.

Choose $k = \hat{c}/\e$ for a sufficiently large constant $\hat{c}$ and apply \Cref{lem:bkm42} for all $j = 1, \ldots, k$ with $\gamma = \frac{1}{k\sqrt{1+\log k}}$, and $\alpha = \delta/k$. By a union bound, we obtain estimates $\hat{m}_1, \ldots, \hat{m}_k$ satisfying, for all $j$,
\begin{align}
\label{eq:pointwise_condition}
|\hat{m}_j - m_j| \leq \sqrt{j}\gamma = \sqrt{j}\cdot \frac{1}{k\sqrt{1+\log k}}.
\end{align}
Applying \Cref{thm:master_thm} (specifically, \eqref{eq:master_require}) and \Cref{corr:recovery}, we conclude that, using these moments, \Cref{alg:weighted_regression} can recover a distribution $q$ satisfying:
\begin{align*}
W_1(p,q) \leq  \frac{2c'}{k}.
\end{align*}
I.e., we have $W_1(p,q) \leq \e$ as long as $\hat{c} \geq 2c'$. This proves the accuracy bound. We are left to analyze the complexity of the method. We first bound the total number of matrix-vector multiplications with $A$, which we denote by $T$. Since $\ell_j \leq \ell_{j-1}$ for all $j$, computing the necessary matrix-vector product to approximate $m_j$ only costs $\ell_{j-1}$ additional products on top of those used to approximate $m_{j-1}$. So, recalling that $\ell_j = \lceil 1 +  \frac{C\log^2(1/\alpha)}{n j \gamma^2}\rceil$, we have:
\begin{align*} 
T = \left(1 + \frac{C\log^2(k/\delta)}{n\gamma^2}\right) + \left(1 + \frac{C\log^2(k/\delta)}{2n\gamma^2}\right) +\dots + \left(1 + \frac{C\log^2(k/\delta)}{kn\gamma^2}\right).
\end{align*}
Using the fact that $1 + 1/2 + \ldots + 1/k \leq 1+\log(k)$ we  can upper bound $T$ by:
\begin{align*}
T = \bigO\left(k + \frac{\log^2(k/\delta)\log(k)}{n\gamma^2}\right) = \bigO\left(k + \frac{k^2 \log^2(k/\delta)\log^2(k)}{n}\right),
\end{align*}
which gives the desired matrix-vector product bound since $k = O(1/\e)$.

In terms of computational complexity, \Cref{corr:recovery} immediately yields a bound of $\poly(1/\e)$ time to solve the quadratic program in \Cref{alg:weighted_regression}. However, this runtime can actually be improved to $\tbigo{1/\e^3}$ by taking advantage of the fact that $\hat{m}_1,\ldots, \hat{m}_k$ obey the stronger bound of \eqref{eq:master_require} instead of just \eqref{eq:master_require_general}. This allows us to solve a linear program instead of a quadratic program. In particular, let $\cC$ be a grid of Chebyshev nodes, as used in \Cref{alg:weighted_regression}. I.e., 
$\cC = \set{x_1,\dots,x_{g}}$ where $x_i = \cos\paren{\frac{2i-1}{2g} \pi}$. Let $q^{\lp}_{1}, \dots, q^{\lp}_{g}$ be any solution to the following linear program with variables $z_1,\dots,z_g$:
\begin{align}\label{eq:lp}
        \text{minimize }  0 \hspace{1em} 
        \text{subject to} \hspace{1em}  \sum_{i=1}^{g}  z_i &= 1 &\,\\
         z_i &\geq 0, &\,\, \forall i \in \set{1,\dots,g}\nonumber\\
        {\sum_{i=1}^{g} T_j(x_i)  z_i} &\leq  \hat m_j +\left(\sqrt{j}\gamma +\frac{j \sqrt{2 \pi}}{g}\right), &\,\,\forall j \in \{1, \ldots, k\}\nonumber\\
         {\sum_{i=1}^{g} T_j(x_i)  z_i} &\geq \hat m_j - \left(\sqrt{j}\gamma +\frac{j \sqrt{2 \pi}}{g}\right),  &\,\,\forall j \in \{1, \ldots, k\}.\nonumber
\end{align}
We first verify that the linear program has a solution. To do so, 
note that, by \Cref{eq:p_tilde_p} in \Cref{app:corr_proof}, there exists a distribution $\tilde{p}$ supported on $\cC = \set{x_1,\dots,x_{g}}$, such that $\abs{m_j(p)-m_j(\tilde p)} \leq \frac{j \sqrt{2 \pi}}{g}$.
By \eqref{eq:pointwise_condition} and triangle inequality, it follows that $\tilde p$ is a valid solution to the linear program.

Next, let $q^{\lp} = \sum_{i=1}^{g} q^{\lp}_i \delta(x - x_i)$ be the distribution formed by any solution to the linear program. We have that, for any $j$,
\begin{align*}
\abs{m_j - \langle{q^{\lp}, T_j}\rangle} \leq \abs{\langle{q^{\lp}, T_j}\rangle - \hat m_j} + \abs{\hat m_j - m_j} \leq 2\sqrt{j}\gamma +\frac{j \sqrt{2 \pi}}{g}.
\end{align*}
Setting $g = k^{1.5}\sqrt{1 + \log(k)}$ and plugging into \Cref{thm:master_thm}, we conclude that
$W_1(p,q^{\lp}) \leq O(1/k).$

 The linear program in \Cref{eq:lp} has $g = \tilde{O}(k^{1.5})$ variables, boundary constraints for each variable, and $2k + 1$ other constraints. It follows  that it can be solved in $\tilde{O}(gk\cdot \sqrt{k}) = \tilde{O}(k^{3})$ time \cite{LeeSidford:2014,ls15}, which equals $\tilde{O}(1/\e^{3})$ time since we chose $k = O(1/\epsilon)$.
\end{proof}

\section{Learning Populations of Parameters}\label{sec:population_of_param}
In this section, we present the final application of our results to the ``population of parameters problem'' introduced as \Cref{prob:popprob} in \Cref{sec:intro_applications}. Unlike our prior two applications to differentially private synthetic data and spectral density estimation, we obtain an improvement on the prior work by applying the global Chebyshev coefficient decay bound from \Cref{claim:sum_c_j_sq} directly, instead of applying the full moment matching bound from \Cref{thm:master_thm}. We recall the problem statement below:

\popprob*

\cite{vinayak19a} shows that the maximum likelihood estimator (MLE) of $p$ can be formulated as: 
\begin{equation}\label{eq:mle}
    \hat{p}_{\text{mle}} \in \argmax_{Q \in \cD} \sum_{i=1}^N \log \int_0^1 \binom{t}{X_i} y^{X_i} (1-y)^{t-X_i} \rd Q(y) \mcom
\end{equation}
where $\cD$ denotes the set of all distributions on $[0,1]$. They prove that, in the \emph{small sample regime}, when $t = O(\log N)$, the MLE obtains error $W_1(p, \hat{p}_{\text{mle}}) \leq O(1/t)$. This improves on the naive estimator that simply returns a uniform distribution based on empirical estimates of $p_1, \ldots, p_N$, which gives Wasserstein error $O(1/\sqrt{t} + 1/\sqrt{N})$. Moreover, they prove that it is also possible to beat the naive estimator in the \emph{medium sample regime}:

\thmvinayak*

We improve this result in the medium sample regime to hold for a wider range of $t$, showing:
\begin{theorem}[Improvement in the Medium Sample Regime] \label{thm:our_improvement}
    There exists an $\e >0$, such that, for $t \in \Brac{\Omega(\log N), \bigo{N^{1/4-\e}}}$, with probability at least $99/100$, 
    \begin{equation}
    \label{eq:our_pop_bound}
        W_1(p, \hat{p}_{\text{mle}}) \leq O \paren{\frac{1}{\sqrt{t \log N}}}   \mper
    \end{equation}
\end{theorem}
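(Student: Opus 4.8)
The plan is to follow the skeleton of the proof of \Cref{thm:vinayak} in \cite{vinayak19a} and to change only the step in which the Chebyshev coefficients of the Lipschitz test function are combined with the per-moment error bounds. By \Cref{fact:w1_dual} it suffices to bound $\langle f, p - \hat{p}_{\text{mle}}\rangle$ for every $1$-Lipschitz $f$ on $[0,1]$. Applying \Cref{fact:jackson} to $f$ with degree parameter $k \asymp \sqrt{t\log N}$ yields $f_k = \sum_{j=0}^k c_j' \T_j$ with $\|f - f_k\|_\infty = O(1/k) = O(1/\sqrt{t\log N})$ and $|c_j'| \le |c_j|$, so, exactly as in the proof of \Cref{thm:master_thm}, $\langle f, p - \hat{p}_{\text{mle}}\rangle \le O(1/k) + \sum_{j=1}^k |c_j'|\cdot|m_j(p) - m_j(\hat{p}_{\text{mle}})|$, where $m_j(\cdot)$ denotes the $j$-th (shifted) Chebyshev moment. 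Since $k = o(t)$ whenever $t = \omega(\log N)$, all of these moments are estimable from $\mathrm{Binomial}(t,\cdot)$ data.

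First I would import, essentially verbatim, Vinayak et al.'s per-moment control of the MLE: with probability at least $99/100$, simultaneously for all $1 \le j \le k$, $|m_j(p) - m_j(\hat{p}_{\text{mle}})| \le \delta_j$, where $\delta_j$ is their explicit bound built from the statistical fluctuation of the unbiased degree-$j$ moment estimator (whose variance grows with $j$) together with the slack in the MLE's first-order optimality conditions. Second --- this is the only new ingredient --- instead of their aggregation $\sum_j |c_j'|\delta_j \le O\big(\sum_j \delta_j/j\big)$, which uses only the local decay $|c_j| = O(1/j)$, I apply Cauchy--Schwarz in exactly the form used in the proof of \Cref{thm:master_thm}, invoking the global decay bound \Cref{claim:sum_c_j_sq}:
\[
\sum_{j=1}^k |c_j'|\,\delta_j \;=\; \sum_{j=1}^k (j\,|c_j'|)\cdot\frac{\delta_j}{j}\;\le\;\Big(\sum_{j=1}^k (j c_j')^2\Big)^{1/2}\Big(\sum_{j=1}^k \frac{\delta_j^2}{j^2}\Big)^{1/2}\;\le\;\sqrt{\pi/2}\,\Big(\sum_{j=1}^k \frac{\delta_j^2}{j^2}\Big)^{1/2}.
\]
The weights $1/j^2$ down-weight the large high-degree errors and the outer square root tempers their growth, so the right-hand side is smaller than $\sum_j \delta_j/j$ by a polynomial factor in $k$ --- up to $\sqrt{k}\asymp (t\log N)^{1/4}$ when $\delta_j$ grows at least like $\sqrt{j}$. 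Equivalently, the binding requirement on the hardest moment relaxes from $\delta_k \lesssim 1/k$ to $\delta_k \lesssim 1/\sqrt{k}$, a factor-$\sqrt{k}$ relaxation, which is precisely the slack that \Cref{claim:sum_c_j_sq} buys. Third, I add back the $O(1/k)$ Jackson error and re-optimize $k$ together with the implicit sample-size constraint; the relaxed per-moment requirement is what pushes the admissible range from $t = O(N^{2/9-\e})$ up to $t = O(N^{1/4-\e})$ while still delivering the target error $O(1/\sqrt{t\log N})$.

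The main obstacle is carrying out this re-optimization rigorously: I need the precise dependence of Vinayak et al.'s $\delta_j$ on $j$, $t$, and $N$ --- in particular, how the moment-estimator variance and the MLE optimality slack each scale --- so that substituting into $\big(\sum_j \delta_j^2/j^2\big)^{1/2}$ with $k \asymp \sqrt{t\log N}$ gives $O(1/\sqrt{t\log N})$ exactly when $t \le N^{1/4-\e}$ and not at some worse exponent; pinning down the boundary exponent is the delicate part. A secondary, routine point is that all of Vinayak et al.'s high-probability events (simultaneous moment concentration and the MLE characterization) must still be union-bounded over $j \le k$ with total failure probability below $1/100$, which is unaffected by our change but should be checked. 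Finally, as remarked after the statement, combining this Cauchy--Schwarz step with Vinayak et al.'s conjectured sharper per-moment bound would extend the range to $t = O(N^{1-\e})$, which is optimal since even $t = \infty$ forces error $\Omega(1/\sqrt{N})$.
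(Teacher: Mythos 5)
There is a genuine gap in your plan, and it sits exactly where you flag your ``main obstacle.'' You propose to bound $\langle f, p-\hat{p}_{\text{mle}}\rangle$ by per-Chebyshev-moment errors $|m_j(p)-m_j(\hat{p}_{\text{mle}})|\le \delta_j$ ``imported essentially verbatim'' from \citet{vinayak19a}, and then to aggregate these with the $1/j^2$-weighted Cauchy--Schwarz step from \Cref{thm:master_thm}. But no such per-moment control of the MLE exists in that paper, and it cannot be extracted verbatim: their analysis controls $\hat{p}_{\text{mle}}$ only through the \emph{fingerprints} $h_j^{\text{obs}}$ versus the expected fingerprints $\E_Q[h_j]$, paired against the coefficients of the test polynomial written in the degree-$t$ \emph{Bernstein basis} (see the decomposition \eqref{eq:with_terms} and \Cref{lem:termb_c}, whose bounds scale with $\max_j|b_j|$, not with any per-Chebyshev-moment error). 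To convert fingerprint control into control of the $j$-th shifted Chebyshev moment of $\hat{p}_{\text{mle}}$ one must re-expand $\tilde T_j$ in the Bernstein basis of degree $t$, which brings in conversion coefficients $C(t,m,j)$ of size up to $(t+1)e^{j^2/t}$ --- exponentially large in $j^2/t$, not the polynomially growing $\delta_j\sim\sqrt{j}$ your aggregation step tacitly assumes. With the correct $\delta_j$ the weighted sum $\bigl(\sum_j \delta_j^2/j^2\bigr)^{1/2}$ is dominated by the top term and your ``factor-$\sqrt{k}$ relaxation of the hardest moment'' does not materialize through that route.

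The paper's actual proof keeps the fingerprint/Bernstein framework of \citet{vinayak19a} intact and applies \Cref{claim:sum_c_j_sq} at a different point: in bounding each Bernstein coefficient $b_j=\sum_{m=0}^k a_m C(t,m,j)$ of the Jackson-damped Chebyshev approximation $f_k$, one uses Cauchy--Schwarz against $\sum_m C(t,m,j)^2/m^2$ together with $\sum_m m^2 a_m^2=O(1)$, improving the coefficient bound from $\sqrt{k}\,(t+1)e^{k^2/t}$ to $(t+1)e^{k^2/t}$ (\Cref{prop:bernstein_bound}). That $\sqrt{k}$ saving, inserted into \Cref{lem:termb_c} with $k\asymp\sqrt{t\log N}$, is what moves the admissible range from $t=O(N^{2/9-\e})$ to $t=O(N^{1/4-\e})$ via the balance $t^4\lesssim N^{1-2c}$. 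So while your instinct --- that \Cref{claim:sum_c_j_sq} plus Cauchy--Schwarz buys a $\sqrt{k}$ over the local decay bound --- identifies the right source of improvement, your proof as written routes it through per-moment MLE error bounds that are unavailable, and repairing it forces you back into the Bernstein-coefficient argument that the paper actually uses.
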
 
As will be discussed in \Cref{sec:conj}, under a natural conjecture from \cite{vinayak19a}, our approach can actually be used to extend the range for which \eqref{eq:our_pop_bound} holds all the way to $t = O(N^{1-\epsilon})$ for any fixed constant $\epsilon$, which is essentially optimal.

\paragraph{Notation.} We begin by introducing notation used throughout this section. Unlike prior applications, \Cref{prob:popprob} involves distributions over $[0,1]$ instead of $[-1,1]$. For this reason, we use \emph{shifted} Chebyshev polynomials, which we denote by $\tilde T_0(x), \tilde{T}_1(x), \ldots$, where the degree $m$ {shifted} Chebyshev polynomial, $\tilde T_m$, is defined as $\tilde T_m(x) = T_m(2x-1)$. Note that the shifted Chebyshev polynomials are orthogonal on the range $[0,1]$ under weight function $w(2x-1)$, where $w(x) = \frac{1}{\sqrt{1 - x^2}}$ is as defined in \Cref{fact:cheb_prop}. Also note that Jackson's theorem (\Cref{fact:jackson}) and our global Chebyshev coefficient decay bound (\Cref{claim:sum_c_j_sq}) continue to hold up to small changes in constant factors when working with shifted Chebyshev polynomial expansions of Lipschitz functions on $[0,1]$.

\subsection{\texorpdfstring{Proof of \Cref{thm:our_improvement}}{Proof of Theorem \ref{thm:our_improvement}}}
The approach from \cite{vinayak19a} centers on rewriting $\hat{p}_{\text{mle}}$ in terms of the \emph{fingerprint} of the observed coin tosses, which can be shown to be a sufficient statistic for the estimation problem. Recall that the observations are $\set{X_i}_{i=1}^N$, where $X_i \sim \text{Binomial}(t,p_i)$. For $s \in \set{0,1,\dots,t}$, let $n_s$ denote the number of coins that evaluate to $1$ on $s$ of the $t$ tosses, i.e. $n_s = |\{i: X_i = s\}|$. Let $\hobs{s}$ denote the fraction of coins that evaluate to $1$ on $s$ tosses, i.e., $\hobs{s} = n_s/N$. The {fingerprint} is defined as $\bm{\hobs{}} \seteq (\hobs{0},\dots,\hobs{t})$. Similarly, for any distribution $Q$, let $\E_Q[h_j]$ denote the expected fraction of coins that evaluate to $1$ on $j$ out of $t$ tosses when $X_1,\ldots, X_N$ are drawn from some distribution $Q$.



\citet{vinayak19a} prove a result relating the Wasserstein error of $\hat{p}_{mle}$ to how closely the expected fingerprints under $\hat{p}_{mle}$ match the observed fingerprints. Like our \Cref{thm:master_thm} on matching moments, this result leverages the dual definition of Wasserstein distance involving Lipschitz functions (\Cref{fact:w1_dual}) and proceeds by replacing $f$ with a polynomial approximation, $\hat{f}$. Just as our proof depends on the Chebyshev coefficients of $\hat{f}$, their result depends on the coefficients of $\hat{f}$ when written in a \emph{Bernstein polynomial basis}. In particular, let $B_j^t(x) = \binom{t}{j} x^j (1-x)^{t-j}$ denote the $j^\text{th}$ Bernstein polynomial of degree $t$. \citet{vinayak19a} works with a degree $t$ approximation $\hat{f}$ of the form $\hat{f} = \sum_{j=0}^t b_j B_j^t(x)$. They prove the following:

\notsotiny
\begin{align} \label{eq:with_terms} 
    &W_1(p, \hat{p}_{\text{mle}}) \leq \sup_{\underset{\text{smooth}~f}{1\text{-Lipschitz,}} } \left[ \inf_{\hat{f} = \sum_{j=0}^t b_j B_j^t(x)}\left[ 2 \underbrace{\norm{f-\hat f}_{\infty}}_{\text{(a)}} + \underbrace{\sum_{j=0}^t b_j (\E_{p}[h_j]-\hobs{j})}_{\text{(b)}} + \underbrace{\sum_{j=0}^t b_j (\hobs{j}-\E_{\hat{p}_{\text{mle}}}[h_j])}_{\text{(c)}}  \right]\right].
\end{align}
\normalsize
Above and in the remainder of this section, $\norm{f-\hat f}_{\infty}$ denotes $\max_{x\in [0,1]} |f(x) - \hat{f}(x)|$ (instead of our usual definition involving $x\in [-1,1].$)
\cite{vinayak19a} bounds the terms (b) and (c) as follows:

\begin{lemma}[{\citep[Lemmas 4.1 and 4.2]{vinayak19a}}] \label{lem:termb_c}
Term (b): With probability $1-\delta$, 
\[ \abs{\sum_{j=0}^t b_j (\E_{p}[h_j]-\hobs{j})} \leq \bigo{ \max_j \abs{b_j} \sqrt{\frac{\log 1/\delta}{N}} } \mper\]
Term (c): For $3 \leq t \leq \sqrt{C_0 N}+2$, where $C_0 >0$ is constant,  with probability $1-\delta$, 
\[ \abs{\sum_{j=0}^t b_j (\hobs{j}-\E_{\hat{p}_{\text{mle}}}[h_j])} \leq \max_j \abs{b_j} \sqrt{2 \ln 2} \sqrt{ \frac{t}{2N} \log \frac{4N}{t} + \frac{1}{N} \log \frac{3e}{\delta}} \mper  \]
\end{lemma}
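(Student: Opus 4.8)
\textbf{Plan for proving \Cref{lem:termb_c}.} The two bounds are proved by very different mechanisms, so I would treat them separately.

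\textbf{Term (b): a concentration argument over the randomness in the observations.} Fix any coefficient vector $(b_0,\dots,b_t)$. The key observation is that $\hobs{j} = n_j/N$ is an average of $N$ i.i.d.\ indicators: writing $Y_i^{(j)} = \indicator{X_i = j}$, we have $\hobs{j} = \frac1N\sum_{i=1}^N Y_i^{(j)}$ and $\E_p[h_j] = \E[\hobs{j}]$ (this is exactly the definition of $\E_p[h_j]$). Hence $\sum_{j=0}^t b_j(\E_p[h_j] - \hobs{j}) = \frac1N\sum_{i=1}^N Z_i$ where $Z_i = \sum_{j=0}^t b_j(\E_p[h_j] - Y_i^{(j)})$ is a mean-zero random variable depending only on the $i$-th coin's toss outcome. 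Since for each $i$ exactly one of the indicators $Y_i^{(0)},\dots,Y_i^{(t)}$ equals $1$, we get the deterministic bound $|Z_i| \le 2\max_j|b_j|$ (one term for $Y_i^{(j)}$, one for the expectation, each at most $\max_j|b_j|$ in absolute value after accounting for the fact that $\E_p[h_j]\ge 0$ and $\sum_j \E_p[h_j]=1$). The $Z_i$ are independent, so Hoeffding's inequality gives $\prob{|\frac1N\sum_i Z_i| > \tau} \le 2\exp(-N\tau^2/(2\max_j|b_j|)^2)$; setting this to $\delta$ yields $\tau = O(\max_j|b_j|\sqrt{\log(1/\delta)/N})$, which is the claimed bound. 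The only subtlety is keeping the $\max_j|b_j|$ as the effective range rather than $\sum_j|b_j|$, and this is exactly where the "only one indicator fires" structure is used; I would state this carefully as the crux of part (b).

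\textbf{Term (c): an optimality/KL argument, not a concentration argument.} Here the relevant fact is that $\hat p_{\text{mle}}$ maximizes the log-likelihood \eqref{eq:mle}, which is (up to constants and scaling by $N$) the negative KL divergence between the empirical fingerprint distribution $\bm{\hobs{}}$ and the fingerprint distribution induced by $Q$. Since the true $p$ is a feasible competitor, the likelihood of $\hat p_{\text{mle}}$ is at least that of $p$; combined with a standard concentration bound showing the empirical fingerprint is close in KL to its expectation under $p$ (this needs $t$ not too large relative to $N$ so that the fingerprint has few enough coordinates — hence the hypothesis $3 \le t \le \sqrt{C_0 N}+2$), one concludes that $\mathrm{KL}(\bm{\hobs{}} \,\|\, \text{fingerprint of }\hat p_{\text{mle}})$ is small, of order $\frac{t}{N}\log(N/t) + \frac1N\log(1/\delta)$. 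Then Pinsker's inequality converts this KL bound into an $\ell_1$ bound on $(\hobs{j} - \E_{\hat p_{\text{mle}}}[h_j])_{j=0}^t$, and finally Hölder gives $|\sum_j b_j(\hobs{j} - \E_{\hat p_{\text{mle}}}[h_j])| \le \max_j|b_j| \cdot \|\hobs{} - \E_{\hat p_{\text{mle}}}[h]\|_1 \le \max_j|b_j|\sqrt{2\,\mathrm{KL}}$, which produces the stated $\max_j|b_j|\sqrt{2\ln2}\sqrt{\frac{t}{2N}\log\frac{4N}{t} + \frac1N\log\frac{3e}{\delta}}$ after tracking constants.

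\textbf{Main obstacle.} Part (b) is routine once the indicator decomposition is in place. The real work is part (c): establishing that the empirical fingerprint concentrates in KL divergence around its expectation at the claimed rate, uniformly enough to feed into the MLE-optimality comparison. This requires a careful covering/union-bound argument over the $(t+1)$-dimensional simplex of fingerprint distributions (the "$\log(N/t)$" factor is the price of this covering and the "$\sqrt{C_0 N} + 2$" restriction on $t$ is what keeps the simplex dimension small enough), together with care that the MLE-vs-truth likelihood comparison is valid even though $\hat p_{\text{mle}}$ is data-dependent. Since \Cref{lem:termb_c} is cited verbatim from \cite[Lemmas 4.1 and 4.2]{vinayak19a}, I would present the above as the proof sketch and defer the detailed covering argument to that reference, emphasizing that our use of the lemma is entirely black-box.
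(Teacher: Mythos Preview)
The paper does not prove \Cref{lem:termb_c} at all: it is stated with the citation \cite[Lemmas 4.1 and 4.2]{vinayak19a} and then used as a black box in the proof of \Cref{thm:our_improvement}. Your proposal correctly recognizes this, and your final paragraph already says as much. The sketch you give for Term (b) (rewrite as an average of $N$ i.i.d.\ bounded random variables via the ``exactly one indicator fires'' observation, then apply Hoeffding) and for Term (c) (MLE optimality gives $\mathrm{KL}(\bm{\hobs{}}\,\|\,\E_{\hat p_{\text{mle}}}[h]) \le \mathrm{KL}(\bm{\hobs{}}\,\|\,\E_p[h])$, then concentration of empirical KL plus Pinsker and H\"older) is a faithful outline of the arguments in \cite{vinayak19a}, so there is nothing to correct. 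If anything, your write-up goes beyond what the present paper does, since the paper offers no sketch at all; for the purposes of this paper, simply citing the lemma is sufficient.
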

It remains to bound (a), i.e., $\norm{f-\hat f}_{\infty}$, as well as  $\max_j \abs{b_j}$ which appears in both bounds above. 

Doing so requires proving that there exist good uniform polynomial approximations to $f$ that have bounded coefficients $b_0, \ldots, b_t$ in the Bernstein polynomial basis. Towards that end, \citet{vinayak19a} prove the following key result: 
\begin{proposition}[{\citep[Proposition 4.2]{vinayak19a}}]
    Any $1$-Lipschitz function on $[0,1]$ can be approximated by a degree $t$ polynomial $\hat f(x) = \sum_{j=0}^t b_j B_j^t(x)$,  such that, for any $k < t$,
    \begin{align*} 
    \norm{f-\hat f}_{\infty} &\leq \bigo{\frac{1}{k}} & &\text{and} & \max_j \abs{b_j} &\leq \sqrt{k} (t+1) e^{k^2/t}. 
    \end{align*}
\end{proposition}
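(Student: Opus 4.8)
The plan is to combine a low‑degree polynomial approximation of $f$ with a quantitative bound on how large the Bernstein coefficients of a low‑degree polynomial can become when it is re‑expanded in a higher‑degree Bernstein basis. First I would apply the $[0,1]$‑version of Jackson's theorem (\Cref{fact:jackson}) to $f$: after subtracting the constant $f(0)$ — which does not change the Lipschitz constant and shifts every Bernstein coefficient by the same amount of size $\le 1$ — we get a polynomial $P$ of degree $\le k$ with $\|f-P\|_{\infty,[0,1]} = O(1/k)$, hence $\|P\|_{\infty,[0,1]}\le 2$. Taking $\hat f := P$ and writing $P = \sum_{j=0}^{t} b_j B_j^t$ in the degree‑$t$ Bernstein basis immediately yields the first bound $\|f-\hat f\|_\infty = O(1/k)$. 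Everything then reduces to bounding $\max_j|b_j|$ for a degree‑$\le k$ polynomial with $\|P\|_{\infty,[0,1]}\le 2$.

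For the coefficient bound I would use the generating‑function identity
\[
\sum_{j=0}^{t} b_j \binom{t}{j} y^j \;=\; (1+y)^t\, P\!\left(\tfrac{y}{1+y}\right),
\]
which is verified by substituting $x = y/(1+y)$ into $P = \sum_j b_j B_j^t$ and noting that, $P$ being a polynomial, the right‑hand side is a polynomial of degree $\le t$ in $y$; hence $b_j = \binom{t}{j}^{-1}\frac{1}{2\pi i}\oint_{|y|=\rho}(1+y)^t P(\tfrac{y}{1+y})\,y^{-j-1}\,dy$ for every $\rho>0$. The edge cases $b_0 = P(0)$ and $b_t = P(1)$ are bounded directly by $\|P\|_\infty$, and for $1\le j\le t-1$ I would take $\rho = \rho_j \approx j/(t-j)$, the saddle of $y\mapsto (1+y)^t y^{-j}$, so that the prefactor $(1+\rho_j)^t\rho_j^{-j}/\binom{t}{j}$ equals the reciprocal of the value of the $\mathrm{Binomial}(t,j/t)$ pmf at its mode, which is $\le t+1$ because that pmf has $t+1$ atoms summing to $1$; this accounts for the $(t+1)$ factor. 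The image of the circle $|y|=\rho_j$ under $y\mapsto y/(1+y)$ is a circle through $j/t\in[0,1]$ that leaves $[0,1]$ by a controlled amount, and bounding $|P|$ on it via the standard Bernstein growth estimate (a degree‑$k$ polynomial bounded by $M$ on $[0,1]$ is at most $M\rho^{k}$ on the Bernstein $\rho$‑ellipse of $[0,1]$) contributes the $e^{O(k^2/t)}$ factor, while a Laplace‑type evaluation of the integral near the saddle gives the leftover $\sqrt{k}$. An equivalent route I would keep in reserve is the observation that $b_j = (B_t^{-1}P)(j/t)$, where $B_t$ is the Bernstein operator, which acts triangularly on degree‑$\le k$ polynomials with eigenvalues $\prod_{i=0}^{l-1}(1-i/t)$; bounding $\|B_t^{-1}\|_{\infty\to\infty}\le \sqrt{k}(t+1)e^{k^2/t}$ using $\prod_{i<k}(1-i/t)^{-1}\le e^{k^2/t}$ (valid for $k\le t/2$, and only $k\ll t$ is needed in the application) gives the claim directly.

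The main obstacle is precisely the bound on $\max_j|b_j|$: the triangle‑inequality estimate on $b_j = \sum_i a_i \binom{j}{i}/\binom{t}{i}$ (where $a_i$ are the monomial coefficients of $P$) only yields $e^{O(k)}$, because it discards the cancellation that makes, for instance, $b_t = P(1)$ bounded by $\|P\|_\infty$. Capturing that cancellation is exactly what the contour deformation (or the $B_t^{-1}$ viewpoint) achieves, and the delicate quantitative point is to control $|P|$ on the deformed contour — equivalently, on a Bernstein ellipse of $[0,1]$ with parameter $1+O(k/t)$ — without losing more than $e^{O(k^2/t)}$. The remaining ingredients (Jackson's theorem, the change‑of‑basis/generating‑function identity, and the reduction through Kantorovich–Rubinstein duality) are standard.
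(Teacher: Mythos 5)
Your overall route is genuinely different from the paper's: the paper (following \cite{vinayak19a}) takes the damped Chebyshev approximant from Jackson's theorem, converts each shifted Chebyshev polynomial into the degree-$t$ Bernstein basis, and invokes the conversion bound $|C(t,m,j)|\leq (t+1)e^{m^2/t}$ (Lemma 4.4 of \cite{vinayak19a}) together with bounds on the Chebyshev coefficients, whereas you try to bound the degree-$t$ Bernstein coefficients of an arbitrary degree-$k$ polynomial $P$ bounded on $[0,1]$ in one shot via the generating identity and a Cauchy integral. The identity $\sum_j b_j\binom{t}{j}y^j=(1+y)^tP(y/(1+y))$ is correct, and so is the observation that the prefactor $(1+\rho_j)^t\rho_j^{-j}/\binom{t}{j}$ is the reciprocal of a $\mathrm{Binomial}(t,j/t)$ pmf at its mode and hence at most $t+1$. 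The gap is the remaining factor: your justification that $\max_{|y|=\rho_j}|P(y/(1+y))|\leq e^{O(k^2/t)}$ because the image of the circle ``leaves $[0,1]$ by a controlled amount,'' i.e.\ lies in a Bernstein ellipse of parameter $1+O(k/t)$, is false. The excursion of the image circle is governed by $j/t$, not by $k/t$: with $\rho_j=j/(t-j)$ the image meets the real axis at $j/t$ and at $-j/(t-2j)$, so for example for $j=t/4$ it reaches $-1/2$, a point lying on the Bernstein ellipse of $[0,1]$ with parameter $2+\sqrt{3}$; there a degree-$k$ polynomial bounded on $[0,1]$ (e.g.\ the shifted Chebyshev polynomial $\tilde T_k$) has size $e^{\Theta(k)}$. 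In the regime where the proposition is actually used (\Cref{thm:our_improvement} takes $k\approx\sqrt{t\log N}\gg\sqrt{t}$), a loss of $e^{\Theta(k)}$ instead of $e^{O(k^2/t)}$ is fatal.

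What would actually have to be proved is that the growth of $|P(y/(1+y))|$ away from $y=\rho_j$ is beaten by the decay of $|1+y|^t/(1+\rho_j)^t$, i.e.\ a bound of the form $\max_\theta \frac{|1+\rho_j e^{i\theta}|^t}{(1+\rho_j)^t}\,\bigl|P\bigl(\tfrac{\rho_j e^{i\theta}}{1+\rho_j e^{i\theta}}\bigr)\bigr| \leq e^{O(k^2/t)}$ uniformly in $j$; this is a genuine two-scale trade-off (Gaussian decay of width $\sim t^{-1/2}$ in $\theta$ against polynomial growth rate $\sim k$ per unit distance, with separate care near the endpoints where the growth is $e^{O(k\sqrt{d})}$), and it is precisely the hard content of the statement --- in effect a reproof, for arbitrary bounded $P$, of the conversion bound that the paper simply cites. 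Your sketch gestures at it (``Laplace-type evaluation'') but neither states nor proves it; note also that you cannot asymptotically evaluate the integral, since $P$ is arbitrary, only bound it, and the $\sqrt{k}$ you expect that evaluation to produce is slack in the target rather than something the argument needs. The fallback route is likewise unsupported: $b_j=(B_t^{-1}P)(j/t)$ is correct, but $\|B_t^{-1}\|_{\infty\to\infty}\leq \sqrt{k}(t+1)e^{k^2/t}$ on polynomials of degree at most $k$ does not follow from the eigenvalue bound $\prod_{i<k}(1-i/t)^{-1}\leq e^{O(k^2/t)}$, because $B_t$ is not normal in any convenient norm --- controlling the conditioning of its eigenbasis is exactly the difficulty, so asserting that operator-norm bound essentially assumes the proposition. (A minor additional slip: the constant you subtract is $f(0)$, which need not have size at most $1$; as in the paper, you need the harmless normalization $|f|\leq 1/2$ before the coefficient bound can be stated.)
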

Proposition 4.2 is proven by using Jackson's theorem (\Cref{fact:jackson}) to approximate $f$ by a degree $k$ polynomial $f_k$. Recall that $f_k$ is written as a linear combination of Chebyshev polynomials. \cite{vinayak19a} then obtain $\hat{f}$ by expressing these Chebyshev polynomials as linear combinations of Bernstein polynomials of degree $t$. Naturally, by using our \Cref{claim:sum_c_j_sq} to give a better bound on the Chebyshev coefficients of $f_k$, we can improve their bound on the Bernstein polynomial coefficients, $b_0,\ldots, b_t$, of $\hat{f}$. Concretely, we show the following:
\begin{proposition}[Improvement to {\citep[Proposition 4.2]{vinayak19a}}] \label{prop:bernstein_bound}
    Any $1$-Lipschitz function on $[0,1]$ can be approximated by a degree $t$ polynomial $\hat f(x) = \sum_{j=0}^t b_j B_j^t(x)$,  such that, for any $k < t$,
    \begin{align*} 
    \norm{f-\hat f}_{\infty} &\leq \bigo{\frac{1}{k}} & &\text{and} & \max_j \abs{b_j} &\leq (t+1) e^{k^2/t}. 
    \end{align*}
\end{proposition}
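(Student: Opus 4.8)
The plan is to re-run the proof of \citet{vinayak19a}'s Proposition~4.2 essentially verbatim, and to replace a single estimate---their bound on the sum of magnitudes of the Chebyshev coefficients of the Jackson approximant---by the stronger bound implied by our global decay estimate, \Cref{claim:sum_c_j_sq}. First I would observe that in the only place this proposition is used, namely inside \eqref{eq:with_terms}, the quantity being bounded is invariant under replacing $f$ by $f+c$ for any constant $c$: the optimal degree-$t$ Bernstein approximant shifts to $\hat f+c$, every coefficient $b_j$ shifts by $c$ (using $\sum_i B_i^t\equiv 1$), term (a) is unchanged, and terms (b) and (c) are unchanged since $\sum_{j=0}^t(\E_p[h_j]-\hobs{j})=1-1=0$ and likewise for term (c). Hence we may assume $f(0)=0$, so that $\|f\|_\infty\le 1$ on $[0,1]$, which is the regime we work in.

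Next I would apply the shifted version of Jackson's theorem (\Cref{fact:jackson}, which as noted at the start of this section carries over to shifted Chebyshev expansions on $[0,1]$ up to constant factors) to obtain damping factors $1=b_k^0>\dots>b_k^k\ge 0$ and shifted Chebyshev coefficients $c_0,\dots,c_k$ of $f$ such that $f_k:=\sum_{j=0}^k b_k^j c_j\,\tilde T_j$ satisfies $\|f-f_k\|_\infty=O(1/k)$, where $\tilde T_j(x)=T_j(2x-1)$. Taking $\hat f:=f_k$---which has degree $k<t$ and hence a unique representation $\hat f=\sum_{i=0}^t b_i B_i^t$ in the degree-$t$ Bernstein basis---the approximation guarantee $\|f-\hat f\|_\infty=O(1/k)$ is immediate, and only the coefficient bound remains. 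Expanding each $\tilde T_j$ (of degree $j\le k<t$) as $\tilde T_j=\sum_{i=0}^t\beta_{i,j}B_i^t$, linearity gives $b_i=\sum_{j=0}^k b_k^j c_j\beta_{i,j}$, so using $b_k^j\in[0,1]$ we get
\[
\max_{0\le i\le t}|b_i|\ \le\ \Big(\sum_{j=0}^k|c_j|\Big)\cdot\max_{0\le j\le k,\ 0\le i\le t}|\beta_{i,j}|.
\]

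The second factor is precisely the Chebyshev-to-Bernstein estimate that \citet{vinayak19a} establish in the proof of their Proposition~4.2, namely $|\beta_{i,j}|\le(t+1)e^{j^2/t}$ for every $j\le t$ and every $i$; this step I would reuse unchanged, so the factor is at most $(t+1)e^{k^2/t}$. The improvement comes in the first factor: whereas \cite{vinayak19a} bound $\sum_{j=1}^k|c_j|\le\sqrt{k}\,\big(\sum_{j\ge1}c_j^2\big)^{1/2}=O(\sqrt k)$ by Cauchy--Schwarz, I would instead write $\sum_{j=1}^k|c_j|=\sum_{j=1}^k\tfrac1j\,(j|c_j|)\le\big(\sum_{j\ge1}\tfrac1{j^2}\big)^{1/2}\big(\sum_{j\ge1}(jc_j)^2\big)^{1/2}$ and invoke \Cref{claim:sum_c_j_sq} in its shifted form (valid since $u\mapsto f((u+1)/2)$ is $\tfrac12$-Lipschitz on $[-1,1]$, which gives $\sum_{j\ge1}(jc_j)^2=O(1)$). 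Thus $\sum_{j=1}^k|c_j|=O(1)$ with no dependence on $k$; bounding the constant term directly by $|c_0|\le\|f\|_\infty\cdot O(1)=O(1)$ yields $\sum_{j=0}^k|c_j|=O(1)$, and combining the two factors gives $\max_i|b_i|=O\big((t+1)e^{k^2/t}\big)$. Tracking the absolute constants carefully (using the slack from recentering so that $\|f\|_\infty\le\tfrac12$) then removes the hidden constant and produces the stated bound $\max_j|b_j|\le(t+1)e^{k^2/t}$.

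The main obstacle is not conceptual---the one genuinely new input is swapping the local $\ell_2$ control $\sum_j c_j^2=O(1)$ for the global bound $\sum_j(jc_j)^2=O(1)$ of \Cref{claim:sum_c_j_sq}, which is exactly what collapses the $\sqrt k$ to an $O(1)$. The delicate part is the constant bookkeeping: confirming that the shifted-interval versions of \Cref{fact:jackson} and \Cref{claim:sum_c_j_sq} carry the constants used above, handling $c_0$ separately (it lies outside the scope of \Cref{claim:sum_c_j_sq}, but is harmless since $f$ is bounded after recentering), and quoting the bound $|\beta_{i,j}|\le(t+1)e^{j^2/t}$ of \cite{vinayak19a} with the normalization convention ($\tilde T_j$ versus its $L^2$-normalized variant) consistent with how $f_k$ is written. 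I expect obtaining the clean bound with leading constant $1$, rather than merely an $O(\cdot)$, to be the fiddliest step, but it is routine.
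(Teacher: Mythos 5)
Your proposal is correct and follows essentially the same route as the paper's proof: normalize $f$ (the paper simply shifts its range into $[-1/2,1/2]$), take $\hat f$ to be the damped Jackson approximant, convert shifted Chebyshev to degree-$t$ Bernstein polynomials using the bound $|C(t,m,j)|\leq (t+1)e^{m^2/t}$ from \citet{vinayak19a}, and replace their $\sqrt{k}$ loss by an $O(1)$ factor via Cauchy--Schwarz with weights $1/m$ together with the global decay bound of \Cref{claim:sum_c_j_sq}. The only cosmetic differences are that you factor out $\max_{i,j}|\beta_{i,j}|$ before applying Cauchy--Schwarz (the paper keeps $C(t,m,j)/m$ inside the sum, yielding the identical bound), and note that, like you, the paper's own proof actually retains an absolute constant in front of $(t+1)e^{k^2/t}$, which is harmless in the downstream application.
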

\begin{proof}
    Let $f_k = \sum_{m=0}^k a_m \tilde T_m(x)$ be the damped truncated Chebyshev series of $f$ as defined in \Cref{fact:jackson} (appropriately shifted and scaled to involve the shifted Chebyshev polynomials over $[0,1]$) \footnote{We use $a_0, \ldots, a_k$ to denote the damped coefficients to avoid confusion with the coefficients $b_0, \ldots, b_t$ above}. Recall that, for all $i$, $a_i \leq 1$.
    From \Cref{fact:jackson}, we have that $\norm{f-f_k} \leq \bigo{1/k}$. Any Chebyshev polynomials $\tilde T_m$ can be expressed as a linear combination of Bernstein polynomials of degree $m$:
    \begin{align*}
        \tilde T_m(x) &= \sum_{i=0}^{m} (-1)^{m-i} \frac{\binom{2m}{2i}}{\binom{m}{i}} B_i^m(x) & \paren{\text{{\citep[eq. 21]{vinayak19a}}}} \mper
    \end{align*}
    Moreover, following \cite{vinayak19a}, we can express any degree $m$ Bernstein polynomial as an appropriate sum of Bernstein polynomial of higher degree $t$:
    \begin{align*}
        B_i^m(x) &= \sum_{j=i}^{i+t-m} \frac{\binom{m}{i} \binom{t-m}{j-i}}{\binom{t}{j}} B_j^t(x) & \paren{\text{{\citep[eq. 22]{vinayak19a}}}} \mper
    \end{align*}
    Combining the two equations above, we have that, for $m <t$, 
    \begin{align*}
        \tilde T_m(x) &= \sum_{i=0}^{m} (-1)^{m-i} \frac{\binom{2m}{2i}}{\binom{m}{i}} \sum_{j=i}^{i+t-m} \frac{\binom{m}{i} \binom{t-m}{j-i}}{\binom{t}{j}} B_j^t(x) 
        =: \sum_{j=0}^t C(t,m,j) B_j^t(x) & \paren{\text{{\citep[eq. 23]{vinayak19a}}}}.
    \end{align*}
    Lemma 4.4 of \citet{vinayak19a} then gives us that 
    \begin{align} \label{eq:ctmj_bound}
        \abs{C(t,m,j)} & \leq (t+1) e^{m^2/t}  \mper
    \end{align}
    Next, 
    we choose $\hat{f}$ to be:
    \begin{align*}
            \hat{f} &=\sum_{j=0}^t b_j B_j^t(x) :=\sum_{m=0}^k a_m \paren{ \sum_{j=0}^t C(t,m,j) B_j^t(x)} = \sum_{m=0}^k a_m \tilde T_m(x) = f_k.
    \end{align*}
    Above, $b_j = \sum_{m=0}^k a_m C(t,m,j)$. Using the fact that $\abs{C(t,0,j)} \leq (t+1)$ alongside our global Chebyshev coefficient decay bound from \Cref{claim:sum_c_j_sq} we can bound each coefficients $b_j$ as follows:
    \begin{align}
        \abs{b_j} & \leq \abs{a_0 C(t,0,j)}  + \abs{\sum_{m=1}^k a_m C(t,m,j)} \nonumber \\
        & =  \abs{a_0 C(t,0,j)} + \abs{\sum_{m=1}^k m a_m \frac{C(t,m,j)}{m}} \nonumber \\
        & \leq (t+1) + \paren{\sum_{m=1}^k m^2 a_m^2}^{1/2} \cdot \paren{ \sum_{m=1}^k \frac{C(t,m,j)^2}{m^2} }^{1/2} \label{eq:cz-pop}  \\
        & \leq (t+1) + \paren{\sum_{m=1}^k m^2 a_m^2}^{1/2} \cdot \paren{ (t+1)^2 \sum_{m=1}^k \frac{e^{2m^2/t}}{m^2} }^{1/2} \label{eq:ctmj_bd_pop}  \\
        & \leq (t+1) + \paren{\sum_{m=1}^k m^2 a_m^2}^{1/2} \cdot  \paren{ (t+1)^2e^{2k^2/t} \sum_{m=1}^k \frac{1}{m^2} }^{1/2} \nonumber \\
        & \leq (t+1) +  \paren{\sum_{m=1}^k m^2 a_m^2}^{1/2} \cdot (t+1) \paren{e^{k^2/t} \sqrt{\frac{\pi^2}{6}}} \label{eq:1m2sum} \\
        & \leq (t+1) +  C_1' \cdot (t+1) \paren{e^{k^2/t} \sqrt{\frac{\pi^2}{6}}} \leq C_1 (t+1) e^{k^2/t} \label{eq:decay_shifted}\mcom
    \end{align}    
    where $C_1,C_1'>0$ are some absolute constants. Since $f$ is a $1$-Lipschitz function on $[0,1]$, we can let $\abs{f(x)} \leq 1/2$, as we can shift $f$ can such that its range is bounded between $[-1/2,1/2]$. It can be checked that this implies that $\abs{a_0} \leq 1$. \Cref{eq:cz-pop}  follows by combining the fact that $\abs{a_0}\leq 1$ with the bound on $\abs{C(t,0,j)}$, and Cauchy Schwarz inequality.
    \Cref{eq:ctmj_bd_pop} follows from the bound on $\abs{C(t,m,j)}$ in \Cref{eq:ctmj_bound}. \Cref{eq:1m2sum} follows from the fact that $\sum_{m=1}^{\infty} 1/m^2 \leq \pi^2/6$. Let $\sum_{m=0}^{\infty}c_m \tilde T_m(x)$ be the shifted Chebyshev series of $f$. Then, we know from \Cref{fact:jackson} that $\abs{a_m} \leq \abs{c_m}$, and from the global Chebyshev coefficient decay \Cref{claim:sum_c_j_sq} that $\sum_{m=1}^{\infty} m^2 c_m^2 \leq C_1'$, for some constant $C_1'$. This proves the first inequality of \Cref{eq:decay_shifted}.    

    \Cref{eq:decay_shifted} gives us the bound on the coefficients $\abs{b_j}$. Combined with the fact that $\norm{f-\hat f_k}_{\infty} \leq \bigo{1/k}$, the proposition follows. 
\end{proof}

We are now ready to prove our main theorem from this section. 

\begin{proof}[Proof of \Cref{thm:our_improvement}]
 For a $1$-Lipschitz function $f$, let $\hat f(x) = \sum_{j=0}^t b_j B_j^t (x)$ denote a degree $t$ Bernstein polynomial approximation to $f$. We use \Cref{eq:with_terms} to bound $W_1(p, \hat{p}_{\text{mle}})$. Specifically, by \Cref{prop:bernstein_bound}, there is a choice of $\hat f(x)$ which ensures that the (a) term can be bounded by  $2\norm{f-\hat f}_{\infty} \leq \bigo{1/k}$. Moreover, we will have that $\max_j \abs{b_j} \leq (t+1) e^{k^2/t}$, for $k <t$. We will set $k = \sqrt{t \log (N^c)}$ for a small constant $c>0$ to be chosen later. Note that since we require $k < t$ for \Cref{prop:bernstein_bound} to hold, doing so requires $t = \Omega(\log n)$. With this choice of $k$,  we have that $\max_j \abs{b_j} \leq (t+1) N^c$. We can then plug this coefficient bound into  \Cref{lem:termb_c} to show that, with probability $99/100$, and $3 \leq t \leq \sqrt{C_0 N}+2$, 
\begin{align*}
    W_1(p, \hat{p}_{\text{mle}}) & \leq \bigo{\frac{1}{k}} + \bigo{ \max_j \abs{b_j} \sqrt{\frac{1}{N}} } + \bigo{ \max_j \abs{b_j} \sqrt{ \frac{t}{2N} \log \frac{4N}{t} + \frac{1}{N} } } \\
    & \leq \bigo{\frac{1}{\sqrt{t \log N}}} + \bigo{ (t+1) N^c \sqrt{\frac{t}{N} \log N} } \mper\\ 
    & = \bigo{\frac{1}{\sqrt{t \log N}}} + \bigo{  \sqrt{\frac{t^3}{N^{1-2c}} \log N} } \mper
\end{align*}
For any target constant $\epsilon$, we can choose our constant $c$ so that $N^{\epsilon} = N^{c}\log N$. We can then check that, as long as 
 $t = \bigo{N^{1/4 - \epsilon}}$, $O\left(\sqrt{\frac{t^3}{N^{1-2c}} \log N}\right) = O\left(\sqrt{\frac{1}{t \log N}}\right)$, which proves the theorem. 
\end{proof}

\subsection{Conjectured Improvement}
\label{sec:conj}

\citet{vinayak19a} conjecture that the range of $t$ for which \Cref{thm:vinayak} holds can be improved. In particular, they conjecture that the bound on the coefficients in the proof of \Cref{prop:bernstein_bound} can be improved to $|C(t,m,j)| \leq e^{m^2/t}$ for $j=0,\ldots, t$. Moreover, they conjecture that the bound on (c) in \Cref{eq:with_terms} can be improved to $\bigo{\max_j \abs{b_j} \sqrt{\log(1/\delta)/N}}$. If these conjectures hold, the range of $t$ can be improved to $t \in \Brac{\Omega(\log N), \bigo{N^{2/3-\e}}}$. If we additionally include our improvement from \Cref{prop:bernstein_bound}, we would obtain a further improvement to $t \in \Brac{\Omega(\log N), \bigo{N^{1-\e}}}$.

We note that improving the upper limit on $t$ to $\bigo{N^{1-\e}}$ is essentially the best that we can hope for. In particular, there exist distributions that are $1/\sqrt{N}$ far away in $W_1$ distance that would need $N$ independent coins to distinguish between them, even if $t = \infty$. Consider two distributions with $q_1$ and $q_2$ such that $q_1$ has probability mass of $1/2 +1/\sqrt{N}$ on $0$ and $1/2 - 1/\sqrt{N}$ on $1$, and $q_2$ has probability mass of $1/2 -1/\sqrt{N}$ on $0$ and $1/2 + 1/\sqrt{N}$ on $1$. It is easy to see that $W_1(q_1, q_2) = 2/\sqrt{N}$. The coins drawn from $q_1$ or $q_2$ have biases of either $0$ or $1$. So, in this case, a single coin toss does not provide any less information than infinite coin tosses. By standard information-theoretic arguments \citep{karpKlienberg07}, $\Omega(1/N)$ independent samples are required to distinguish between $q_1$ and $q_2$ with probability greater than $1/2$. Accordingly, when $t = \Omega(N)$, we can no longer achieve error better than the $O(1/\sqrt{t} + 1/\sqrt{N})$ bound given by the naive estimator.

\section*{Acknowledgements}
We thank Gregory Valiant for suggesting us the work on populations of parameters. We thank Raphael Meyer for suggesting the lower bound on the number of matrix-vector multiplications required for spectral density estimation. We thank Tyler Chen for close proofreading and Gautam Kamath for helpful pointers to the literature. This work was partially supported by NSF Grants 2046235 and 2045590. 

\begingroup
\sloppy
\printbibliography
\endgroup

\appendix

\section{\texorpdfstring{Multivariate Generalization of \Cref{thm:master_thm}}{Multivariate Generalization of Theorem 1}} \label{sec:multivariate_master}
In this section, we generalize our \Cref{thm:master_thm} to $d$-dimensions. To prove this, we look at the Chebyshev series of multivariate functions. The Wasserstein-$1$ distance and its dual is analogously defined in $d$-dimensions. 

\begin{definition}[Wasserstein-$1$ Distance, Euclidean Metric]\label{def:w1_high}
   Let $p$ and $q$ be two distributions on $[-1,1]^d$. Let $Z(p,q)$ be the set of all couplings between $p$ and $q$, i.e., the set of distributions on $[-1,1]^d \times [-1,1]^d$ whose marginals equal $p$ and $q$. The Wasserstein-$1$ distance between $p$ and $q$ is:
   \[ W_1(p,q) = \inf_{z \in Z(p,q)} \Brac{ \E_{(x,y) \sim z} \norm{x-y}_2} \mcom \]
   where $\norm{x-y}_2$ denotes the Euclidean distance. 
\end{definition}
Like $W_1$ in $1$-dimension, the Wasserstein distance in $d$-dimension also measures the total cost (in terms of distance per unit mass) required to ``transport'' the distribution $p$ to $q$. Its dual form is as follows. 

\begin{fact}[Kantorovich-Rubinstein Duality in $d$-Dimensions] \label{fact:w1_dual_high} 
Let $p,q$ be as in \Cref{def:w1_high}. Then: 
\[ W_1(p,q) = \sup_{1\text{-Lipschitz, smooth }f} \int_{[-1,1]^d} f(x) \cdot (p(x)-q(x)) \rd x \mcom \]
where $f:[-1,1]^d \to \R$ is a smooth, $1$-Lipschitz function under the Euclidean metric. 
\end{fact}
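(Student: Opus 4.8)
The plan is to derive the identity in three steps: (i) invoke the general Kantorovich duality theorem for optimal transport with a continuous cost on a compact domain; (ii) specialize from arbitrary pairs of dual potentials to a single $1$-Lipschitz test function, using crucially that the cost $c(x,y) = \norm{x-y}_2$ is a metric; and (iii) upgrade from $1$-Lipschitz test functions to \emph{smooth} $1$-Lipschitz ones by mollification, exactly as was done in one dimension in the proof leading to \eqref{eq:our_w1}. For Step~(i): the domain $[-1,1]^d$ is compact and $c(x,y) = \norm{x-y}_2$ is continuous (hence bounded and lower semicontinuous), so the set of couplings $Z(p,q)$ is weak-$*$ compact, the infimum defining $W_1(p,q)$ is attained, and the classical Kantorovich duality theorem (see, e.g., Villani, \emph{Topics in Optimal Transportation}, Theorem~1.3) applies with no duality gap, giving
\[
W_1(p,q) \;=\; \sup_{\substack{\phi,\psi\ \text{continuous}\\ \phi(x)+\psi(y)\,\le\,\norm{x-y}_2}} \left[\int_{[-1,1]^d}\phi(x)\,p(x)\,\rd x + \int_{[-1,1]^d}\psi(y)\,q(y)\,\rd y\right].
\]

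For Step~(ii), given a feasible pair $(\phi,\psi)$ I would pass to the $c$-transform $\phi^c(y) := \inf_{x}\left(\norm{x-y}_2 - \phi(x)\right)$. As an infimum of functions that are each $1$-Lipschitz in $y$, the function $\phi^c$ is $1$-Lipschitz, and $\psi\le\phi^c$ pointwise, so replacing $\psi$ by $\phi^c$ only increases the objective (since $q\ge 0$); symmetrically one passes to $\phi^{cc}$. Using the triangle inequality for $\norm{\cdot}_2$ one checks the standard facts that $\phi^{cc}$ is $1$-Lipschitz, that $\phi\le\phi^{cc}$, and that $(\phi^{cc})^c = -\phi^{cc}$; hence the pair $(f,-f)$ with $f := \phi^{cc}$ is feasible and dominates $(\phi,\psi)$. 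Conversely, any $1$-Lipschitz $f$ gives a feasible pair $(f,-f)$ since $f(x)-f(y)\le\norm{x-y}_2$. Therefore the supremum in Step~(i) equals $\sup_{1\text{-Lipschitz }f}\int f(x)\,(p(x)-q(x))\,\rd x$.

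For Step~(iii), I would fix a $1$-Lipschitz $f:[-1,1]^d\to\R$, extend it to a $1$-Lipschitz function $\tilde f$ on all of $\R^d$ via the McShane--Whitney formula $\tilde f(x) = \inf_z\left(f(z)+\norm{x-z}_2\right)$, and convolve with a smooth mollifier $\rho_\eta$ supported in a radius-$\eta$ ball; then $f_\eta := \tilde f * \rho_\eta$ is smooth, still $1$-Lipschitz, and satisfies $\norm{f_\eta - f}_\infty \le \eta$ on $[-1,1]^d$. Since $p$ and $q$ are probability densities, $\int\abs{p-q}\le\int p + \int q = 2$, so
\[
\left|\int_{[-1,1]^d} f_\eta(x)\,(p(x)-q(x))\,\rd x - \int_{[-1,1]^d} f(x)\,(p(x)-q(x))\,\rd x\right| \le 2\eta,
\]
which tends to $0$ as $\eta\to 0$. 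Hence restricting the supremum to smooth $1$-Lipschitz functions does not change its value, which completes the argument.

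I expect the main obstacle to be Step~(ii): verifying that for a metric cost the double $c$-transform $\phi^{cc}$ is itself $1$-Lipschitz and that $-\phi^{cc}$ is its $c$-transform is precisely the point where the geometry of $\norm{\cdot}_2$ (via the triangle inequality), rather than an arbitrary cost, is used, and it is the technical heart of the Kantorovich--Rubinstein theorem. If one instead cites that theorem directly in the metric form, the only remaining work is the routine extension-and-mollification of Step~(iii), which is needed here solely because the paper's dual formulations are stated with smooth test functions.
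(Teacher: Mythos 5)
Your proof is correct. Note, though, that the paper does not actually prove \prettyref{fact:w1_dual_high}: it is stated as a known fact, exactly parallel to the one-dimensional \prettyref{fact:w1_dual}, with the restriction to \emph{smooth} test functions justified only by the brief remark (footnote to \prettyref{eq:our_w1}) that any Lipschitz function can be uniformly approximated by a smooth Lipschitz function and that $\|p-q\|_1 \le 2$. So your proposal supplies the details the paper leaves to the literature: Steps (i) and (ii) reconstruct the metric (Kantorovich--Rubinstein) form of duality from general Kantorovich duality via the $c$-transform argument, and your Step (iii) is essentially the paper's footnote made precise (McShane--Whitney extension plus mollification, giving $\|f_\eta - f\|_\infty \le \eta$ and an error at most $2\eta$ against the signed measure $p-q$). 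Your $c$-transform step is the standard one and is carried out correctly: $\phi^c$ is $1$-Lipschitz as an infimum of $1$-Lipschitz functions, $\psi \le \phi^c$, $\phi \le \phi^{cc}$, and since $\phi^c$ is $1$-Lipschitz one has $\phi^{cc} = -\phi^c$, so the improved pair is of the form $(f,-f)$ with $f = \phi^{cc}$ $1$-Lipschitz. The only thing to keep in mind is that $p$ and $q$ may be discrete (sums of Dirac masses), so the integrals and the bound $\int |p-q| \le 2$ should be read as statements about measures (total variation at most $2$); with that reading your argument goes through verbatim. In short: correct, and more self-contained than the paper, which simply cites the metric duality; if you instead cite the Kantorovich--Rubinstein theorem in its metric form, only your Step (iii) is genuinely needed, which is the route the paper implicitly takes.
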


\subsection{Multivariate Chebyshev Series}

 We use the fact that if $f:[-1,1]^d \to \R$ is smooth, it has a uniformly and absolutely convergent multivariate Chebyshev series \citep{Mason:1980}
\[ f(x) = \sum_{K \in \N^d} C_K T_K(x) \mcom \]
where for $x= (x_1,\dots,x_d)\in [-1,1]^d, K = (k_1,\dots,k_d) \in \N^d$, $T_K(x) = \prod_{i=1}^d T_{k_i}(x_i)$, and $C_K$ is the $K$-th Chebyshev coefficients of $f$, and $T_{k_i}(x_i)$ is the $k_i$-th Chebyshev polynomial of the first kind. First, we will note a few facts about the multivariate Chebyshev polynomials, which are easily derived using properties of the univariate Chebyshev polynomials. 

\begin{definition}[Chebyshev Polynomials in $d$ Dimensions]
    Let $x = (x_1,\dots,x_d) \in \R^d$, and let $K = (k_1,\dots,k_d) \in \N^d$. The $K$-th Chebyshev polynomial of the first kind is denoted by $T_K(x)$, and is defined as:
    \[ T_K(x) \seteq \prod_{i=1}^{d} T_{k_i}(x_i) \mcom \]
    where $T_{k_i}(x_i)$ is the $k_i$-th Chebyshev polynomial of the first kind in one dimension. Let $\wt(x)$ denote the \emph{weight function} defined as 
    \[ \wt(x) \seteq \prod_{i=1}^d \frac{1}{\sqrt{1-x_i^2}} \mper \]
\end{definition}

\begin{definition}[Inner Product in $d$-Dimensions]
    The inner product of two functions $f,g: [-1,1]^d \to \R$ is defined as:
    \[ \inprod{f,g} \seteq \int_{[-1,1]^d} f(x) g(x) \rd x \mper \]
\end{definition}

\begin{fact}[Orthogonality Property of Chebyshev Polynomials in $d$-Dimensions]
    Let $K_1,K_2 \in \N^d$. Let $\nnz{K}$ denote the number of non-zero entries in $K \in \N^d$. The higher dimensional Chebyshev polynomials satisfy the following orthogonality property:
    \[ \inprod{T_{K_1}, W \cdot T_{K_2}} = \int_{[-1,1]^d} T_{K_1}(x) T_{K_2}(x) \wt(x) \rd x = 
    \begin{cases}
         0 & \text{if } K_1 \neq K_2 \\ 
         \frac{\pi^d}{2^{\nnz{K_1}}} & \text{if } K_1 = K_2  
     \end{cases} \mper \]
\end{fact}

\begin{definition}[Normalized $d$-Dimensional Chebyshev Polynomials] \label{def:norm_multi_cheb}
    The normalized $d$-dimensional Chebyshev polynomial $T_K^d(x)$, for $K \in \N^d$ is defined as:
    \[ \T_K(x) \seteq \frac{T_K(x)}{\sqrt{T_K, W \cdot T_K}} = \sqrt{\frac{2^{\nnz{K}}}{\pi^d}} T_K(x)  \mper \]
\end{definition}

With the notations and definitions in place, we can now state the multivariate Jackson's theorem. The following theorem shows that the damped, truncated Chebyshev series of a smooth function is a good uniform multivariate polynomial approximation to the function.

\begin{restatable}[Multivariate Jackson's Theorem]{theorem}{jacksonhighdim} \label{thm:jackson_high_dim}
    Let $f: [-1,1]^d \to \R$ be an $\ell$-Lipschitz smooth function, and for $K\in \Np^d$, let $c_K = \inprod{f, W \cdot \T_K}$. Then the polynomial $\tilde f(x) = \sum_{K \in \set{0,\dots,2m-2}^d} \tilde c_K T_K(x)$ satisfies that 
    \[ \norm{\tilde f - f}_{\infty} \leq \frac{9 \ell d}{m}, \text{ and } \abs{\tilde c_K} \leq \abs{c_K} , \text{ for } K\in \N^d \mper\]
\end{restatable}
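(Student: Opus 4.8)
The plan is to build $\tilde f$ by applying the univariate Jackson smoothing operator one coordinate at a time, and to control the total error by a telescoping argument that exploits the fact that this operator does not expand the sup-norm — this is the crucial point that keeps the error dependence linear (not exponential) in $d$.

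\textbf{1D ingredient.} For a positive integer $m$, let $J_m$ be the univariate operator sending $g:[-1,1]\to\R$ to its damped truncated Chebyshev series $J_m g = \sum_{l=0}^{2m-2} b^{l}\,\inprod{g\cdot w,\T_l}\,\T_l$, where $b^{0},\dots,b^{2m-2}\in[0,1]$ are the Jackson damping coefficients (and $b^{l}:=0$ for $l\ge 2m-1$). I will use three properties: (i) $J_m g$ is a polynomial of degree at most $2m-2$; (ii) $J_m$ is convolution against a non-negative kernel of total mass one, hence a positive linear map that fixes constants, so $\norm{J_m g}_\infty\le\norm{g}_\infty$; (iii) if $g$ is $\ell$-Lipschitz then $\norm{g-J_m g}_\infty\le 9\ell/m$. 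Items (i) and (iii) are the $d=1$ case of the theorem; they are the standard Jackson-kernel bound, closely related to \Cref{fact:jackson} (tracking constants through the order-$m$ Jackson kernel gives degree $2m-2$ and the constant $9$). Item (ii) is the defining feature of the Jackson kernel, and is what will ultimately produce the linear-in-$d$ error.

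\textbf{Coordinatewise composition.} For $i\in\set{1,\dots,d}$ let $J_m^{(i)}$ apply $J_m$ in the $i$-th variable with the other $d-1$ variables held fixed as parameters; operators acting on distinct coordinates commute. Set $G_0=f$, $G_i=J_m^{(1)}\cdots J_m^{(i)}f$, and $\tilde f:=G_d$. By (i), $G_i$ has degree at most $2m-2$ in the $i$-th variable, and since the 1D coefficients produced by $J_m^{(i)}$ are integrals over $x_i$ they do not raise the polynomial degree in any other variable; hence $\tilde f=\sum_{K\in\set{0,\dots,2m-2}^d}\tilde c_K T_K$ as claimed. For the coefficient bound, Fubini lets me extract the multivariate coefficient $\inprod{\,\cdot\,,W\cdot\T_K}$ as an iterated univariate extraction $\prod_i\inprod{\,\cdot\,,w\cdot\T_{k_i}}$ in each coordinate, and orthonormality of the $\T_l$ shows that $J_m^{(i)}$ multiplies the $k_i$-th coordinate-coefficient by $b^{k_i}\in[0,1]$; composing over $i$ gives $\tilde c_K=\big(\prod_{i=1}^d b^{k_i}\big)c_K$, so $\abs{\tilde c_K}\le\abs{c_K}$ for every $K\in\N^d$.

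\textbf{Error telescoping and the main obstacle.} Writing $f-\tilde f=\sum_{i=1}^d(G_{i-1}-G_i)=\sum_{i=1}^d J_m^{(1)}\cdots J_m^{(i-1)}\big(I-J_m^{(i)}\big)f$, and noting that a Euclidean $\ell$-Lipschitz $f$ is $\ell$-Lipschitz in each coordinate separately, property (iii) applied in the $i$-th variable (uniformly over the others) gives $\norm{(I-J_m^{(i)})f}_\infty\le 9\ell/m$, while property (ii) applied to each of $J_m^{(1)},\dots,J_m^{(i-1)}$ shows the remaining prefactors do not increase the sup-norm; hence each of the $d$ summands has sup-norm at most $9\ell/m$, so $\norm{f-\tilde f}_\infty\le 9\ell d/m$. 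The step I expect to be the main obstacle is establishing property (ii) with the constant exactly $1$ — i.e., verifying that the damping weights of \Cref{fact:jackson} genuinely arise from a non-negative, mass-one kernel, so that $J_m$ is non-expansive in $\norm{\cdot}_\infty$ rather than merely bounded by the $O(1)$ Lebesgue constant of a damped projection. A bound $\norm{J_m g}_\infty\le C\norm{g}_\infty$ with $C>1$ would turn the telescoping into a sum $\sum_i C^{i-1}\cdot\frac{9\ell}{m}$, exponential in $d$, destroying the linear dependence; everything else (commutation, degree bookkeeping, and the Fubini factorization of coefficients) is routine once the right 1D operator is fixed.
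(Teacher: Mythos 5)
Your proposal is correct, and the object you construct is in fact the same polynomial as the paper's: composing the coordinatewise Jackson smoothings $J_m^{(1)}\cdots J_m^{(d)}$ is exactly convolution (in the $\theta$-domain) with the tensor-product kernel $b(\theta_1,\dots,\theta_d)=\prod_i b_1(\theta_i)$ that the paper uses in \Cref{def:Jackson}, and your coefficient identity $\tilde c_K=\bigl(\prod_i b^{k_i}\bigr)c_K$ matches the paper's damping $\hat b(K)/\hat b(\bm 0)$. The step you flag as the main obstacle — non-expansiveness of $J_m$ with constant exactly $1$ — is precisely what the paper's kernel supplies: $b_1(\theta)=\bigl(\sin(m\theta/2)/\sin(\theta/2)\bigr)^4\ge 0$, normalized to unit mass, so the operator is positive and fixes constants; no Lebesgue-constant loss occurs. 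Where the two arguments differ is in how the sup-norm error is assembled: you telescope $f-\tilde f=\sum_i J_m^{(1)}\cdots J_m^{(i-1)}(I-J_m^{(i)})f$ and invoke the univariate bound once per coordinate, whereas the paper (in \Cref{thm:high_jacksons}) bounds a single $d$-dimensional convolution directly, using $\norm{u}_2\le\norm{u}_1$ to factor the integral and the univariate first-moment estimate of \Cref{lem:thmc5} in each coordinate; both routes hinge on the same two kernel properties (positivity with unit mass, and the $8.06/m$ first-moment bound) and both give the linear-in-$d$ error. Your telescoping has the advantage of black-boxing the univariate theorem and making the ``no exponential blow-up in $d$'' mechanism transparent; the paper's direct estimate keeps explicit constants ($8.06\,\ell d/m$) and handles the periodic-to-Chebyshev transfer (evenness, Fourier-to-Chebyshev coefficient correspondence) in one place. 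One small caution, which you already anticipate: the constant $9$ for the univariate operator at degree $2m-2$ must be taken from the kernel-level first-moment bound (as in \Cref{lem:thmc5}), not from the black-box statement of \Cref{fact:jackson}, which at degree $k=2m-2$ would only give $9\ell/(m-1)$.
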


We now prove the theorem in \Cref{sec:proof_jackson_high_dim}. With the high dimensional Jackson's theorem in place, we now prove the multivariate global Chebyshev coefficient decay lemma.

\begin{lemma}[Multivariate Global Chebyshev Coefficient Decay] \label{lem:cheb_coeff_high}
    Let $f:[-1,1]^d \to \R$ be a smooth, $\ell$-Lipschitz function. For $K \in \N^d$, let $c_K = \inprod{f, W \cdot \T_K}$. Then, we have that
    \[ \sum_{K \in \N^d}\norm{K}^2_2 c_K^2 \leq d \ell^2 \frac{\pi^d}{2} \mper \]
\end{lemma}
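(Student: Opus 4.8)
The plan is to reduce the multivariate bound to $d$ applications of the one-dimensional argument behind \Cref{claim:sum_c_j_sq}, one for each coordinate. Since $\norm{K}_2^2 = \sum_{i=1}^d k_i^2$, linearity of summation gives $\sum_{K\in\N^d}\norm{K}_2^2 c_K^2 = \sum_{i=1}^d \sum_{K\in\N^d} k_i^2 c_K^2$, so it suffices to prove the per-coordinate bound $\sum_{K\in\N^d} k_i^2 c_K^2 \leq \ell^2 \pi^d/2$ for each fixed $i$ and then add the $d$ resulting inequalities.

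Fixing $i$, I would first write $f$ in the orthonormal multivariate basis, $f = \sum_{K\in\N^d} c_K \T_K$ with $\T_K(x) = \prod_{j=1}^d \T_{k_j}(x_j)$ a product of normalized univariate Chebyshev polynomials (\Cref{def:norm_multi_cheb}), and differentiate term by term with respect to $x_i$. Only the $i$-th factor is affected, so using $T_m'(x) = m\,U_{m-1}(x)$ from \Cref{fact:tj_uj} (and $\T_0' = 0$) this yields $\partial_{x_i} f(x) = \sqrt{2/\pi}\sum_{K : k_i \geq 1} k_i\, c_K\, U_{k_i-1}(x_i)\prod_{j\neq i}\T_{k_j}(x_j)$, where $U_m$ is the degree-$m$ Chebyshev polynomial of the second kind.

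Next I would pair this expansion against the mixed weight $\wt_i(x) \defeq u(x_i)\prod_{j\neq i} w(x_j)$, where $u(x) = \sqrt{1-x^2}$ and $w(x) = 1/\sqrt{1-x^2}$ are the univariate Chebyshev weights appearing in \Cref{fact:2orth} and \Cref{fact:cheb_prop}; note $\wt_i \geq 0$ on the open cube. Expanding $(\partial_{x_i} f)^2$ and integrating against $\wt_i$, the one-dimensional orthogonality relations $\int_{-1}^1 U_a(x)U_b(x)\,u(x)\,dx = \frac{\pi}{2}\delta_{ab}$ and $\int_{-1}^1 \T_a(x)\T_b(x)\,w(x)\,dx = \delta_{ab}$ kill every cross term, leaving exactly $\int_{[-1,1]^d} (\partial_{x_i} f(x))^2\,\wt_i(x)\,dx = \frac{2}{\pi}\sum_{K:k_i\geq1} k_i^2 c_K^2 \cdot \frac{\pi}{2} = \sum_{K\in\N^d} k_i^2 c_K^2$. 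Since $f$ is $\ell$-Lipschitz in the Euclidean metric we have $\abs{\partial_{x_i} f(x)} \leq \norm{\nabla f(x)}_2 \leq \ell$ pointwise, and $\wt_i \geq 0$, so the integral is at most $\ell^2 \int_{[-1,1]^d}\wt_i(x)\,dx = \ell^2 \paren{\int_{-1}^1 u(x)\,dx}\paren{\int_{-1}^1 w(x)\,dx}^{d-1} = \ell^2 \cdot \frac{\pi}{2}\cdot \pi^{d-1} = \ell^2\frac{\pi^d}{2}$. Summing over $i = 1,\dots,d$ gives $\sum_{K\in\N^d}\norm{K}_2^2 c_K^2 \leq d\,\ell^2\,\pi^d/2$, and the $d=1$ case recovers \Cref{claim:sum_c_j_sq} exactly.

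The one step that needs care, and which I expect to be the main (if minor) technical point, is the justification of term-by-term differentiation of the multivariate Chebyshev series: this is legitimate because the Chebyshev coefficients of a smooth $f$ decay faster than any polynomial. If one prefers to avoid differentiating the series, the same bound follows by a Bessel-inequality argument: the functions $\phi_K(x) \defeq \sqrt{2/\pi}\,U_{k_i-1}(x_i)\prod_{j\neq i}\T_{k_j}(x_j)$ (for $k_i \geq 1$) are orthonormal with respect to $\wt_i$, and integrating by parts in $x_i$ — the boundary terms vanish since $u$ vanishes at $\pm 1$, and $\frac{d}{dx}\paren{U_{m-1}(x)\sqrt{1-x^2}} = -mT_m(x)/\sqrt{1-x^2}$ — shows that the $\phi_K$-coefficient of $\partial_{x_i} f$ in $L^2(\wt_i)$ equals $k_i c_K$; Bessel's inequality then gives $\sum_{K:k_i\geq1}(k_i c_K)^2 \leq \int_{[-1,1]^d}(\partial_{x_i} f)^2\wt_i \leq \ell^2\pi^d/2$, and terms with $k_i = 0$ contribute nothing. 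Everything else is routine bookkeeping with the product structure and the univariate facts already established in the paper.
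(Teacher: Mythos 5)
Your proof is correct and follows essentially the same route as the paper's: differentiate the Chebyshev series in each coordinate $x_i$, integrate $(\partial_{x_i} f)^2$ against the mixed weight $u(x_i)\prod_{j\neq i} w(x_j)$, use orthogonality of the second-kind and first-kind polynomials to extract $\sum_K k_i^2 c_K^2$, bound the partial derivative pointwise by $\ell$ via the Lipschitz condition, and sum over $i$. The only differences are cosmetic (you work directly with normalized coefficients rather than converting at the end, and you add a justification of term-by-term differentiation plus a Bessel-inequality alternative), so no further changes are needed.
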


\begin{proof}
    Let $f:[-1,1]^d\to \R$ be a smooth, $\ell$-Lipschitz function, with Chebyshev series 
    \[ f(x) = \sum_{K \in \N^d} \coeff_K T_K(x)  \mper \]
    Let $K=(k_1,\dots,k_d) \in \N^d$. Since $f$ is $\ell$-Lipschitz, it follows that $\norm{\nabla f}_2^2 \leq \ell^2$. Consequently, for $i \in [d]$, we have:
    \[ \paren{\frac{\partial f}{\partial x_i}}^2 (x') \leq \ell^2 \text{ at any } x' \in [-1,1]^d \mper \]
    Therefore, we get that for $x = (x_1,\dots,x_d) \in [-1,1]^d$,

    \begin{equation}\label{eq:lhs_rhs}
        \sum_{i=1}^d \int_{[-1,1]^d} \paren{\frac{\partial f}{\partial x_i}}^2 \cdot \frac{\sqrt{1-x_i^2}}{\prod_{j \neq i \in [d]} \sqrt{1-x_j^2}} \rd x \leq d \ell^2 \frac{\pi^d}{2} \mper
    \end{equation}

    The upper bound follows from the fact that $0 \leq \paren{\frac{\partial f}{\partial x_i}}^2 \leq \ell^2$, $\int_{-1}^{1} \sqrt{1-x_i^2} = \pi/2$, and that $\int_{-1}^{1} 1/\sqrt{1-x_i^2} = \pi$.
    We multiply $\paren{\frac{\partial f}{\partial x_i}}^2$ by $\frac{\sqrt{1-x_i^2}}{\prod_{j \neq i \in [d]} \sqrt{1-x_j^2}}$ and integrate from $[-1,1]^d$ to exploit the orthogonality property of the Chebyshev polynomials of the first and second kind.

    We now evaluate the LHS of \Cref{eq:lhs_rhs}. Computing the gradient, we have from \Cref{fact:tj_uj} that for $i \in [d]$: 
    \[ \frac{\partial f}{\partial x_i} = \sum_{k_i=0}^{\infty} \sum_{(k_1,\dots,k_{i-1},k_{i+1}\dots,k_d)\in \N^d} k_i C_{(k_1,\dots,k_i,\dots,k_d)} \prod_{j \neq i \in [d]} T_{k_j}(x_j)  U_{k_i-1}(x_i) \mper \]
    We consider the square of the above expression. The orthogonality property of Chebyshev polynomials ensures that only the squared terms contribute non-zero values to the integral in \Cref{eq:lhs_rhs}:
    \begin{align}
         & \sum_{k_i=0}^{\infty} \sum_{(k_1,\dots,k_{i-1},k_{i+1}\dots,k_d)\in \N^{d-1}}  k_i^2 C^2_{(k_1,\dots,k_i,\dots,k_d)} \prod_{j \neq i \in [d]} (T_{k_j}(x_j))^2  (U_{k_i-1}(x_i))^2 \label{eq:partial_f2} \mper
    \end{align}
     Using the above equations, we evaluate the LHS of \Cref{eq:lhs_rhs} and get that
    \[  \sum_{i=1}^d \int_{[-1,1]^d} \paren{\frac{\partial f}{\partial x_i}}^2 \cdot \frac{\sqrt{1-x_i^2}}{\prod_{j \neq i \in [d]} \sqrt{1-x_j^2}} \rd x  = \sum_{K=(k_1,\dots,k_d)\in \N^{d}} \frac{\pi^d}{2^{\nnz{K}}} \norm{K}^2 \coeff_{K}^2~,  \]
    by using the orthogonality property of Chebyshev polynomials of the first and second kind and by inspecting the \Cref{eq:partial_f2}.
    Therefore, combining above with \Cref{eq:lhs_rhs}, we get that
    \[ \sum_{K=(k_1,\dots,k_d)\in \N^{d}} \frac{\pi^d}{2^{\nnz{K}}} \norm{K}^2 \coeff_{K}^2 \leq d \ell^2 \frac{\pi^d}{2} \]
    Note that these coefficients are not normalized. To get the normalized Chebyshev coefficients, we use the fact that $\T_K = \sqrt{\frac{2^{\nnz{K}}}{\pi^d}} T_K$. We let $f(x) = \sum_{K \in N^d} c_K \T_K(x)$, which yields
    \[ \sum_{K \in \N^d}\norm{K}^2 c_K^2 \leq d \ell^2 \frac{\pi^d}{2} \mper \qedhere \]
\end{proof}

\subsection{\texorpdfstring{Proof of Multivariate Generalization of \Cref{thm:master_thm}}{Proof of Multivariate Generalization of Theorem 1}}
With the multivariate Jackson's theorem and the global Chebyshev coefficient decay lemma in place, we can now prove the multivariate version of our main theorem.

\begin{theorem}\label{thm:master_thm_high}
    Let $p,q$ be distributions supported on $[-1,1]^d$. For any $K \in \N^d$, if the distributions' normalized Chebyshev moments satisfy
\begin{align}
\label{eq:master_require_general_high}
\sum_{K \in \set{0,\dots,m}^d \setminus \bm 0} \frac{1}{\norm{K}_2^2}\left(\E_{x\sim p}\T_K(x) - \E_{x\sim q}\T_K(x)\right)^2 \leq \Gamma^2, 
\end{align}
where $\bm 0 = (0,\dots,0) \in \R^d$, then, for an absolute constant $c$,
\begin{align}
\label{eq:master_ensure_high}
    W_1(p,q) \leq \frac{cd}{m} + \sqrt{\frac{d\pi^d}{2}}\Gamma.
\end{align}

\end{theorem}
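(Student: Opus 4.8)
The plan is to follow the same route as the proof of \Cref{thm:master_thm}, substituting the two multivariate ingredients already in hand: \Cref{thm:jackson_high_dim} in place of \Cref{fact:jackson}, and \Cref{lem:cheb_coeff_high} in place of \Cref{claim:sum_c_j_sq}. First I would make the same routine reductions as in one dimension: shrink the supports of $p$ and $q$ to $[-1+\delta,1-\delta]^d$ with $\delta\to 0$ so the weight function $W$ stays bounded, pass from an arbitrary $1$-Lipschitz test function to a smooth one, and invoke \Cref{fact:w1_dual_high} to reduce the problem to bounding $\inprod{f,p-q} = \int_{[-1,1]^d} f(x)(p(x)-q(x))\,dx$ over all smooth $1$-Lipschitz $f$.

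Next I would apply \Cref{thm:jackson_high_dim} with truncation parameter $m' = \lfloor m/2\rfloor + 1$, which yields a damped truncated multivariate Chebyshev series $\tilde f$, supported on multi-indices in $\set{0,\dots,2m'-2}^d\subseteq\set{0,\dots,m}^d$, whose coefficients $\tilde c_K$ (in the normalized basis) satisfy $|\tilde c_K|\le|c_K|$ with $c_K=\inprod{f,W\cdot\T_K}$, and with $\norm{f-\tilde f}_\infty\le 9d/m'\le 18d/m$. Splitting $\inprod{f,p-q}=\inprod{\tilde f,p-q}+\inprod{f-\tilde f,p-q}$ and using $\norm{p-q}_1\le\norm{p}_1+\norm{q}_1=2$, the tail term contributes at most $36d/m$, so everything reduces to bounding $\inprod{\tilde f,p-q}$.

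For that, I would expand $(p-q)/W$ in the multivariate Chebyshev basis, $\frac{p-q}{W}=\sum_{K\neq\mathbf 0}\inprod{p-q,\T_K}\T_K$ (the $K=\mathbf 0$ term dropping out since $p$ and $q$ are distributions), and use orthogonality of the $\T_K$ against $W$ to get $\inprod{\tilde f,p-q}=\sum_{\mathbf 0\neq K\in\set{0,\dots,m}^d}\tilde c_K\inprod{p-q,\T_K}$. Inserting the weights $\norm{K}_2$ and applying Cauchy--Schwarz gives
\[
\inprod{\tilde f,p-q}\le\Big(\sum_{K\neq\mathbf 0}\norm{K}_2^2\,\tilde c_K^2\Big)^{1/2}\Big(\sum_{\mathbf 0\neq K\in\set{0,\dots,m}^d}\frac{\inprod{p-q,\T_K}^2}{\norm{K}_2^2}\Big)^{1/2}.
\]
I would bound the first factor by $\sqrt{d\pi^d/2}$ using $|\tilde c_K|\le|c_K|$ together with \Cref{lem:cheb_coeff_high} at $\ell=1$; the second factor, noting that $\inprod{p-q,\T_K}=\E_{x\sim p}\T_K(x)-\E_{x\sim q}\T_K(x)$, is a sub-sum of the left-hand side of \eqref{eq:master_require_general_high}, hence at most $\Gamma$. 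Combining everything gives $W_1(p,q)\le 36d/m+\sqrt{d\pi^d/2}\,\Gamma$, which is \eqref{eq:master_ensure_high} with $c=36$; as in one dimension, this constant can likely be improved.

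The main obstacle is not this assembly but the two multivariate inputs — above all the constructive multivariate Jackson theorem, \Cref{thm:jackson_high_dim}, which I would prove separately (in \Cref{sec:proof_jackson_high_dim}) by damping a truncated product Chebyshev series and controlling the approximation error coordinate by coordinate. Within the argument itself, the only delicate points are bookkeeping ones: ensuring the Jackson truncation cube lands inside the index set $\set{0,\dots,m}^d$ used in the hypothesis (handled by the factor-of-two rescaling of the truncation parameter and absorbed into $c$), and propagating the $2^{\nnz{K}}$ normalization factors carefully so that it is precisely the dimension-dependent constant $\sqrt{d\pi^d/2}$ output by \Cref{lem:cheb_coeff_high} — and not some worse power of $d$ or $\pi$ — that multiplies $\Gamma$ in the final bound.
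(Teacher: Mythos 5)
Your proposal is correct and follows essentially the same route as the paper's own proof: the multivariate Jackson theorem plus the multivariate global coefficient-decay lemma, the same split of $\inprod{f,p-q}$, the same expansion of $(p-q)/W$, and the same weighted Cauchy--Schwarz yielding $\sqrt{d\pi^d/2}\,\Gamma + O(d/m)$. Your explicit reparametrization $m'=\lfloor m/2\rfloor+1$ to place the Jackson truncation cube inside $\set{0,\dots,m}^d$ is in fact slightly more careful than the paper, which applies \Cref{thm:jackson_high_dim} with the index set $\set{0,\dots,m}^d$ directly and absorbs the mismatch into the constant $c$.
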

\begin{proof}
    By \Cref{fact:w1_dual_high}, to bound $W_1(p,q)$, it suffices to bound $\langle{f, p-q\rangle}$ for any $1$-Lipschitz, smooth $f$. Let $f_m$ be the approximation to any such $f$ guaranteed by \Cref{thm:jackson_high_dim}. We have:
\begin{align}
\label{eq:ip_split_high}
    \inprod{f, p-q} = \inprod{f_m, p-q} + \inprod{f-f_m, p-q} &\leq \inprod{f_m, p-q} + \|f - f_m\|_\infty\|p-q\|_1 \nonumber\\ &\leq \inprod{f_m, p-q} + \frac{36 d}{m}.
\end{align}
In the last step, we use that $\|f - f_m\|_\infty\leq 18  d/m$ by \Cref{thm:jackson_high_dim}, and that $\|p-q\|_1 \leq \|p\|_1 + \|q\|_1 = 2$.
So, to bound $\inprod{f, p-q}$, we turn our attention to bounding $\inprod{f_m, p-q}$. 

For technical reasons, we will assume from here on that $p$ and $q$ are supported on the interval $[-1+\delta,1-\delta]^d$ for arbitrarily small $\delta \rightarrow 0$. This is to avoid an issue with the Chebyshev weight function $W(x) = \prod_{i=1}^d {1}/{\sqrt{1-x_i^2}}$, for $x = (x_1,\dots,x_d)$ going to infinity at $x_i =-1,1$. The assumption is without loss of generality since we can rescale the support of $p$ and $q$ by a $(1-\delta)$ factor, and the distributions' moments and Wasserstein distance change by an arbitrarily small factor as $\delta \rightarrow 0$.

We proceed by writing the Chebyshev series of the function $(p-q)/W$:
\begin{align}
\label{eq:pq_cheby_series_high}
    \frac{p-q}{W} = \sum_{K \in \N^d}^\infty \inprod{\frac{p-q}{W} \cdot W, \T_K } \T_K = \sum_{K \in \N^d} \langle p-q,\T_K\rangle \cdot  \T_K = \sum_{K \neq \bm 0, K \in \N^d}^\infty \langle p-q,\T_K\rangle\cdot  \T_K,
\end{align}
where $\bm 0 = (0,\dots,0) \in \R^d$.
In the last step we use that both $p$ and $q$ are distributions so $\inprod{ p-q, \T_{\bm 0}} = 0$.

Next, recall from \Cref{thm:jackson_high_dim} that $f_m = \sum_{K \in \set{0,\dots,m}^d} \tilde c_K \T_K$, where each $\tilde c_K$ satisfies $|\tilde c_K| \leq |c_K|$ for $c_K \defeq \langle f \cdot W, \T_K\rangle$.
Using \eqref{eq:pq_cheby_series_high}, the fact that $\langle { \T_K\cdot w, \T_{K'}}\rangle  = 0$ whenever $K\neq K'$, and that $\langle { \T_K\cdot W, \T_K}\rangle = 1$ for all $K$, we have:
\begin{align*}
\inprod{ f_m, p-q} = \inprod{ f_m\cdot W, \frac{p-q}{W}} 
& = \inprod{ \sum_{K \in \set{0,\dots,m}^d} \tilde c_K \T_K \cdot W, \sum_{K \neq \bm 0, K \in \N^d} \langle p-q,\T_j\rangle\T_j} \\
& = \sum_{K \in \set{0,\dots,m}^d \setminus \bm 0} \tilde c_K \cdot \langle p-q,\T_K\rangle. 
\end{align*}
Via Cauchy-Schwarz inequality and our high-dimensional global decay bound from \Cref{lem:cheb_coeff_high}, we then have: 
\begin{align} 
\inprod{ f_m, p-q}  & = \sum_{K \in \set{0,\dots,m}^d \setminus \bm 0} \norm{K}_2 \tilde c_K \cdot \frac{\langle p-q,\T_K\rangle}{\norm{K}_2} \nonumber \\
&\leq \left(\sum_{K \in \set{0,\dots,m}^d \setminus \bm 0} \norm{K}_2^2 \tilde c_K^2 \right)^{1/2}\cdot \left(\sum_{K \in \set{0,\dots,m}^d \setminus \bm 0} \frac{1}{\norm{K}_2^2} \langle p-q,\T_K\rangle^2\right)^{1/2} \nonumber\\
&\leq \left(\sum_{K \in \set{0,\dots,m}^d \setminus \bm 0} \norm{K}_2^2 c_K^2 \right)^{1/2}\cdot \left( \sum_{K \in \set{0,\dots,m}^d \setminus \bm 0} \frac{1}{\norm{K}_2^2} \langle p-q,\T_j\rangle^2\right)^{1/2} \nonumber\\
&\leq \sqrt{\frac{d \pi^d}{2}}\left( \sum_{K \in \set{0,\dots,m}^d \setminus \bm 0} \frac{1}{\norm{K}_2^2} \langle p-q,\T_K\rangle^2\right)^{1/2}. \label{eq:cauchy_schwarz_high} 
\end{align}
We can apply the assumption of the theorem, \eqref{eq:master_require_general_high}, to upper bound \eqref{eq:cauchy_schwarz_high} by $\Gamma$. 

Plugging this bound into \Cref{eq:ip_split_high}, we conclude the main bound of \Cref{thm:master_thm_high}: 
\begin{equation*}
W_1(p,q) = \sup_{1\text{-Lipschitz, smooth } f} \langle f, p-q\rangle \leq \sqrt{\frac{d \pi^d}{2}}\Gamma + \frac{36 d}{m}. \qedhere
\end{equation*}
\end{proof}

\begin{remark}[Efficient Recovery in $d$ Dimensions]
    We note that given sufficiently accurate Chebyshev moments, we can back out a distribution close to the original distribution in Wasserstein-$1$ distance. The \Cref{alg:weighted_regression} immediately generalizes to the $d$-dimensional setting; see \Cref{sec:eff_recover} for the details in $1$ dimension. We leave the details to the reader.
\end{remark}

We now give a constructive proof of the multivariate Jackson's theorem.

\subsection{Proof of Multivariate Jackson's Theorem}\label{sec:proof_jackson_high_dim}
We extend the $1$-dimensional constructive proof of Jackson's theorem in \citet{BravermanKrishnanMusco:2022} to $d$ dimensions. To prove the multivariate Jackson's theorem, we will use Fourier analysis. We first define the Fourier series of a function in $d$ dimensions. We start with a few standard preliminary definitions found in any standard text on Fourier analysis, such as \citet{stein2011fourier}.

Let $f: \R^d \to \R$ be an $\ell$-Lipschitz function, i.e., $\abs{f(x)-f(y)} \leq \ell \norm{x-y}_2$ $\forall x,y \in \R^d$. We say that $f \in L^2(\brac{-\pi,\pi}^d)$ if $\int_{\brac{-\pi,\pi}^d} \abs{f(x)}^2 \rd x < \infty$. 

\begin{definition}[Periodic Function]
    A function $f: \R^d \to \R$ is $2 \pi$ periodic if $f(x) = f(x + 2\pi K)$ for all $x \in \R^d$ and $K \in \Z^d$. Formally, this is known as coordinate-wise periodic, but we will refer to it as periodic for simplicity.
\end{definition}

\begin{definition}[Even Function] \label{def:even_high_dimn}
    Let $x = (x_1,\dots,x_d) \in \R^d$. The function $f: \R^d \to \R$ is even if $f(x_1,\dots,x_d) = f(\abs{x_1},\dots,\abs{x_d})$ for all $x \in \R^d$.
\end{definition}

\begin{definition}[Fourier Series]
    Let $f \in L^2(\brac{-\pi,\pi}^d)$ be a $2\pi$ periodic function. The function $f$ can be written via a Fourier series as:
    \begin{align*}
        f(x) = \sum_{K \in \Z^d} \hat{f}(K) e^{\im \inprod{k,x}}, \text{ where } \hat{f}(K) = \frac{1}{(2\pi)^d} \int_{\brac{-\pi,\pi}^d} f(x) e^{-\im \inprod{K,x}}  \rd x \mcom
    \end{align*}
    and $\im = \sqrt{-1}$. For $K \in \Z^d$, $\hat{f}(K)$ is called the Fourier coefficient of $f$. 
\end{definition}

\begin{claim}[Convolution Theorem]\label{clm:convolution}
    Let $f,g \in L^2([-\pi,\pi]^d)$ be $2\pi$-periodic functions with Fourier coefficients $\set{\hat{f}(K)}_{K \in \Z^d}$ and $\set{\hat{g}(K)}_{K \in \Z^d}$ respectively. Let $h$ be their convolution: 
    \[ h(x) \seteq [f * g](x) = \int_{[-\pi,\pi]^d} f(u) g(x-u) \rd u \mper \]
    The Fourier coefficients of $h$, $\set{\hat{h}(K)}_{K \in \Z^d}$, are given by:
    \[ \hat{h}(K) = (2\pi)^d \cdot  \hat{f}(K) \hat{g}(K) \mper \]
\end{claim}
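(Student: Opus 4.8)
The final statement to prove is the Convolution Theorem (Claim~\ref{clm:convolution}): for $2\pi$-periodic $f,g \in L^2([-\pi,\pi]^d)$, the convolution $h = f*g$ has Fourier coefficients $\hat h(K) = (2\pi)^d \hat f(K)\hat g(K)$.

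\textbf{Plan of proof.} The approach is the standard one: write out the defining integral for $\hat h(K)$, substitute the definition of the convolution, interchange the order of integration by Fubini's theorem (justified by the $L^2$ (hence $L^1$ on the compact torus) hypotheses), and then exploit the $2\pi$-periodicity of $g$ to make a change of variables that decouples the two integrals. Concretely, I would proceed as follows.

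\textbf{Step 1.} By definition,
\[
\hat h(K) = \frac{1}{(2\pi)^d}\int_{[-\pi,\pi]^d} h(x)\, e^{-\im\inprod{K,x}}\, dx = \frac{1}{(2\pi)^d}\int_{[-\pi,\pi]^d}\left(\int_{[-\pi,\pi]^d} f(u)\, g(x-u)\, du\right) e^{-\im\inprod{K,x}}\, dx.
\]
Since $f,g \in L^2$ on the compact domain $[-\pi,\pi]^d$, they are in $L^1$, and the double integral of $|f(u)||g(x-u)|$ is finite (by Young's inequality, or directly by Cauchy--Schwarz after periodicity), so Fubini's theorem applies and we may swap the order of integration:
\[
\hat h(K) = \frac{1}{(2\pi)^d}\int_{[-\pi,\pi]^d} f(u)\left(\int_{[-\pi,\pi]^d} g(x-u)\, e^{-\im\inprod{K,x}}\, dx\right) du.
\]

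\textbf{Step 2.} In the inner integral, substitute $v = x - u$, so $x = v+u$ and $e^{-\im\inprod{K,x}} = e^{-\im\inprod{K,u}} e^{-\im\inprod{K,v}}$. As $x$ ranges over $[-\pi,\pi]^d$, $v$ ranges over a translate of $[-\pi,\pi]^d$; but the integrand $g(v)e^{-\im\inprod{K,v}}$ is $2\pi$-periodic in $v$ (both factors are), so the integral over any translated fundamental domain equals the integral over $[-\pi,\pi]^d$ itself. Hence the inner integral equals $e^{-\im\inprod{K,u}} \int_{[-\pi,\pi]^d} g(v)\, e^{-\im\inprod{K,v}}\, dv = e^{-\im\inprod{K,u}} \cdot (2\pi)^d \hat g(K)$.

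\textbf{Step 3.} Plugging this back in,
\[
\hat h(K) = \frac{1}{(2\pi)^d}\int_{[-\pi,\pi]^d} f(u)\, e^{-\im\inprod{K,u}}\, (2\pi)^d \hat g(K)\, du = (2\pi)^d \hat g(K)\cdot\frac{1}{(2\pi)^d}\int_{[-\pi,\pi]^d} f(u)\, e^{-\im\inprod{K,u}}\, du = (2\pi)^d \hat f(K)\hat g(K),
\]
which is the claimed identity. \emph{The only genuinely delicate point} is the justification of the Fubini interchange together with the periodicity-invariance of the inner integral; both are routine given the compact torus and $L^2 \subseteq L^1$, so there is no real obstacle here — it is a bookkeeping exercise. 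I would state the periodicity-translation-invariance of integrals over fundamental domains as a one-line remark rather than belabor it.
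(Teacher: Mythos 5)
Your proof is correct, and it is the standard argument: expand $\hat h(K)$ via the definition, interchange the integrals by Fubini (valid since $L^2([-\pi,\pi]^d)\subseteq L^1$ on the compact cube), substitute $v=x-u$, and use $2\pi$-periodicity of $g(v)e^{-\im\inprod{K,v}}$ to translate the inner integral back to the fundamental domain. The paper itself gives no proof of this claim, treating it as a standard fact from Fourier analysis (citing Stein--Shakarchi), so your write-up simply supplies the routine verification the paper omits; nothing is missing.
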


We now build a multivariate version of the Jackson kernel, a key ingredient in the proof of the multivariate Jackson's theorem. \citet{BravermanKrishnanMusco:2022} define the Jackson kernel in one dimension, which we generalize to $d$-dimensions by just multiplying the one-dimensional Jackson kernel in each dimension.

\begin{definition}[Jackson Kernel]\label{def:Jackson}
    For $x_i \in \R$, $m \in \Np$, let $b_1: \R \to \R$ be the following function:
    \[ b_1(x_i) \seteq \paren{\frac{\sin(mx_i/2)}{\sin(x_i/2)}}^4 = 
    \sum_{k_1 = -2m+2}^{2m-2} \hat{b_1}(k_1) e^{\im k_1 x_i} \mcom \]
    where the Fourier coefficients $\hat{b_1}(-2m+2),\dots,\hat{b_1}(0),\dots, \hat{b_1}(2m-2)$ equals to:
    \begin{equation}\label{eq:1d_coeff}
        \hat{b_1}(-k_1) = \hat{b_1}(k_1) = \sum_{t = -m}^{m-k_1} (m-\abs{t}) \cdot (m - \abs{t + k_1}) \quad \text{for } k_1 = 0,\dots,2m-2 \mper
    \end{equation}
    Note that $ \hat{b_1}(0) \geq \dots \geq \hat{b_1}(2m-2)$. Let $\bm{0} = (0,\dots,0) \in \R^d$, $x = (x_1,\dots,x_d) \in \R^d$, and $b$ be the following trigonometric polynomial:
    \[ b(x_1,\dots,x_d) \seteq \prod_{i=1}^{d} b_1(x_i) = \prod_{i=1}^{d} \paren{\frac{\sin(mx_i/2)}{\sin(x_i/2)}}^4 = \sum_{K \in \set{ -2m+2,\dots,2m-2 }^d} \hat{b}(K) e^{\im \inprod{K,x}}  \mper \]
    From \Cref{eq:1d_coeff}, we have that $\hat{b}(\bm{0}) \geq \hat{b}(K)$, for $K \neq \bm{0}$.
\end{definition}

We also need the following fact from \citet{BravermanKrishnanMusco:2022} about Jackson's kernel.

\begin{fact}[{\citep[Theorem C.5]{BravermanKrishnanMusco:2022}}] \label{lem:thmc5}
For $x_i \in \R$, $m \in \Np$, the one-dimensional Jackson's Kernel $b_1$, defined in \Cref{def:Jackson}, satisfies the following
\[ \frac{\int_{0}^{\pi} x_i b_1(x_i) \rd x_i }{\int_{0}^{\pi} b_1(x_i) \rd x_i} \leq \frac{8.06}{m} \mper \]
    
\end{fact}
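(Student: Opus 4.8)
The quantity on the left is the normalized first moment (center of mass on $[0,\pi]$) of the Jackson kernel $b_1(x)=\left(\sin(mx/2)/\sin(x/2)\right)^4$. My plan is to compute the denominator $\int_0^\pi b_1$ in closed form, bound the numerator $\int_0^\pi x\,b_1(x)\,dx$ from above by a clean multiple of $m^2$, and divide; the stated constant $8.06$ will emerge as a rounding of $\frac{3\pi^3\ln 2}{8}\approx 8.059$.

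\emph{Denominator.} Use the Dirichlet-type identity $\frac{\sin(mx/2)}{\sin(x/2)}=\sum_{j=0}^{m-1}e^{i(j-(m-1)/2)x}$. Squaring this trigonometric polynomial gives $\left(\frac{\sin(mx/2)}{\sin(x/2)}\right)^2=\sum_{|l|\le m-1}(m-|l|)\,e^{ilx}$, the triangular (Fej\'er) coefficient sequence. Since $b_1$ is the square of this polynomial, Parseval's identity yields $\int_{-\pi}^\pi b_1(x)\,dx=2\pi\sum_{|l|\le m-1}(m-|l|)^2=\frac{2\pi m(2m^2+1)}{3}$, using $\sum_{j=1}^{m-1}j^2=\frac{(m-1)m(2m-1)}{6}$. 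As $b_1$ is even, $\int_0^\pi b_1(x)\,dx=\frac{\pi m(2m^2+1)}{3}\ge\frac{2\pi m^3}{3}$, which is the only bound I will need on the denominator.

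\emph{Numerator.} Substitute $u=mx$, so that $\int_0^\pi x\,b_1(x)\,dx=\frac{1}{m^2}\int_0^{m\pi}u\left(\frac{\sin(u/2)}{\sin(u/(2m))}\right)^4\,du$. On the range of integration $u/(2m)\in[0,\pi/2]$, so Jordan's inequality $\sin t\ge\frac{2t}{\pi}$ on $[0,\pi/2]$ gives $\sin(u/(2m))\ge\frac{u}{\pi m}$, whence the integrand is at most $\frac{\pi^4 m^4\sin^4(u/2)}{u^4}$. Enlarging the range to $[0,\infty)$ (the integrand is nonnegative) gives $\int_0^\pi x\,b_1(x)\,dx\le\pi^4 m^2\int_0^\infty\frac{\sin^4(u/2)}{u^3}\,du$. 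I evaluate the remaining constant integral by writing $\sin^4 v=\frac38-\frac12\cos 2v+\frac18\cos 4v$ (so it vanishes to fourth order at $0$) and integrating by parts twice; all boundary terms vanish because $\sin^4 v=O(v^4)$ at the origin and everything is bounded at infinity, leaving $\int_0^\infty\frac{\sin^4 v}{v^3}\,dv=\frac12\int_0^\infty\frac{(\sin^4 v)''}{v}\,dv=\int_0^\infty\frac{\cos 2v-\cos 4v}{v}\,dv=\ln 2$, the last step being the Frullani-type evaluation $\int_0^\infty\frac{\cos av-\cos bv}{v}\,dv=\ln(b/a)$. After the change of variables $v=u/2$ this gives $\int_0^\infty\frac{\sin^4(u/2)}{u^3}\,du=\frac{\ln 2}{4}$, hence $\int_0^\pi x\,b_1(x)\,dx\le\frac{\pi^4\ln 2}{4}\,m^2$.

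\emph{Combining, and the main obstacle.} Dividing the two bounds, $\frac{\int_0^\pi x\,b_1(x)\,dx}{\int_0^\pi b_1(x)\,dx}\le\frac{\pi^4\ln 2/4}{2\pi m^3/3}\cdot m^2=\frac{3\pi^3\ln 2}{8m}<\frac{8.06}{m}$, which is the claim. The delicate point is squeezing the constant below $8.06$: the true asymptotic value of the ratio is only $\frac{6\ln 2}{\pi m}\approx\frac{1.32}{m}$, and a careless bound on the numerator — for instance using $b_1(x)\le\min\{(\pi m/2)^4,(\pi/x)^4\}$ directly, or applying Jordan's inequality without first rescaling by $u=mx$ — yields a constant several times larger than $8.06$. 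The rescaling together with the closed form $\int_0^\infty\frac{\sin^4(u/2)}{u^3}\,du=\frac{\ln 2}{4}$ is what makes the numerator bound tight enough; likewise the denominator must be handled exactly via Parseval, since a crude pointwise lower bound such as $\int_0^{\pi/m}b_1\ge(2m/\pi)^4\cdot\frac{\pi}{m}$ loses a factor of about $4$ and would not close the argument.
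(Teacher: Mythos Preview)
The paper does not prove this statement; it is quoted as a Fact from \cite{BravermanKrishnanMusco:2022} without proof, so there is no in-paper argument to compare against. Your proof is correct: the Parseval computation gives $\int_0^\pi b_1=\frac{\pi m(2m^2+1)}{3}\ge\frac{2\pi m^3}{3}$, and the numerator bound via the substitution $u=mx$, Jordan's inequality, and the exact evaluation $\int_0^\infty\frac{\sin^4 v}{v^3}\,dv=\ln 2$ all check out, yielding the constant $\frac{3\pi^3\ln 2}{8}\approx 8.0595<8.06$.
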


We are not ready to prove that a truncated and ``damped'' Fourier series of $f$ is a \emph{good} uniform approximation to $f$. 

\begin{theorem}\label{thm:high_jacksons}
    Let $f:\R^d \to \R$ be a $\ell$-Lipschitz continuous, $2\pi$-periodic function. For $m \in \Np$, let $b: \R^d \to \R$ be the Kernel from \Cref{def:Jackson}. The function $\tilde{f}(x) = \frac{1}{ \hat{b}(\bm 0) (2\pi)^d } \int_{[\pi,\pi]^d} b(u) f(x-u) \rd u$ satisfies: 
    \[ \norm{\tilde f- f}_\infty \leq \frac{9 \ell d}{m} \mper \]
    Moreover, the Fourier coefficients of $\tilde f$, $\set{\hat{\tilde f}(K)}_{K \in \Z^d}$, are given by:
    \[ \hat{\tilde f}(K) = \frac{\hat b(K)}{\hat b(\bm 0)} \hat{f}(K)  \mcom\]
    where $\hat b(\bm 0) \geq \hat b(K)$ for all $K \in \Z^d$, and for $K \not \in \set{-2m+2,\dots,2m-2}^d$, we have that $\hat{\tilde f}(K) = 0$. 
\end{theorem}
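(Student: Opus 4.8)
The plan is to recognize $\tilde f$ as a weighted average of translates of $f$ — a convolution against a normalized, nonnegative, ``band-limited'' kernel — and then to reduce the $d$-dimensional estimate to $d$ copies of the one-dimensional bound already recorded in \Cref{lem:thmc5}, exploiting the product structure $b(u) = \prod_{i=1}^d b_1(u_i)$ of the Jackson kernel from \Cref{def:Jackson}.

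\emph{Fourier coefficients.} First I would dispatch the statements about the coefficients of $\tilde f$. Since $\tilde f = \frac{1}{\hat b(\bm 0)(2\pi)^d}\,(b * f)$, the convolution theorem (\Cref{clm:convolution}) gives at once $\hat{\tilde f}(K) = \frac{\hat b(K)}{\hat b(\bm 0)}\,\hat f(K)$; and because $b$ is a trigonometric polynomial with frequencies in $\set{-2m+2,\dots,2m-2}^d$, we get $\hat b(K) = 0$, hence $\hat{\tilde f}(K) = 0$, for every $K$ outside that box. It then remains to show $0 \le \hat b(K) \le \hat b(\bm 0)$ for all $K$. For this I would note that each factor $b_1(x_i) = (\sin(m x_i/2)/\sin(x_i/2))^4$ equals $(m F_m(x_i))^2$, the square of a constant multiple of the Fej\'er kernel, whose Fourier coefficients $(m-\abs{k})_{\abs{k}\le m-1}$ are nonnegative; multiplying two nonnegative-coefficient series keeps the coefficients of $b_1$ nonnegative, and the one-line bound $\sum_t a_t a_{t+k} \le \sum_t a_t^2$ (from $2a_t a_{t+k} \le a_t^2 + a_{t+k}^2$) shows $\hat b_1(k) \le \hat b_1(0)$. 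Since $\hat b(K) = \prod_{i=1}^d \hat b_1(k_i)$, nonnegativity and maximality at $K = \bm 0$ follow coordinatewise; in particular $\hat b(\bm 0) > 0$, so $\tilde f$ is well defined.

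\emph{Sup-norm bound.} Because $\hat b(\bm 0)(2\pi)^d = \int_{[-\pi,\pi]^d} b(u)\,\rd u$ by the definition of the $\bm 0$-th Fourier coefficient, the map $f \mapsto \tilde f$ averages $f$ against a probability density, so it reproduces constants; hence $\tilde f(x) - f(x) = \frac{1}{\hat b(\bm 0)(2\pi)^d}\int_{[-\pi,\pi]^d} b(u)\,(f(x-u) - f(x))\,\rd u$. Using $b \ge 0$, the $\ell$-Lipschitz bound $\abs{f(x-u) - f(x)} \le \ell\norm{u}_2$, and $\norm{u}_2 \le \sum_{i=1}^d \abs{u_i}$, this is at most $\ell \sum_{i=1}^d \frac{\int_{[-\pi,\pi]^d} b(u)\,\abs{u_i}\,\rd u}{\int_{[-\pi,\pi]^d} b(u)\,\rd u}$. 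Factoring both integrals through $b(u) = \prod_j b_1(u_j)$ cancels the $j \ne i$ factors and leaves $\ell \sum_{i=1}^d \frac{\int_{-\pi}^{\pi} \abs{u_i} b_1(u_i)\,\rd u_i}{\int_{-\pi}^{\pi} b_1(u_i)\,\rd u_i}$; by the evenness of $b_1$ each ratio equals $\frac{\int_0^{\pi} u_i b_1(u_i)\,\rd u_i}{\int_0^{\pi} b_1(u_i)\,\rd u_i} \le 8.06/m$ via \Cref{lem:thmc5}, and summing the $d$ coordinates gives $\norm{\tilde f - f}_\infty \le 8.06\,\ell d/m \le 9\ell d/m$.

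I expect the only genuinely delicate point to be the kernel analysis — establishing $0 \le \hat b(K) \le \hat b(\bm 0)$ together with the band-limitedness — since everything after that is bookkeeping that, thanks to the product form of $b$, collapses to $d$ invocations of the one-dimensional estimate. The algebraic (Chebyshev-basis) version, \Cref{thm:jackson_high_dim}, is then obtained from this Fourier version by applying it to $g(\theta) = f(\cos\theta_1,\dots,\cos\theta_d)$, which is even, $2\pi$-periodic, and $\ell$-Lipschitz: the substitution $x_i = \cos\theta_i$ converts the even, band-limited $\tilde g$ into a polynomial of degree $\le 2m-2$ in each variable and turns the damping ratios $\hat b(K)/\hat b(\bm 0) \in [0,1]$ into the coefficient-shrinkage factors claimed there.
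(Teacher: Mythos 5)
Your proposal is correct and follows essentially the same route as the paper's proof: the convolution theorem handles the Fourier coefficients, normalization plus the Lipschitz bound $\abs{f(x)-f(x-u)}\le \ell\norm{u}_2\le \ell\norm{u}_1$ reduces the sup-norm estimate to $d$ one-dimensional ratios via the product structure of $b$, and each ratio is bounded by $8.06/m$ using \Cref{lem:thmc5}. The only difference is cosmetic: you re-derive $0\le \hat b(K)\le \hat b(\bm 0)$ from the Fej\'er-square structure of $b_1$, whereas the paper simply cites these facts from \Cref{def:Jackson}.
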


\begin{proof}
    Let $u = (u_1,\dots,u_d) \in \R^d$. Note that $\hat{b}(\bm 0) = \frac{1}{(2\pi)^d}\int_{[-\pi,\pi]^d} b(x) \rd x$, whence we get that $\frac{1}{\hat{b}(\bm 0) (2\pi)^d }\int_{[-\pi,\pi]^d} b(u) \rd u = 1$. Therefore, by the definition of $\tilde f$, we get that: 
    \[ \abs{\tilde f(x) - f(x)} \leq \int_{[-\pi,\pi]^d} \frac{1}{ \hat{b}(\bm 0) (2\pi)^d } b(u) \cdot \abs{ f(x) - f(x-u) } \rd u \mper\]
    Since $f$ is $\ell$-Lipschitz, we have that $\abs{ f(x) - f(x-u) } \leq \ell \norm{u}_2$. Therefore, we get that:
    \begin{align*}
        &\max_x \abs{\tilde f(x) - f(x)} \\
        & \quad \leq \int_{[-\pi,\pi]^d} \frac{1}{ \hat{b}(\bm 0) (2\pi)^d } b(u) \cdot \ell \norm{u}_2 \rd u \\
        & \quad \leq \int_{[-\pi,\pi]^d} \frac{1}{ \hat{b}(\bm 0)(2\pi)^d} b(u) \cdot \ell \norm{u}_1 \rd u & \paren{\because \norm{\cdot}_2 \leq \norm{\cdot}_1} \\
        & \quad = \frac{\ell}{ \hat{b}(\bm 0)(2\pi)^d}  \cdot  \sum_{i=1}^{d} \paren{\int_{-\pi}^{\pi} \abs{u_i}  b_1(u_i) \rd u_i} \cdot \paren{\prod_{j \in [d], j \neq i} \int_{-\pi}^{\pi} b_1(u_j) \rd u_j} & \paren{\because \norm{u}_1 = \sum_{i=1}^d \abs{u_i}} \\
        & \quad = \ell \sum_{i=1}^{d} \frac{\int_{-\pi}^{\pi} \abs{u_i}  b_1(u_i) \rd u_i}{\int_{-\pi}^{\pi} b_1(u_i) \rd u_i} \cdot \prod_{j \in [d], j \neq i} \frac{\int_{-\pi}^{\pi} b_1(u_j) \rd u_j}{\int_{-\pi}^{\pi} b_1(u_j) \rd u_j} &\paren{\text{By \Cref{def:Jackson}}} \\
        & \quad = \ell \sum_{i=1}^{d} \frac{\int_{0}^{\pi} u_i  b_1(u_i) \rd u_i}{\int_{0}^{\pi} b_1(u_i) \rd u_i} \leq (8.06)d \frac{\ell}{m} \mcom 
    \end{align*}
    where the last inequality follows from \Cref{lem:thmc5}.
    We now reason about the Fourier coefficients of $\tilde f$. Note that for $K \not\in \set{-2m+2,\dots,2m-2}^d$, we have that $\hat{\tilde f}(K) = 0$. For $K \in \set{-2m+2,\dots,2m-2}^d$, we have by the convolution theorem (\Cref{clm:convolution}) that:
    \[ \hat{\tilde{f}}(K) = \frac{\hat b(K)}{\hat b(\bm 0)} \cdot \hat f(K) \mper \]
    Using the fact from \Cref{def:Jackson} that $\hat b(\bm 0) \geq \hat b(K)$,  for $K \neq \bm 0$, and the fact that $\hat b(K) = 0$ for $K \not\in \set{-2m+2,\dots,2m-2}^d$, we get the desired result.
\end{proof}

We now prove the multivariate Jackson's theorem for a smooth, $\ell$-Lipschitz function $f:[-1,1] \to \R$. To do so, we construct a mapping to a periodic function  $h$  with period  $2\pi$  and then apply the previous theorem.

\jacksonhighdim*

\begin{proof}[Proof of \Cref{thm:jackson_high_dim}]
    Let $(\cos \theta_1,\dots,\cos \theta_d) \in [-1,1]^d$. We will use the identity that for $K = (k_1,\dots,k_d) \in \N^d$, 
    \[ T_K(\cos \theta_1,\dots,\cos \theta_d) = \prod_{i=1}^d \cos(k_i \theta_i) \mper \]  
    Consider the Lipschitz continuous function $f: [-1,1]^d \to \R$ with Chebyshev expansion coefficients $c_{K}$ for $K \in \N^d$, where $c_K = \inprod{r, W\cdot \T_K}$. We transform $f$ into a periodic function as follows: For $\Theta = (\theta_1,\dots,\theta_d) \in [-\pi,0]^d$, let $g(\Theta) = f(\cos \theta_1,\dots,\cos \theta_d)$ and let $h(\Theta) = g(-\abs{\theta_1},\dots, -\abs{\theta_d})$ for $\Theta \in [-\pi,\pi]^d$. The function $h: [\pi,\pi]^d \to \R$ is a periodic and even function (\Cref{def:even_high_dimn}). Since the function is even, one can check that its Fourier series can be written as follows: For  $\Theta = (\theta_1,\dots,\theta_d) \in [-\pi,\pi]^d$,
    \[ h(\Theta) = \sum_{K = (k_1,\dots,k_d) \in \N^d} \alpha_K \prod_{i=1}^{d} \cos(k_i \theta_i) \mcom \]
    where:
    \[ \alpha_K = \frac{2^{\nnz{K}}}{(2\pi)^d} \int_{[-\pi,\pi]^d} h(\Theta) \paren{\prod_{i=1}^d \cos(k_i \theta_i)} \rd \Theta = \frac{2^{\nnz{K}}}{\pi^d} \int_{[-\pi,0]^d} g(\Theta) {\prod_{i=1}^d \cos(k_i \theta_i)}  \rd \Theta \mper  \]
    Let $x = (x_1,\dots,x_d) \in [-1,1]^d$. Using that fact that for $i \in [d]$,  $\frac{\rd}{\rd x_i} \cos^{-1}(x_i) = \frac{1}{\sqrt{1-x_i^2}}$, we get 
    \[ \int_{[-\pi,0]^d} g(\Theta) {\prod_{i=1}^d \cos(k_i \theta_i)}  \rd \Theta = \int_{[-1,1]^d} f(x) T_{K}(x) W(x) \rd x \mper  \]
    Using the above equations, we conclude that for $K \in \N^d$, the Fourier coefficients of $h$ are just a scaling depending on $K$ of the Chebyshev coefficients of $f$, and we get by \Cref{def:norm_multi_cheb} that: 
    \begin{equation}\label{eq:c_k_alpha_k_relation}
        \sqrt{\frac{2^{\nnz{K}}}{\pi^d}} c_K = \alpha_K \mper
    \end{equation}
    
    We observe that the mapping from  $f$  to  $h$  preserves the $\ell$-Lipschitz property. The function $h: [\pi,\pi]^d \to \R$ is periodic and an even function (\Cref{def:even_high_dimn}), and is $\ell$-Lipschitz. Let $\tilde h$ be the function obtained by applying Jackson's theorem (\Cref{thm:high_jacksons}) to $h$. We know that $\tilde h$ is an even function since $h$ is even and Jackson's Kernel $b$, which $h$ is convolved with, is also even. Recall that the Fourier series coefficients of $\tilde h$, denoted by $\hat{\tilde h}(K)$, are $0$ for $K \not\in \set{0,\dots,2m-2}^d$. Finally, let $\tilde f: [-1,1]^d \to \R$ be defined as 
    \[ \tilde f(\cos \theta_1,\dots,\cos\theta_d) \seteq \tilde h(\theta_1,\dots,\theta_d) \mper \]  
    By \Cref{eq:c_k_alpha_k_relation}, we get that the Chebyshev coefficients of $\tilde f$ are exactly $\frac{\hat{b}(K)}{\hat{b}(\bm 0)}c_K$. Note from \Cref{thm:high_jacksons} that $\abs{\frac{\hat{b}(K)}{\hat{b}(\bm 0)}} \leq 1$, therefore, we get the $\tilde f$ is the damped Chebyshev truncated series of $f$. 

    Moreover, we have that $\norm{\tilde f - f}_\infty = \norm{\tilde h - h}_\infty \leq \frac{9 \ell d}{m}$, where the inequality follows from \Cref{thm:high_jacksons}. This completes the proof.
\end{proof} 

\section{Differentially Private Synthetic Data Generation in Higher Dimensions} \label{sec:dp_high}

\begin{algorithm}[tb]\caption{$d$-Dimension Private Chebyshev Moment Matching} \label{alg:dp_algorithm_high}
    \begin{algorithmic}[1]
        
            \Require Dataset $\x_1, \ldots, \x_n \in [-1,1]^d$, privacy parameters $\epsilon, \delta > 0$. 
            \Ensure A probability distribution $q$ approximating the distribution, $p \seteq \text{Unif}\set{\x_1, \ldots, \x_n}$. 
            \State Let $\mathcal{G} = \Set{-1,-1 + \frac{1}{\lceil (\e n)^{1/d}\rceil},-1 + \frac{2}{\lceil(\e n)^{1/d}\rceil}, \ldots, 1}^d$. Let $r \seteq (2\ceil{(\e n)^{1/d}}+1)$ and for $J =(j_1,\dots,j_d) \in [r]^d$ let ${g}_J = \paren{-1 + \frac{j_1-1}{\ceil{(\e n)^{1/d}}},\dots,-1+\frac{j_d-1}{\ceil{(\e n)^{1/d}}}}$ denote the $J^\text{th}$ element of $\mathcal{G}$.
            \State For $i = 1, \ldots, n$, let $\tilde{\x}_i = \argmin_{\bm{y}\in \mathcal{G}} |\x_i -\bm{y}|$. I.e., round $\x_i$ to the nearest point in the grid $\cG$.
            \State Set $\sigma^2 = \frac{4\cdot 2^{d}}{\pi^d} \normsum\ln(1.25 / \delta) / (n^2\e^2)$, where $\normsum = \sum_{K \in \set{0,\dots,m}^d \setminus \bm{0}} \frac{1}{\norm{K}_2}$. See \Cref{lem:normsum} for the bound on $\normsum$.
            \State Set $m = \ceil{2(\e n)^{1/d}}$. For $K \in \set{0,\dots,m}^d \setminus \bm{0}$, let $\hat{m}_K = \eta_K +  \frac{1}{n}\sum_{i=1}^{n} \T_K(\tilde{\x}_i)$, where $\eta_K \sim \mathcal{N}(0,\norm{K}_2\sigma^2)$. 
            \State Let $\set{{q}_J}_{J \in [r]^d}$ be the solution to the following optimization problem: 
            \begin{equation*}
                \begin{array}{ll@{}ll}
                \min_{ \set{{z}_J}_{J \in [r]^d}} &  \displaystyle\sum_{ K \in \set{0,\dots,m}^d \setminus \bm{0}} \frac{1}{\norm{K}_2^2} \paren{\hat m_K - \sum_{J \in [r]^d} {z}_J \T_K(g_J)}^2 \\
                \text{subject to} &\displaystyle \sum_{J \in [r]^d}  z_J = 1 \text{ and } z_J \geq 0, \,\, \forall J \in [r]^d.
                \end{array}
            \end{equation*}
            \State Return ${q} = \sum_{J \in [r]^d} {q}_J \delta(x-g_J)$, where $\delta$ is the Dirac delta function.
    \end{algorithmic}
    \end{algorithm}

\begin{theorem}
  Let $X = \{\x_1, \ldots, \x_n\}$ be a dataset with each $\x_j\in [-1,1]^d$, for $d \geq 2$. Let $p$ be the uniform distribution on $X$. For any $\epsilon,\delta \in (0,1)$, there is an $(\e, \delta)$-differentially private algorithm based on Chebyshev moment matching that, in $\poly\paren{n,\e,\delta,2^d}$ time, returns a distribution $q$ satisfying for a fixed constant $c_4$,
    \begin{align*}
          \E[W_1(p,q)] \leq c_4 d \paren{\frac{1+ \ln^{0.5}(1.25/\delta)}{n \e}}^{1/d} \mper 
    \end{align*}
\end{theorem}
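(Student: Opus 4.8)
The plan is to mimic the one-dimensional proof of \Cref{thm:synth_data} (the analysis of \Cref{alg:dp_algorithm}), but using the multivariate tools developed in \Cref{sec:multivariate_master}: the multivariate master theorem \Cref{thm:master_thm_high}, the multivariate global Chebyshev coefficient decay \Cref{lem:cheb_coeff_high}, and the Gaussian mechanism applied to the normalized multivariate Chebyshev moments. Concretely, I would analyze \Cref{alg:dp_algorithm_high}, splitting into a privacy argument and an accuracy argument.

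\textbf{Privacy.} Let $f(X)$ be the vector of \emph{scaled} normalized Chebyshev moments, one entry for each $K\in\set{0,\dots,m}^d\setminus\bm 0$, where the $K$-th entry is $\frac{1}{\norm{K}_2}\cdot\frac1n\sum_{i=1}^n\T_K(x_i)$. From \Cref{fact:cheb_prop} (boundedness) and \Cref{def:norm_multi_cheb}, $\abs{\T_K(x)}\le\sqrt{2^{\nnz K}/\pi^d}\le\sqrt{2^d/\pi^d}$, so changing one data point changes each coordinate of $f$ by at most $\frac{1}{\norm{K}_2}\cdot\frac{2}{n}\sqrt{2^d/\pi^d}$, giving $\Delta_{2,f}^2\le\frac{4\cdot 2^d}{\pi^d n^2}\sum_{K\in\set{0,\dots,m}^d\setminus\bm 0}\frac{1}{\norm{K}_2^2}$. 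I would then invoke a bound on $\normsum=\sum_{K}\frac{1}{\norm{K}_2}$ (the quantity $\normsum$ referenced in the algorithm via \Cref{lem:normsum}; one can also directly bound $\sum_K\norm{K}_2^{-2}$) to control the sensitivity. As in the $1$-d proof, rounding to the grid $\cG$ preserves the neighboring relation, so by \Cref{def:gaussian_mechanism} with $\sigma^2$ as set in Line 3 the noised moments $\hat m_K$ (which equal $\norm{K}_2$ times a coordinate of $f(\tilde X)+\eta$, $\eta\sim\mathcal N(0,\sigma^2 I)$) are $(\e,\delta)$-DP, and the remaining steps are post-processing.

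\textbf{Accuracy.} First, rounding $X$ to the grid $\cG$ with spacing $1/\lceil(\e n)^{1/d}\rceil$ moves each point a distance $\le\frac{\sqrt d}{2\lceil(\e n)^{1/d}\rceil}$, so $W_1(p,\tilde p)\le\frac{\sqrt d}{2\lceil(\e n)^{1/d}\rceil}$. Next, as in \eqref{eq:basic_exp_bound}, the moment-estimation error $E\seteq\sum_{K}\frac{1}{\norm{K}_2^2}(\hat m_K-\langle\tilde p,\T_K\rangle)^2$ has $\E[E]=\sum_K\frac{1}{\norm{K}_2^2}\cdot\norm{K}_2\sigma^2=\normsum\,\sigma^2$. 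Since $\tilde p$ is feasible for the quadratic program in Line 5 and $q$ is optimal, the triangle-inequality argument of \Cref{sec:eff_recover} gives $\Gamma^2\seteq\sum_{K}\frac{1}{\norm{K}_2^2}(\langle q,\T_K\rangle-\langle\tilde p,\T_K\rangle)^2\le 4E$, hence $\E[\Gamma]\le 2\sqrt{\E[E]}=2\sqrt{\normsum}\,\sigma$ by Jensen. Plugging into \Cref{thm:master_thm_high} with $m=\lceil2(\e n)^{1/d}\rceil$,
\[
\E[W_1(\tilde p,q)]\le\sqrt{\tfrac{d\pi^d}{2}}\,\E[\Gamma]+\frac{cd}{m}\le\sqrt{\tfrac{d\pi^d}{2}}\cdot 2\sqrt{\normsum}\,\sigma+\frac{cd}{m}.
\]
Then $\sqrt{\normsum}\,\sigma=\Theta\!\big(\normsum\cdot\tfrac{2^{d/2}}{\pi^{d/2}}\cdot\tfrac{\sqrt{\ln(1/\delta)}}{n\e}\big)$, so $\sqrt{d\pi^d/2}\cdot\sqrt{\normsum}\,\sigma=\Theta\!\big(\sqrt d\cdot 2^{d/2}\normsum\cdot\tfrac{\sqrt{\ln(1/\delta)}}{n\e}\big)$; combined with the $\frac{cd}{m}=O(d/(\e n)^{1/d})$ and the rounding term, I would finish by using the bound on $\normsum$ to show the whole expression is $O\big(d\,(\tfrac{1+\ln^{0.5}(1.25/\delta)}{n\e})^{1/d}\big)$. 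The running time is $\poly(n,\e,\delta,2^d)$ since the quadratic program has $r^d=O(\e n)$ variables.

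\textbf{Main obstacle.} The delicate step is bounding $\normsum=\sum_{K\in\set{0,\dots,m}^d\setminus\bm 0}\frac{1}{\norm{K}_2}$ (and the related $\sum_K\norm{K}_2^{-2}$) and verifying that, after being routed through $\sigma$ and \Cref{thm:master_thm_high}, the resulting quantity is $O\big(d\,(\e n)^{-1/d}\big)$ up to the $\ln^{0.5}(1/\delta)$ factor — i.e., that $\normsum$ is roughly $(\e n)^{(d-1)/d}$ (comparing the lattice sum to $\int_{\norm{x}\le m}\norm{x}_2^{-1}dx=\Theta(m^{d-1})$ and recalling $m=\Theta((\e n)^{1/d})$), and that the $2^{d/2}$ and $\pi^{d/2}$ factors cancel appropriately so only the claimed $d$-dependence survives. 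Everything else is a routine translation of the one-dimensional proof of \Cref{thm:synth_data}, with \Cref{fact:cheb_prop}, \Cref{thm:master_thm_high}, and \Cref{lem:cheb_coeff_high} substituting for their univariate counterparts; I would also note, as in \Cref{app:concentration}, that the expectation bound can be upgraded to a high-probability bound via Gaussian concentration.
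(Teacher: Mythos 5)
Your overall route is the paper's route: analyze \Cref{alg:dp_algorithm_high} by combining the Gaussian mechanism on scaled multivariate Chebyshev moments with \Cref{thm:master_thm_high}, a grid-rounding step, the quadratic-program triangle inequality, and a lattice-sum bound on $\normsum$ (this last is exactly \Cref{lem:normsum} in the paper, proved by comparison to $\int \norm{x}_2^{-1}\,dx = \Theta(m^{d-1})$ as you suggest). However, two points need fixing. First, your privacy bookkeeping is inconsistent with the algorithm and with your own accuracy step: the paper scales the $K$-th moment by $1/\sqrt{\norm{K}_2}$, which is precisely the scaling under which the algorithm's heteroscedastic noise $\eta_K \sim \mathcal{N}(0,\norm{K}_2\sigma^2)$ becomes i.i.d.\ $\mathcal{N}(0,\sigma^2)$ on the scaled vector, so that \Cref{def:gaussian_mechanism} applies with sensitivity $\Delta_{2,f}^2 \leq \tfrac{4\cdot 2^d}{\pi^d n^2}\normsum$, matching the $\sigma^2 \propto \normsum$ set in Line 3. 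Your scaling by $1/\norm{K}_2$ gives sensitivity proportional to $\sum_K \norm{K}_2^{-2}$ and corresponds to noise of variance $\norm{K}_2^2\sigma^2$ on $\hat m_K$, which is neither what the algorithm adds nor what you later use in $\E[E]=\sum_K \norm{K}_2^{-2}\cdot\norm{K}_2\sigma^2 = \normsum\sigma^2$; you must pick one allocation and carry it through both halves (the paper's $1/\sqrt{\norm{K}_2}$ choice is the one that certifies the algorithm as written).

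Second, the step you defer as the ``main obstacle'' is genuinely where the claimed form of the bound is won, and it does not go through with $m=\lceil 2(\e n)^{1/d}\rceil$ held fixed. After inserting \Cref{lem:normsum}, the noise term is of order $(\pi e/2)^{d/2}\, m^{d-1}\sqrt{\ln(1.25/\delta)}/(\sqrt{d}\, n\e)$; with $m=\Theta((\e n)^{1/d})$ this leaves an $e^{\Theta(d)}$ factor and a multiplicative $\ln^{0.5}(1.25/\delta)$ that cannot be absorbed into $c_4\, d\,\paren{\tfrac{1+\ln^{0.5}(1.25/\delta)}{n\e}}^{1/d}$ for a fixed constant $c_4$. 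The paper resolves this by balancing the noise term against the Jackson error $cd/m$, choosing $m = c_2\paren{\tfrac{d\, n\e}{\ln^{0.5}(1.25/\delta)}}^{1/d}$ (with $c_2$ a small absolute constant that cancels the $(\pi e/2)^{1/2}$ per-dimension factor), so that both terms become $O\big(d\,(\ln^{0.5}(1.25/\delta)/(n\e))^{1/d}\big)$; your write-up flags the needed cancellation but does not perform this tuning, and without it the stated theorem does not follow. The remaining ingredients of your sketch (rounding cost $\le \sqrt{d}/(2\lceil(\e n)^{1/d}\rceil)$, feasibility of $\tilde p$ and the factor-$2$ (or $4$) triangle inequality, Jensen, and the $\poly(n,\e,\delta,2^d)$ runtime) match the paper's proof.
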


    \begin{proof}
    We analyze both the privacy and accuracy of \Cref{alg:dp_algorithm_high}. 
    ~
    \paragraph{Privacy.}  For a dataset $X = \{\x_1, \ldots, \x_n\} \in [-1,1]^{d \times n}$, where each data-point $\x_i \in [-1,1]^d$. Let $f(X)$ be a vector-valued function mapping to the $K \in \set{0,\dots,m}^d \setminus \bm{0}$ \emph{scaled} Chebyshev moments of the uniform distribution over $X$. I.e.,
    \begin{align*}
        f(X)_K = \frac{1}{\sqrt{\norm{K}_2}} \cdot \frac{1}{n} \sum_{i=1}^{n} \T_K(\x_i) \mcom
    \end{align*}
    where $f(X)_K$ denotes the $K$-th entry of the vector $f(X)$, and $\T_K(\x)$ is the $K$-th normalized multivariate Chebyshev polynomial. 
    
    We will show that \Cref{alg:dp_algorithm_high} is $(\e,\delta)$-differentially private and that the output of the algorithm is close in Wasserstein distance to the true moments of the uniform distribution over $X$.
    By \Cref{def:norm_multi_cheb}, $\max_{\x_i \in [-1,1]^d} |\T_K(x_i)| \leq \sqrt{2^{\nnz{K}}/\pi^d}$ for $K \in \N^d$, so we have: 
    \begin{align}
    \Delta_{2,f}^2 = \underset{X,X'\in \mathcal{X}}{\max_{\text{neighboring datasets}}} \| f(X) - f(X') \|_2^2 & \leq \sum_{ K \in \set{0,\dots,m}^d \setminus \bm 0} \frac{1}{\norm{K}_2}\cdot \frac{1}{n^2} \cdot \frac{4 \cdot 2^{\nnz{K}}}{\pi^d} \nonumber \\
    & \leq \frac{4\cdot 2^d}{\pi^d n^2} \sum_{ K \in \set{0,\dots,m}^d \setminus \bm 0} \frac{1}{\norm{K}_2} \nonumber \\
    & = \frac{4\cdot 2^{d}}{\pi^d n^2} \normsum \mcom \label{eq:sensitivity_high}
    \end{align}
    where $\normsum =  \sum_{ K \in \set{0,\dots,m}^d \setminus \bm 0} \frac{1}{\norm{K}_2}$.
    For two neighboring datasets $X,X'$, let $\tilde X$ and $\tilde X'$ be the rounded datasets computed in line 2 of \Cref{alg:dp_algorithm_high} -- i.e., $\tilde{X} = \{\tilde{\x}_1, \ldots, \tilde{\x}_{n}\}$. Observe that $\tilde X$ and $\tilde X'$ are also neighboring. Thus, it follows from \Cref{def:gaussian_mechanism} and the sensitivity bound of \Cref{eq:sensitivity_high} that $\tilde{m}_K = f(\tilde{X})_K + \eta_K$ is $(\e,\delta)$-differentially private for $\eta_K \sim \mathcal{N}(0,\sigma^2)$ as long as $\sigma^2 = \frac{4\cdot 2^{d}}{\pi^d}\normsum \ln(1.25 / \delta) / (n^2\e^2)$. Finally, observe that $\hat{m}_K$ computed by \Cref{alg:dp_algorithm_high} is exactly equal to $\sqrt{\norm{K}_2}$ times the $K^\text{th}$ entry of such an $\tilde{m}$. So $\set{\hat{m}_K}_{K \in \set{0,\dots,m^d}\setminus \bm{0}}$ are $(\e,\delta)$-differentially private. Since the remainder of \Cref{alg:dp_algorithm} simply post-processes $\set{\hat{m}_K}_{K \in \set{0,\dots,m^d}\setminus \bm{0}}$ without returning to the original data $X$, the output of the algorithm is also $(\e,\delta)$-differentially private, as desired.
    ~
    \paragraph{Accuracy.} The \Cref{alg:dp_algorithm_high} begins by rounding the dataset $X$ so that every coordinate of every data point is a multiple of $1/\ceil{(\e n)^{1/d}}$. Let $\tilde{p}$ be the uniform distribution over the rounded dataset $\tilde{X}$. Then, it is not hard to see from the transportation definition of the Wasserstein-$1$ distance that:
    \begin{equation}\label{eq:rounding_error_high}
        W_1(p,\tilde{p}) \leq \frac{d}{2\ceil{(\e n)^{1/d}}} \mper
    \end{equation}
    In particular, we can transport $p$ to $\tilde{p}$ by moving every unit of $1/n$ probability mass a distance of at most $1/2\ceil{(\e n)^{1/d}}$, along each of the $d$ coordinates. Given \cref{eq:rounding_error_high}, it will suffice to show that \Cref{alg:dp_algorithm_high} returns a distribution $q$ that is close in Wasserstein distance to $\tilde{p}$. We will apply triangle inequality to bound $W_1(p,q)$.

    To show that \Cref{alg:dp_algorithm} returns a distribution $q$ that is close to $\tilde{p}$ in Wasserstein distance, we begin by bounding the moment estimation error: 
    \begin{align*}
        E \defeq \sum_{K \in \set{0,\dots,m}^d \setminus \bm{0}} \frac{1}{\norm{K}_2^2} \left(\hat{m}_K(\tilde p) - \langle \tilde{p}, \T_K \rangle\right)^2, 
    \end{align*}
    where $m$ is as chosen in \Cref{alg:dp_algorithm_high} and $\langle \tilde{p}, T_K \rangle = \frac{1}{n}\sum_{i=1}^{n} \T_K(\tilde{\x}_i)$. Let $\sigma^2$ and $\set{\eta_K}_{K \in \set{0,\dots,m}^d \setminus \bm{0}}$ be as in \Cref{alg:dp_algorithm_high}. Applying linearity of expectation, we have that:
    \begin{align}
    \label{eq:basic_exp_bound_high}
    \E[E] = \E\left[\sum_{K \in \set{0,\dots,m}^d \setminus \bm{0}} \frac{1}{\norm{K}_2^2} \eta_K^2\right] &= \sum_{K \in \set{0,\dots,m}^d \setminus \bm{0}} \frac{1}{\norm{K}_2^2} \E\left[\eta_K^2\right] \nonumber \\
    & = \sum_{K \in \set{0,\dots,m}^d \setminus \bm{0}} \frac{1}{\norm{K}_2^2}\cdot \norm{K}_2 \sigma^2 \nonumber \\ 
    & \leq\sum_{K \in \set{0,\dots,m}^d \setminus \bm{0}}  \frac{1}{\norm{K}_2} \sigma^2 = \sigma^2 \normsum \mcom 
    \end{align}
    where we recall that $\normsum = \sum_{K \in \set{0,\dots,m}^d \setminus \bm{0}}  \frac{1}{\norm{K}_2}$.
    
    Now, let $q$ be as in \Cref{alg:dp_algorithm_high}. Using a triangle inequality argument as in \Cref{sec:eff_recover}, we have:
    \begin{align*}
        \Gamma^2 & = \sum_{K \in \set{0,\dots,m}^d \setminus \bm{0}} \frac{1}{\norm{K}_2^2} \left(\langle {q}, \T_K \rangle - \langle \tilde{p}, \T_K \rangle\right)^2 \\
        & \leq \sum_{K \in \set{0,\dots,m}^d \setminus \bm{0}} \frac{1}{\norm{K}_2^2} \left(\langle {q}, \T_K \rangle - \hat{m}_j\right)^2 + \sum_{K \in \set{0,\dots,m}^d \setminus \bm{0}} \frac{1}{\norm{K}_2^2} \left(\langle \tilde{p}, \T_K \rangle - \hat{m}_j\right)^2 \leq 2E.
    \end{align*}
    Above we use that $\tilde{p}$ is a feasible solution to the optimization problem solved in \Cref{alg:dp_algorithm_high} and, since $q$ is the optimum,  $\sum_{K \in \set{0,\dots,m}^d \setminus \bm{0}} \left(\langle {q}, \T_K \rangle - \hat{m}_j\right)^2 \leq \sum_{K \in \set{0,\dots,m}^d \setminus \bm{0}} \frac{1}{\norm{K}_2^2} \left(\langle \tilde{p}, \T_K \rangle - \hat{m}_j\right)^2$. 
    It follows that $\E[\Gamma^2] \leq 2\E[E]$, and, via Jensen's inequality, that $\E[\Gamma] \leq \sqrt{2\E[E]}$. Plugging into \Cref{thm:master_thm_high}, we have for constant $c$: 
    \begin{align}
    \label{eq:final_triangle_high}
    \E[W_1(\tilde{p},q)] &\leq \sqrt{\frac{d \pi^d}{2}}\E[\Gamma] + \frac{cd}{m} \nonumber \\
     & \leq \sqrt{\frac{d \pi^d}{2}} \sqrt{2 \normsum \sigma^2} + \frac{cd}{m} \mper
    \end{align}
    From the bound on $\sigma^2$ computed above, and from the upper bound on $\normsum$ in \Cref{lem:normsum}, we get that 
    \begin{align*}
        \sqrt{\frac{d \pi^d}{2}} \sqrt{2 \normsum \sigma^2} & \leq \sqrt{\frac{d \pi^d}{2}} \sqrt{\frac{8 \ln(1.25/\delta) \cdot 2^d}{\pi^d}} \cdot \frac{S}{n\e} \\
        & \leq \sqrt{\frac{d \pi^d}{2}} \sqrt{\frac{8 \ln(1.25/\delta) \cdot 2^d}{\pi^d}} \cdot \frac{4 (\pi e)^{d/2}}{2^d} \cdot \frac{m^{d-1}}{d n \e} \\ & = 8 \sqrt{\ln(1.25/\delta)} (\pi e /2)^{d/2} \cdot \frac{m^{d-1}}{\sqrt{d} n \e} \mper
    \end{align*}
    Therefore, for some absolute constant $c_2$, setting $m = c_2 \paren{\frac{ dn \e  }{\ln^{0.5}(1.25/\delta)}}^{\frac{1}{d}}$, we get from \Cref{eq:final_triangle_high}, 
    \begin{align*}
        \E[W_1(\tilde{p},q)] & \leq c_3 \cdot d \cdot \paren{\frac{\ln^{0.5}(1.25/\delta)}{ n \e}}^{1/d} \mcom
    \end{align*}
    for an absolute constant $c_3$. By triangle inequality $W_1(p,q) \leq W_1(p,\tilde p)+ W_1(\tilde p, q)$ and using the bound on  $W_1(p,\tilde p)$ in \Cref{eq:rounding_error_high}, we get that for an absolute constant $c_4$,
    \[  \E[W_1(p,q)] \leq c_4 d \paren{\frac{1+ \ln^{0.5}(1.25/\delta)}{n \e}}^{1/d} \mper  \]
    ~
    \paragraph{Runtime. } The number of points in the grid in \Cref{alg:dp_algorithm_high} is upper bounded by $\abs{\cG} = (1+ 2\ceil{\e n}^{1/d})^d = \bigo{ 2^d \ceil{n\e}^{1/d}}$. The number of Chebyshev moments we calculate is less than $(m+1)^d = \bigo{2^d \ceil{(dn\e)}^{1/d}}$. Since the optimization problem runs in polynomial time in its variables and constraints, we get that the running time of the algorithm is bounded by $\poly\paren{n,\e,\delta,2^d}$.
\end{proof}

\begin{remark}[Comparison to \citet{boedihardjo2022private}]
    We remark that \citet{boedihardjo2022private} use the $\ell_\infty$ metric instead of the $\ell_2$ metric for the Wasserstein-$1$ distance, and they achieve an error of $\E[W_1(p,q)] \leq \bigo{\frac{\log^{1.5}(\e n)}{\e n}}^{1/d}$. Since the Wasserstein-$1$ distance in the $\ell_\infty$ metric is bounded by $1$, their bound is non-vacuous for $d = \bigo{\log n}$. For $d = \bigo{\log n}$, our bound matches their bound to $\log(n)$-factors.  
\end{remark}

Finally, we show how to upper bound $\normsum$. 
\begin{lemma}\label{lem:normsum}
    Let $m \in \Np$ and $d \geq 2$. Then, we have that 
    \[ \normsum \seteq \sum_{K \in \set{0,\dots,m}^d \setminus \bm 0} \frac{1}{\norm{K}_2} \leq \frac{4 (\pi e)^{d/2}}{2^d} \cdot \frac{m^{d-1}}{d} \mper \]
\end{lemma}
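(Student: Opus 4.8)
The plan is a sum‑to‑integral comparison, organized according to how many coordinates of $K$ vanish. Write $K=(k_1,\dots,k_d)$ and let $j$ be the number of indices with $k_i=0$. Summing over which $j$ coordinates vanish (there are $\binom{d}{j}$ choices, each contributing the same amount by symmetry) gives
\[
\normsum \;=\; \sum_{l=1}^{d}\binom{d}{l}\,S_l,\qquad S_l:=\sum_{K'\in\{1,\dots,m\}^{l}}\frac{1}{\norm{K'}_2},
\]
where $l=d-j$ is the number of \emph{nonzero} coordinates and the $j=d$ term is absent precisely because $\bm 0$ is excluded. So it suffices to bound each $S_l$.

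For $l\ge 2$ I would bound $S_l$ by an integral. To each $K'\in\{1,\dots,m\}^{l}$ associate the unit cube $Q_{K'}=\prod_{i=1}^{l}[\,k'_i-1,\,k'_i\,]\subseteq[0,m]^{l}$; these cubes are interior‑disjoint and tile $[0,m]^{l}$. Every $x\in Q_{K'}$ satisfies $0\le x_i\le k'_i$, hence $\norm{x}_2\le\norm{K'}_2$ and therefore $\tfrac{1}{\norm{K'}_2}\le\int_{Q_{K'}}\tfrac{dx}{\norm{x}_2}$. Summing over $K'$ and rescaling $x\mapsto mx$,
\[
S_l\;\le\;\int_{[0,m]^{l}}\frac{dx}{\norm{x}_2}\;=\;m^{\,l-1}\int_{[0,1]^{l}}\frac{dx}{\norm{x}_2}.
\]
For $l=1$ this breaks down (the integrand is not integrable at the origin), so there I bound directly: $S_1=\sum_{k=1}^{m}\tfrac1k\le 1+\ln m$. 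To control $\int_{[0,1]^{l}}\tfrac{dx}{\norm{x}_2}$ I would enlarge the cube to the ball of radius $\sqrt l$ intersected with the positive orthant and pass to polar coordinates, using that the surface area of $S^{l-1}$ is $2\pi^{l/2}/\Gamma(l/2)$, to get $\int_{[0,1]^{l}}\tfrac{dx}{\norm{x}_2}\le \tfrac{1}{2^{l}}\cdot\tfrac{2\pi^{l/2}}{\Gamma(l/2)}\cdot\tfrac{(\sqrt l)^{\,l-1}}{l-1}$. Stirling's inequality (in the form $\Gamma(l/2)\ge c\,l^{-1/2}(l/2e)^{l/2}$) then turns this into a bound of shape $C\,(\pi e)^{l/2}2^{-l}\cdot(\text{polynomial in }l)^{-1}$.

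Finally I would assemble: plug these estimates into $\normsum=\sum_{l}\binom{d}{l}S_l$, collapse $\sum_{l}\binom{d}{l}m^{l-1}(\cdots)$ using the binomial theorem, observe that the $l=d$ term dominates, and absorb the $l=1$ logarithmic term, arriving at $\normsum\le\tfrac{4(\pi e)^{d/2}}{2^{d}}\cdot\tfrac{m^{d-1}}{d}$. The step I expect to be the real work is this last one — the constant bookkeeping needed to verify that the contributions from $l<d$ (and the $\ln m$ term) are genuinely swallowed by the $l=d$ term \emph{with the stated constant}; this requires $m$ not to be too small, which is automatic in the application since there $m=\lceil 2(\e n)^{1/d}\rceil$ is a constant multiple of $(\e n)^{1/d}$, and the finitely many remaining small‑$m$ cases can be checked by hand. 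Everything else is routine: the sum‑to‑integral passage is just monotonicity of $1/\norm{\cdot}_2$ along $Q_{K'}$, and the integral evaluation is a one‑line polar‑coordinate computation followed by Stirling.
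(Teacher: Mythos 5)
Your individual estimates are all fine, and your decomposition over the set of nonzero coordinates is in fact a more careful accounting than the paper's own proof: the paper compares the whole sum to $d+\int_{K\in[0,m]^d,\,\norm{K}_2\ge 1}\norm{K}_2^{-1}\,dK$, which silently discards the exponentially many lattice points having some zero coordinates, exactly the points your terms $\binom{d}{l}S_l$ with $l<d$ keep track of. The genuine gap is the step you yourself flag as ``the real work'': the final assembly is not carried out, and it cannot be carried out as proposed. The leftover ``small-$m$'' cases are not finitely many ($d$ is unbounded), and in that regime the inequality you are trying to prove is simply false: at $m=1$, $d=5$ the left-hand side is $\sum_{l=1}^{5}\binom{5}{l}l^{-1/2}\approx 20.8$ while the right-hand side is $\tfrac{4(\pi e)^{5/2}}{2^5\cdot 5}\approx 5.3$; more generally, for $m=1$ one has $\normsum\ge (2^d-1)/\sqrt{d}$, which outgrows $\tfrac{4}{d}(\sqrt{\pi e}/2)^d$ because $\sqrt{\pi e}/2<2$. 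So the face terms your decomposition isolates genuinely dominate when $m$ is small relative to $d$ and cannot be ``swallowed'' by the $l=d$ term; no constant bookkeeping or hand-checking of cases can close this, and a correct statement must either tie $m$ to $d$ or weaken the right-hand side.

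There is also a quantitative obstruction even for large $m$: with the enclosure of $[0,1]^l$ in the ball of radius $\sqrt{l}$, your own Stirling estimate makes the $l=d$ term alone of order $\tfrac{(2\pi e)^{d/2}}{2^{d}}\cdot\tfrac{m^{d-1}}{d}$, which already exceeds the target $\tfrac{4(\pi e)^{d/2}}{2^{d}}\cdot\tfrac{m^{d-1}}{d}$ by a factor of order $2^{d/2}$, so this route cannot ``arrive at'' the advertised constant; to do better one should bound the cube integral directly, e.g.\ $\int_{[0,1]^l}\norm{x}_2^{-1}dx\le\int_{[0,1]^l}(\max_i x_i)^{-1}dx=\tfrac{l}{l-1}\le 2$, which is far below the ball estimate and yields clean bounds like $\normsum\le d(1+\ln m)+\tfrac{2}{m}(m+1)^d$. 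For what it is worth, the paper's proof shares both defects: its sum-to-integral comparison drops the boundary lattice points with only a ``$+d$'' correction, and its last Stirling computation actually produces $\tfrac{(\pi e)^{d/2}}{2^{d/2}}\cdot\tfrac{m^{d-1}}{d-1}$ rather than the stated bound. The correct conclusion from your (sound) partial work is therefore that \Cref{lem:normsum} as stated needs repair—with the regime of $m$ versus $d$ restricted or the bound enlarged, and the use in \Cref{sec:dp_high} adjusted accordingly—not that the remaining assembly is routine.
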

\begin{proof}
    Note that the function $\frac{1}{\norm{K}_2}$ is decreasing in $\norm{K}_2$. Moreover, for $K \in \set{0,\dots,m}^d \setminus \bm 0$, $\norm{K}_2 \geq 1$. Thus, we can upper bound the sum by the integral as follows: 
    \begin{equation*}
        \sum_{K \in \set{0,\dots,m}^d \setminus \bm 0} \frac{1}{\norm{K}_2} \leq  d + \int_{K \in [0, m]^d, \norm{K}_2 \geq 1} \frac{1}{\norm{K}_2} \rd K \mper
    \end{equation*}
    Using the fact that for $K \in \set{0,\dots,m}^d \setminus \bm 0$, $\norm{K}_2 \leq m \sqrt{d}$, and that the $m\sqrt{d}$ ball contains the hypercube $[0,m]^d$, we get that 
    \[ \int_{K \in [0, m]^d, \norm{K}_2 \geq 1} \frac{1}{\norm{K}_2} \rd K \leq \frac{1}{2^d} \int_{ 1\leq \norm{K}_2 \leq m\sqrt{d}} \frac{1}{\norm{K}_2} \rd K \mcom \]
    where the factor of $1/2^d$ comes due to symmetry and the fact that the set $K \in [0,m]^d$, i.e., the vectors in $K$ contains non-negative entries, which is only $1/2^d$ fraction of vectors in the set $\set{K : 1 \leq \norm{K}_2 \leq m \sqrt{d}}$. To compute the integral, we can transition to polar coordinates. Let $\norm{K}_2 = r$, then we get that  
    \[ \int_{ 1\leq \norm{K}_2 \leq m\sqrt{d}} \frac{1}{\norm{K}_2} \rd K = \int_{r=1}^{m \sqrt{d}} \int_{\Omega} \frac{1}{r} \cdot r^{d-1} \rd \Omega \rd r \mcom  \]
    where $\Omega$ is the \emph{angular domain} in spherical coordinates. Separating the terms in the integral gives,
    \[ \int_{r=1}^{m \sqrt{d}} \int_{\Omega} \frac{1}{r} \cdot r^{d-1} \rd \Omega \rd r  = \int_{r=1}^{m \sqrt{d}}  r^{d-2} \rd r  \int_{\Omega} \rd \Omega \mper\]
    Using the fact that $\int_{\Omega} \rd \Omega$ is the surface area of the unit sphere in $d$-dimensions, we can use the equation for the surface area of the unit sphere, see e.g. \cite{boyd2004convex}, to evaluate the integral as
    \[ \int_{r=1}^{m \sqrt{d}}  r^{d-2} \rd r \cdot  \int_{\Omega} \rd \Omega = \frac{(m\sqrt{d})^{d-1} - 1}{d-1} \cdot \frac{2 \pi^{d/2}}{\Gamma(d/2)} \mcom \]
    where $\Gamma$ is the gamma function. Combining all the terms, we get that 
    \[ \sum_{K \in \set{0,\dots,m}^d \setminus \bm 0} \frac{1}{\norm{K}_2} \leq d+ \frac{1}{2^d} \cdot  \frac{(m\sqrt{d})^{d-1} - 1}{d-1} \cdot \frac{2 \pi^{d/2}}{\Gamma(d/2)} \mper  \]
    By Stirling's approximation $\Gamma(d/2) \geq \sqrt{2} (d/2)^{d/2-1/2} e^{-d/2}$, and observing that the second term dominates in the RHS above, we have 
    \[ \sum_{K \in \set{0,\dots,m}^d \setminus \bm 0} \frac{1}{\norm{K}_2} \leq \frac{4 (\pi e)^{d/2}}{2^d} \cdot \frac{m^{d-1}}{d} \mper \qedhere \]
\end{proof}

\section{Accuracy of Generic Moment Matching Algorithm}
\label{app:corr_proof}
In this section, we give the full proof of \Cref{corr:recovery}, which establishes the accuracy of the generic Chebyshev moment regression algorithm (\Cref{alg:weighted_regression}). We require the following basic property about the Chebyshev nodes:
\begin{lemma}[Chebyshev Node Approximation] \label{lem:cheb_nodes}
    Let $\cC = \set{x_1,\dots,x_{g}}$ be the degree $g$ Chebyshev nodes. I.e., $x_i = \cos\paren{\frac{2i-1}{2g} \pi}$. Let $r_{\cC}: [-1,1] \to \cC$ be
    a function that maps a point $x \in [-1,1]$ to the point $y \in \cC$ that minimizes $\abs{\cos^{-1}(x) - \cos^{-1}(y)}$, breaking ties arbitrarily. For any $x\in [-1,1]$, $\abs{\cos^{-1}(x) - \cos^{-1}(r_{\cC}(x))} \leq \frac{\pi}{2g}$.
\end{lemma}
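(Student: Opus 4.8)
The plan is to pass from the variable $x \in [-1,1]$ to the angle $\theta = \cos^{-1}(x) \in [0,\pi]$, where the Chebyshev nodes of the first kind become an equally spaced grid. First I would record that $\cos^{-1}$ is a strictly decreasing bijection from $[-1,1]$ onto $[0,\pi]$, so that $\cos^{-1}(x_i) = \theta_i \defeq \frac{2i-1}{2g}\pi$ for $i = 1,\dots,g$. Because $\cos^{-1}$ is monotone, minimizing $\abs{\cos^{-1}(x) - \cos^{-1}(y)}$ over $y \in \cC$ is literally the same problem as minimizing $\abs{\theta - \theta_i}$ over $i \in \{1,\dots,g\}$; hence $\abs{\cos^{-1}(x) - \cos^{-1}(r_{\cC}(x))} = \min_{1 \le i \le g}\abs{\theta - \theta_i}$, and it suffices to bound this minimum by $\frac{\pi}{2g}$.

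Next I would observe that the angles satisfy $\theta_1 = \frac{\pi}{2g}$, $\theta_g = \pi - \frac{\pi}{2g}$, and $\theta_{i+1} - \theta_i = \frac{\pi}{g}$ for every $i$. So for an arbitrary $\theta \in [0,\pi]$ there are three cases: if $\theta \le \theta_1$ then $\abs{\theta - \theta_1} \le \frac{\pi}{2g}$; if $\theta \ge \theta_g$ then $\abs{\theta - \theta_g} \le \frac{\pi}{2g}$; and if $\theta \in [\theta_i,\theta_{i+1}]$ for some $1 \le i < g$, then $\min(\abs{\theta - \theta_i},\abs{\theta-\theta_{i+1}}) \le \tfrac12(\theta_{i+1}-\theta_i) = \frac{\pi}{2g}$. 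In every case $\min_i \abs{\theta-\theta_i} \le \frac{\pi}{2g}$, which gives the lemma.

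There is essentially no real obstacle here; the only points requiring a touch of care are the reduction in the first step (which is immediate from monotonicity of $\cos^{-1}$) and the bookkeeping at the two endpoint intervals $[0,\theta_1]$ and $[\theta_g,\pi]$, whose half-widths are exactly $\frac{\pi}{2g}$ precisely because we use the Chebyshev nodes of the \emph{first} kind (for Chebyshev points of the second kind the endpoints would be $\theta = 0,\pi$ themselves and a slightly different constant would appear). I would make sure the statement's tie-breaking convention for $r_{\cC}$ is harmless, which it is since we only need an upper bound on the distance to the chosen node.
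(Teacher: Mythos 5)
Your proposal is correct and follows essentially the same route as the paper: both arguments use monotonicity of $\cos^{-1}$ to reduce to the equally spaced angles $\theta_i = \frac{2i-1}{2g}\pi$ with spacing $\frac{\pi}{g}$, bound the distance by half the spacing on interior intervals, and check the two endpoint intervals of half-width $\frac{\pi}{2g}$ separately. No gaps.
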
    
\begin{proof}
    For any two consecutive points $x_i, x_{i+1}$ in the $\cC$, 
    \begin{align*}
    \abs{\cos^{-1}(x_i) - \cos^{-1}(x_{i+1})} = \frac{\pi}{g}.
    \end{align*}
    Since $\cos^{-1}(x)$ is non-increasing, for any $x \in [x_{i+1}, x_i]$, $\cos^{-1}(x)\in [\cos^{-1}(x_i),\cos^{-1}(x_{i+1})]$. So, $\cos^{-1}(x)$ has distance at most $\frac{\pi}{2g}$ from either $\cos^{-1}(x_i)$ or $\cos^{-1}(x_{i+1})$. Additionally, we can check that $\abs{\cos^{-1}(x) - \cos^{-1}(x_1)} \leq \frac{\pi}{2g}$ for any $x < x_1$ and  $\abs{\cos^{-1}(x) - \cos^{-1}(x_g)} \leq \frac{\pi}{2g}$ for any $x> x_g$.
\end{proof}
With \Cref{lem:cheb_nodes} in place, we are ready to prove  \Cref{corr:recovery}.
\begin{proof}[Proof of \Cref{corr:recovery}]
Let $\cC$ and $r_{\cC}:[-1,1] \to \cC$ be as in \Cref{lem:cheb_nodes}. For $i \in \set{1,\dots,g}$, let $Y_i$ be the set of points in $[-1,1]$ that are closest to $x_i \in \cC$, i.e., $Y_i = \set{x \in [-1,1]: r_{\cC}(x) = x_i}$. Let $\tilde p$ be a distribution supported on the set $\cC$ with mass $\int_{Y_i} p(x) \, dx$ on $x_i \in \cC$. For all $j\in 1, \ldots, k$ we have:
\begin{align}
  \abs{\langle p, \T_j \rangle - \langle \tilde p, \T_j\rangle } 
   & = \abs{\sum_{i=1}^{g} \int_{Y_i} \T_j(x) p(x) \, dx- \paren{\int_{Y_i} p(x) \, dx} \T_j(x_i)} \nonumber \\
   & = \abs{\sum_{i=1}^{g} \paren{\int_{Y_i} p(x) \, dx} \T_j(y_i) - \paren{\int_{Y_i} p(x) \, dx} \T_j(x_i)} \qquad \paren{\text{for some $y_i \in Y_i$}} \nonumber \\ 
   & \leq \sum_{i=1}^{g} \paren{\int_{Y_i} p(x) \, dx}  \abs{\T_j(y_i) - \T_j(x_i)} \nonumber\\
  & = \sum_{i=1}^{g} \paren{\int_{Y_i} p(x) \, dx} \cdot \sqrt{\frac{2}{\pi}} \cdot \abs {\cos(j \cos^{-1}(y_i)) -  \cos(j \cos^{-1}(x_i))} \nonumber\\
  & \leq \sum_{i=1}^{g}\paren{\int_{Y_i} p(x) \, dx} \cdot \sqrt{\frac{2}{\pi}} \cdot \frac{j\pi}{2g}  = \frac{j \sqrt{\pi/2}}{g} \label{eq:p_tilde_p} 
\end{align}
The second equality follows from the intermediate value theorem. The first inequality follows by triangle inequality. The third equality follows by the trigonometric definition of the (normalized) Chebyshev polynomials. The second inequality follows from \Cref{lem:cheb_nodes} and the fact that the derivative of $\cos(j x)$ is bounded by $j$. The bound in \eqref{eq:p_tilde_p} then yields:
\begin{align}
\left(\sum_{j=1}^k \frac{1}{j^2} \paren{\langle p, \T_j \rangle - \langle \tilde p, \T_j \rangle}^2\right)^{1/2} \leq \frac{\sqrt{\pi k/2}}{g} \label{eq:mhat_tilde_p}.
\end{align}
Observe also that, since $\tilde{p}$ is supported on $\cC$, it is a valid solution to the optimization problem solved by \Cref{alg:weighted_regression}. Accordingly, we have that:
\begin{align}
\label{eq:compare_to_opt}
    \left(\sum_{j=1}^k \frac{1}{j^2} \paren{\hat{m}_j - \langle {q}, \T_j \rangle}^2\right)^{1/2} \leq \left(\sum_{j=1}^k \frac{1}{j^2} \paren{\hat{m}_j - \langle \tilde{p}, \T_j \rangle}^2\right)^{1/2}
\end{align}
Applying triangle inequality, followed by \eqref{eq:compare_to_opt}, triangle inequality again, and finally \eqref{eq:mhat_tilde_p}, we have:
\begin{align*}
\left(\sum_{j=1}^k \frac{1}{j^2} \paren{\langle p, \T_j \rangle - \langle {q}, \T_j \rangle}^2\right)^{1/2} \hspace{-.7em} &\leq \left(\sum_{j=1}^k \frac{1}{j^2} \paren{\langle p, \T_j \rangle - \hat{m}_j}^2\right)^{1/2} + \left(\sum_{j=1}^k \frac{1}{j^2} \paren{\hat{m}_j - \langle {q}, \T_j \rangle}^2\right)^{1/2}\\
& \leq \left(\sum_{j=1}^k \frac{1}{j^2} \paren{\langle p, \T_j \rangle - \hat{m}_j}^2\right)^{1/2} + \left(\sum_{j=1}^k \frac{1}{j^2} \paren{\hat{m}_j - \langle \tilde{p}, \T_j \rangle}^2\right)^{1/2}\\
& \leq 2\left(\sum_{j=1}^k \frac{1}{j^2} \paren{\langle p, \T_j \rangle - \hat{m}_j}^2\right)^{1/2} + \left(\sum_{j=1}^k \frac{1}{j^2} \paren{\langle {p}, \T_j \rangle - \langle \tilde{p}, \T_j \rangle}^2\right)^{1/2} \\
&\leq 2\Gamma + \frac{\sqrt{2\pi k}}{g}.
\end{align*}
Setting $g  = \lceil k^{1.5} \rceil$, we can apply \Cref{thm:master_thm} to conclude that, for a fixed constant $c'$,
\begin{align*}
W_1(p,q) \leq \frac{c}{k} + 2\Gamma + \frac{\sqrt{\pi/2}}{k} \leq c' \cdot \left(\frac{1}{k} + \Gamma\right). &\qedhere
\end{align*}
\end{proof}

\section{High Probability Bound for Private Synthetic Data}\label{app:concentration}
In this section, we prove the high probability bound on Wasserstein distance stated in \Cref{thm:synth_data}, which follows from a standard concentration bound for sub-exponential random variables \cite{wainwright_2019}. We recall that a random variable $X$ is subexponential with parameters $(\nu, \alpha)$ if:
\begin{align*}
\E[e^{\lambda(X-\E[X])}]&\leq e^{\nu^2\lambda^2/2} & &\text{for all} & |\lambda| &\leq \frac{1}{\alpha}.
\end{align*}
We require the following well-known fact that a chi-square random variable with one degree of freedom is subexponential:
\begin{fact}[Sub-Exponential Parameters {\cite[Example 2.8]{wainwright_2019}}] \label{lem:sub_exp_of_gau_sq}
    Let $\eta \sim \cN(0,\sigma^2)$. Then, $\eta^2$ is sub-exponential random variable with parameters $(2\sigma^2, 4\sigma^2)$.
\end{fact}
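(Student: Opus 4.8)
\textbf{Proof proposal for \Cref{lem:sub_exp_of_gau_sq}.} This is the classical statement that the square of a centered Gaussian is sub-exponential, and a cited fact, but here is how I would prove it directly. Since $\E[\eta^2]=\sigma^2$, unpacking the definition of sub-exponentiality with $\nu=2\sigma^2$ and $\alpha=4\sigma^2$, the claim is exactly the moment generating function bound
\[
\E\!\left[e^{\lambda(\eta^2-\sigma^2)}\right]\le e^{2\sigma^4\lambda^2}\qquad\text{for all }|\lambda|\le \tfrac{1}{4\sigma^2}.
\]
The plan is a one-line reduction to the standard normal followed by an elementary analysis of the resulting scalar inequality.

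First I would write $\eta=\sigma Z$ with $Z\sim\cN(0,1)$, so $\eta^2-\sigma^2=\sigma^2(Z^2-1)$, and substitute $t\defeq\sigma^2\lambda$; the range $|\lambda|\le 1/(4\sigma^2)$ becomes $|t|\le 1/4$ and the target becomes $\E[e^{t(Z^2-1)}]\le e^{2t^2}$. Using the standard Gaussian integral $\E[e^{tZ^2}]=(1-2t)^{-1/2}$, which is finite precisely for $t<1/2$ and hence on all of $|t|\le 1/4$, it remains to show
\[
h(t)\defeq -\tfrac12\log(1-2t)-t-2t^2\le 0\qquad\text{for }|t|\le \tfrac14.
\]
Expanding $-\tfrac12\log(1-2t)=\sum_{k\ge1}\frac{(2t)^k}{2k}$, the $k=1$ term cancels the $-t$ and the $k=2$ term equals $t^2$, so $h(t)=-t^2+\sum_{k\ge3}\frac{(2t)^k}{2k}$. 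Bounding the tail geometrically, for $|t|\le 1/4$ (hence $|2t|\le 1/2$) we get $\bigl|\sum_{k\ge3}\frac{(2t)^k}{2k}\bigr|\le \tfrac16\cdot\frac{|2t|^3}{1-|2t|}\le \frac{|2t|^3}{3}=\frac{8|t|^3}{3}\le \frac{2}{3}t^2$, whence $h(t)\le -t^2+\tfrac23 t^2=-\tfrac13 t^2\le 0$.

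There is essentially no obstacle, since this is a textbook computation; the only points requiring a little care are (i) checking that the chi-squared moment generating function is finite on the full interval $|t|\le1/4$ (it is, as $1/4<1/2$), and (ii) making the logarithmic-series tail estimate uniform over both signs of $t$, which the absolute-value bound above handles. Once this fact is in place, the high-probability version of \Cref{thm:synth_data} follows by applying a standard Bernstein-type tail bound to the sub-exponential sum $\sum_{j=1}^{k}\frac{1}{j^2}\eta_j^2$ appearing in \eqref{eq:basic_exp_bound}, whose parameters aggregate from \Cref{lem:sub_exp_of_gau_sq} since $\eta_j\sim\cN(0,j\sigma^2)$, and then combining with \Cref{thm:master_thm} and the triangle inequality exactly as in the expectation bound.
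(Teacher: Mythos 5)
Your proof is correct, and since the paper does not prove this fact itself (it is cited directly from Wainwright, Example 2.8), your argument simply reproduces the standard computation underlying that citation: reduce to $Z\sim\cN(0,1)$, use $\E[e^{tZ^2}]=(1-2t)^{-1/2}$, and verify $-\tfrac12\log(1-2t)-t\le 2t^2$ on $|t|\le 1/4$, with the scaling to general $\sigma^2$ handled exactly as you do. The series-tail estimate is valid (indeed $|2t|\le 1/2<1$ keeps both the MGF and the logarithm expansion in range), so there is nothing to fix.
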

We also require the following concentration inequality for a sum of sub-exponential random variable:
\begin{fact}[{\citep[Equation 2.18]{wainwright_2019}}] \label{fact:sub_exp_conc}
Consider independent random variables $\gamma_1, \ldots, \gamma_k$, where, $\forall j\in 1, \ldots,k$, $\gamma_j$ is sub-exponential with parameters $(\nu_j,\alpha_j)$. Let $\nu_{*} = \sqrt{\sum_{j=1}^k \nu_j^2}$ and $\alpha_{*} = \max \set{\alpha_1,\dots,\alpha_k}$. Then we have:
\begin{equation*}
    \Pr{ \sum_{j=1}^{k} \paren{\gamma_j - \E[\gamma_j]} \geq t } \leq 
    \begin{cases}
        \exp\paren{ \frac{-t^2}{2 \nu_{*}^2} } & \text{ for }0 \leq t \leq \frac{\nu_{*}^2}{\alpha_{*}} \mcom \\
        \exp\paren{ \frac{-t}{2 \alpha_{*}} } & \text{ for }t > \frac{\nu_{*}^2}{\alpha_{*}} \mper
    \end{cases}
\end{equation*}
\end{fact}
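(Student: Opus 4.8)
\textbf{Proof plan for \Cref{fact:sub_exp_conc}.} The plan is to prove the two-regime Bernstein-type tail bound by the standard Chernoff (exponential Markov) method: first tensorize the moment generating function (MGF) bound over the independent summands to see that the sum is itself sub-exponential, then apply Markov's inequality to $e^{\lambda S}$ and optimize the Chernoff parameter $\lambda$, splitting into two cases according to whether the unconstrained optimum is admissible.

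\textbf{Step 1: the sum is sub-exponential with parameters $(\nu_*,\alpha_*)$.} Write $S = \sum_{j=1}^k (\gamma_j - \E[\gamma_j])$. Fix any $\lambda$ with $|\lambda| \le 1/\alpha_*$. Since $\alpha_* = \max_j \alpha_j \ge \alpha_j$ for every $j$, this forces $|\lambda| \le 1/\alpha_j$ for all $j$, so the defining sub-exponential MGF bound $\E[e^{\lambda(\gamma_j - \E[\gamma_j])}] \le e^{\nu_j^2\lambda^2/2}$ applies to each $\gamma_j$ simultaneously. By independence,
\[
\E\!\left[e^{\lambda S}\right] = \prod_{j=1}^k \E\!\left[e^{\lambda(\gamma_j - \E[\gamma_j])}\right] \le \prod_{j=1}^k e^{\nu_j^2\lambda^2/2} = e^{\nu_*^2\lambda^2/2},
\]
using $\nu_*^2 = \sum_{j=1}^k \nu_j^2$. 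Hence $S$ satisfies the sub-exponential MGF bound with parameters $(\nu_*,\alpha_*)$, and the remainder is a one-variable tail estimate.

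\textbf{Step 2: Chernoff bound and optimization over the two regimes.} For any $t \ge 0$ and $\lambda \in [0, 1/\alpha_*]$, Markov's inequality applied to the nonnegative random variable $e^{\lambda S}$ gives
\[
\Pr{S \ge t} \le e^{-\lambda t}\,\E[e^{\lambda S}] \le \exp\!\big(-g(\lambda)\big), \qquad g(\lambda) := \lambda t - \tfrac12 \nu_*^2 \lambda^2.
\]
The function $g$ is a concave parabola, maximized over all of $\R$ at $\lambda^\star = t/\nu_*^2$ with value $g(\lambda^\star) = t^2/(2\nu_*^2)$. If $0 \le t \le \nu_*^2/\alpha_*$, then $\lambda^\star \le 1/\alpha_*$ is admissible, and taking $\lambda = \lambda^\star$ yields $\Pr{S \ge t} \le \exp(-t^2/(2\nu_*^2))$, the first case. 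If instead $t > \nu_*^2/\alpha_*$, then $\lambda^\star > 1/\alpha_*$, so $g$ is strictly increasing on $[0,1/\alpha_*]$ and the best admissible choice is the endpoint $\lambda = 1/\alpha_*$, for which
\[
g(1/\alpha_*) = \frac{t}{\alpha_*} - \frac{\nu_*^2}{2\alpha_*^2} = \frac{1}{\alpha_*}\left(t - \frac{\nu_*^2}{2\alpha_*}\right) > \frac{1}{\alpha_*}\left(t - \frac{t}{2}\right) = \frac{t}{2\alpha_*},
\]
where the strict inequality uses the regime hypothesis $\nu_*^2/\alpha_* < t$. Hence $\Pr{S \ge t} \le \exp(-t/(2\alpha_*))$, the second case, completing the proof.

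\textbf{Main obstacle.} There is essentially no substantive obstacle here: this is the textbook Bernstein-inequality-via-Chernoff argument. The only two points requiring a moment's care are (i) that using $\alpha_* = \max_j \alpha_j$ (rather than, say, $\min_j \alpha_j$) is precisely what makes a single $\lambda$ simultaneously admissible for every summand's MGF bound in Step 1, and (ii) the elementary case split on whether the unconstrained Chernoff optimizer $t/\nu_*^2$ lands inside the admissible window $[0,1/\alpha_*]$, together with the one-line inequality $\nu_*^2/(2\alpha_*) < t/2$ that converts the endpoint value into the clean exponent $t/(2\alpha_*)$; both are routine.
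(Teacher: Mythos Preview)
Your proof is correct and is exactly the standard Chernoff argument underlying this Bernstein-type bound. Note that the paper does not actually prove this statement: it is stated as a \emph{Fact} with a citation to \cite[Equation 2.18]{wainwright_2019}, so there is no in-paper proof to compare against; your argument is precisely the textbook derivation one would find in that reference.
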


\begin{proof}[Proof of high-probability bound of \Cref{thm:synth_data}]
    Recalling the proof of the expectation bound of \Cref{thm:synth_data} from \Cref{sec:synth_data}, it suffices to bound $E =\sum_{j=1}^{k} \frac{1}{j^2} \left(\hat{m}_j(p) - \langle \tilde{p}, T_j \rangle\right)^2$ with high probability. 
    Let $\gamma_j =\eta_j^2/j^2$, where $\eta_j \sim \mathcal{N}(0,j\sigma^2)$ is as in \Cref{alg:dp_algorithm}. Then recall that $E = \sum_{j=1}^{k} \gamma_j$. 
    
    From \Cref{lem:sub_exp_of_gau_sq}, $\gamma_j$ is a sub-exponential random variable with parameter $\paren{2\sigma^2/j, 4\sigma^2/j}$. We can then apply \Cref{fact:sub_exp_conc}, for which we have $\nu_* = \sqrt{\sum_{j=1}^{{k}}{4 \sigma^4}/{j^2}} \leq {2 \pi \sigma^2}/{\sqrt{6}}$ and $\alpha_{*} = 4 \sigma^2$. For any failure probability $\beta\in (0,1/2)$, setting $t = 8 \log(1/\beta) \sigma^2$, we conclude that:
    \begin{align*}
    \Pr{ E - \E[E]  \geq 8 \log\paren{{1}/{\beta}} \sigma^2} \leq  \beta \mper
    \end{align*}
    Recalling from \Cref{eq:basic_exp_bound} that $\E[E] \leq (1+\log k) \sigma^2$, we conclude that $E \leq 8 \log\paren{{1}/{\beta}} \sigma^2 + (1+\log k) \sigma^2$ with probability at least $1-\beta$.
    
    The rest of the details follow as before. In particular, as in \Cref{eq:final_triangle}, we can bound:
    \begin{align*}
    W_1(p,q) \leq \sqrt{2}\Gamma + \frac{36}{k} + \frac{1}{2\ceil{\e n}},
    \end{align*}
    where $\Gamma \leq \sqrt{2E}$.
    Plugging in $k = \ceil{2\e n}$ (as chosen in \Cref{alg:dp_algorithm}) and recalling that $\sigma^2 = {\frac{16}{\pi}(1 + \log {k})\ln(1.25 / \delta)}/{(\e^2 n^2)}$, we conclude that with probability $\geq 1-\beta$, for a fixed constant $c$,
    \begin{align*}
        W_1(p,q) \leq c\left(\frac{\sqrt{\log (\e n) + \log(1/\beta)} \sqrt{\log (\e n)\log({1/\delta})}}{\e n}\right)\mper &\qedhere
    \end{align*}
\end{proof}


\section{Spectral Density Estimation Lower Bound}\label{app:sde_lower_bound}

In this section, we provide a lower bound on the number of matrix-vector multiplications required for spectral density estimation, showing that our upper bound in  \Cref{cor:sde} is optimal up to logarithmic factors. We first need the following theorem from \cite{Meyer:2024}, which shows that estimating the trace of a positive semi-definite matrix $A$ to within a multiplicative error of $(1 \pm \e)$ requires $\Omega(1/\e)$ matrix-vector multiplications with $A$. 

\begin{theorem}[Restated {\citep[Theorem 17]{Meyer:2024}}, see also {\citep[Theorem 17]{SwartworthWoodruff:2023}}] \label{thm:trace_lb}
    Any algorithm that is given matrix-vector multiplication access to a positive semi-definite (PSD) input matrix $A \in \R^{n \times n}$ with $\norm{A}_2 \leq 1$, $n/4 \leq \tr(A) \leq n$ and succeeds with probability at least $2/3$ in outputting an estimate $\tilde t$ such that $\abs{\tilde t - \tr(A)} \leq \e \cdot \tr(A)$ requires $\Omega\paren{\frac{1}{\e}}$ matrix-vector multiplications with $A$.
\end{theorem}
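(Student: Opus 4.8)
The plan is to prove the bound in the standard way for the matrix-vector query model: exhibit a hard distribution over PSD matrices and, via Yao's principle, reduce to a binary hypothesis-testing problem. We may assume $c_0/\e \le n \le C_0/\e$ for absolute constants $c_0, C_0$: if $n$ is smaller the bound $\Omega(1/\e)$ exceeds $n$ and is vacuous, and if $n$ is larger a plain Hutchinson estimator already achieves relative error $\e$ with $O(1)$ queries, so there is nothing to prove. Fix $n = \Theta(1/\e)$, set $d_0 = 2n$ and $d_1 = d_0 + \lceil \kappa \e n \rceil$ for a suitably large constant $\kappa$, and let $\alpha = \Theta(1/n)$. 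For $b \in \{0,1\}$ draw $G_b \in \R^{n \times d_b}$ with i.i.d.\ $\mathcal{N}(0,1)$ entries and put $A_b = \alpha\, G_b G_b^{\top}$. Concentration of the spectral norm of a Gaussian matrix ($\norm{G_b}_2 \le \sqrt{n} + \sqrt{d_b} + O(1)$ with probability $\ge 0.995$) and of $\fnorm{G_b}^2 \sim \chi^2_{n d_b}$ give that, with probability $\ge 0.99$, the matrix $A_b$ satisfies $\norm{A_b}_2 \le 1$ and $\tr(A_b) \in [n/4, n]$ for appropriate choices of $\alpha$ and the constant in $d_0$. Since $\E[\tr(A_b)] = \alpha n d_b$, the two traces differ in expectation by $\alpha n (d_1 - d_0) = \Theta(\e n)$, which — for $\kappa$ large enough and because $\tr(A_b)$ concentrates around its mean to within $O(1)$ — is at least $4\e \max_b \tr(A_b)$. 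Hence any algorithm returning $\tilde t$ with $|\tilde t - \tr(A)| \le \e\, \tr(A)$ with probability $\ge 2/3$ yields, after thresholding at (a known) value near $n/3$, a test deciding $b$ correctly with probability $\ge 2/3 - 0.02$.

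The second step is to show that adaptive matrix-vector queries to $A_b$ reveal essentially nothing beyond a small non-adaptive sketch. Because the law of $G_b$ is invariant under left-multiplication by orthogonal matrices, $A_b$ is a rotationally-invariant random PSD matrix, and a standard argument in this area (the ``Wishart''/random-rotation lemma underlying matrix-vector lower bounds for eigenvalue and trace estimation) shows that the transcript of any $t$-query algorithm is, in distribution, a measurable function of the algorithm's internal randomness, an orthonormal basis $V \in \R^{n \times t}$ of the queried subspace, and the compressed matrix $V^{\top} A_b V = \alpha (V^{\top} G_b)(V^{\top} G_b)^{\top}$, whose conditional law given $V$ is $\alpha$ times a $\mathrm{Wishart}_t(d_b)$ distribution. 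By the data-processing inequality it therefore suffices to lower bound the sample complexity of distinguishing $\mathrm{Wishart}_t(d_0)$ from $\mathrm{Wishart}_t(d_1)$: concretely, I would show $d_{\mathrm{TV}}\big(\mathrm{Wishart}_t(d_0), \mathrm{Wishart}_t(d_1)\big) \le 1/3$ whenever $t \le c_1/\e$ for a small enough constant $c_1$, which contradicts the existence of the test above and forces $t = \Omega(1/\e)$.

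The third step — and the main technical obstacle — is this total-variation bound, which I would obtain by bounding the KL divergence and applying Pinsker. The log-likelihood ratio between the two Wishart laws depends only on the eigenvalues of the observed matrix (a Laguerre-type ensemble), and raising the degrees of freedom from $d$ to $d + \Delta$ perturbs the density in a controlled, log-concave way; a direct computation with the Wishart density (or, more robustly, the identity $\tr(W) \sim \chi^2_{t d}$ together with a chi-square KL estimate for the leading part of the likelihood ratio) gives $\mathrm{KL}\big(\mathrm{Wishart}_t(d_0)\,\Vert\,\mathrm{Wishart}_t(d_1)\big) = O\!\big(t\,(d_1-d_0)^2/d_0\big)$. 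With $d_1 - d_0 = \Theta(\e n)$ and $d_0 = \Theta(n) = \Theta(1/\e)$ this is $O(t \e^2 n) = O(t\e)$, so choosing $t \le c_1/\e$ makes the KL at most $1/9$ and hence $d_{\mathrm{TV}} \le 1/3$. The delicate points are: (i) pinning down the absolute constants so the KL bound genuinely beats the $2/3$ success probability after the $0.02$ slack from the failure events; (ii) making the reduction to the $t \times t$ sketch rigorous in the adaptive setting, where the queried subspace $V$ is itself a random function of the responses (the rotational-invariance/exchangeability argument must be applied to the sequence of conditional posteriors, not once globally); and (iii) absorbing the events on which $A_b$ violates $\norm{A_b}_2 \le 1$ or the trace bounds into the overall failure probability. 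Since the theorem is quoted verbatim from \citet{woodruff2022optimal}, I would follow their construction and KL/entropy estimates for the exact constants rather than re-deriving them here.
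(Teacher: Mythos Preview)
The paper does not contain a proof of this statement: it is quoted verbatim as a result of \citet{woodruff2022optimal} and used as a black box to derive the spectral-density lower bound. So there is no ``paper proof'' to compare against; your proposal is effectively a reconstruction of the argument from the cited reference. The route you outline (a hard distribution of scaled Wishart matrices, rotational invariance to reduce an adaptive $t$-query transcript to the law of a $t\times t$ Wishart sketch, and a KL/TV bound between $\mathrm{Wishart}_t(d_0)$ and $\mathrm{Wishart}_t(d_1)$) is indeed the standard template behind such bounds, and is what the cited paper uses.

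One genuine slip in your framing: your reduction to $n = \Theta(1/\e)$ is justified with ``if $n$ is larger, a plain Hutchinson estimator already achieves relative error $\e$ with $O(1)$ queries, so there is nothing to prove.'' This is backwards in two ways. First, the quantitative claim is off: for PSD $A$ with $\norm{A}_2 \le 1$ and $\tr(A)\ge n/4$, the one-sample variance of Hutchinson is $O(\|A\|_F^2)\le O(\tr(A))$, so achieving relative error $\e$ with $O(1)$ samples needs $\tr(A)=\Omega(1/\e^2)$, i.e.\ $n=\Omega(1/\e^2)$, not $n=\Omega(1/\e)$. Second, and more importantly, even if the claim were true it would not make the lower bound ``vacuous'' --- it would make it \emph{false} for that range of $n$. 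The correct justification is simply that a lower bound only requires exhibiting a single hard family, so you are free to place your construction at $n=\Theta(1/\e)$; no appeal to Hutchinson is needed (or helpful). Once this is fixed, the remainder of your sketch --- the Wishart construction, the rotational-invariance reduction, and the KL estimate $\mathrm{KL}=O(t(d_1-d_0)^2/d_0)=O(t\e)$ --- is the right skeleton.
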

As a corollary of this result, we obtain the following lower bound, which shows that \Cref{cor:sde} is tight up to $\log(1/\epsilon)$ factors:
\begin{corollary}
    Any algorithm that is given matrix-vector multiplication access to a symmetric matrix $A \in \R^{n \times n}$ with spectral density $p$ and $\norm{A}_2 \leq 1$ requires $\Omega\paren{\frac{1}{\e}}$ matrix-vector multiplications with $A$ to output a distribution $q$ such that $W_1(p,q) \leq \epsilon$ with probability at least $2/3$.
\end{corollary}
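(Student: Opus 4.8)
The plan is to prove the corollary by a direct reduction from the trace estimation lower bound of \Cref{thm:trace_lb}. Suppose, for contradiction, that there is an algorithm $\mathcal{A}$ that, given matrix-vector multiplication access to any symmetric $A \in \R^{n\times n}$ with $\|A\|_2 \le 1$, uses $o(1/\e)$ matrix-vector products and outputs a distribution $q$ with $W_1(p,q) \le \e$ with probability at least $2/3$, where $p$ is the spectral density of $A$. I would run $\mathcal{A}$ with target accuracy $\e' = \e/4$ to build a trace estimator for the hard instances of \Cref{thm:trace_lb}.

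First, observe that the hard instances in \Cref{thm:trace_lb} are positive semidefinite matrices $A$ with $\|A\|_2 \le 1$ and $n/4 \le \tr(A) \le n$; in particular they are symmetric with $\|A\|_2 \le 1$, so $\mathcal{A}$ applies. Their eigenvalues $\lambda_1,\ldots,\lambda_n$ lie in $[0,1]$, and the spectral density $p$ is uniform over these eigenvalues, so $\E_{x\sim p}[x] = \frac1n \sum_i \lambda_i = \tr(A)/n$. Run $\mathcal{A}$ with accuracy $\e'$ to obtain $q$, and output $\tilde t \defeq n \cdot \E_{x\sim q}[x]$. Computing the mean of the explicitly represented output distribution $q$ requires no additional matrix-vector products, so the reduction uses exactly as many matrix-vector products as $\mathcal{A}$.

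Next, I would bound the error of $\tilde t$. Since $f(x) = x$ is $1$-Lipschitz, Kantorovich--Rubinstein duality (\Cref{fact:w1_dual}, applied to $\pm f$) gives $|\E_{x\sim p}[x] - \E_{x\sim q}[x]| \le W_1(p,q) \le \e'$ whenever $\mathcal{A}$ succeeds. Hence $|\tilde t - \tr(A)| = n\,|\E_{x\sim q}[x] - \E_{x\sim p}[x]| \le n \e' = n\e/4 \le \e \cdot \tr(A)$, using $\tr(A) \ge n/4$. So with probability at least $2/3$, $\tilde t$ is a multiplicative $(1\pm\e)$-approximation of $\tr(A)$, i.e., a valid trace estimator in the sense of \Cref{thm:trace_lb}, using $o(1/\e)$ matrix-vector products. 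Since $\Omega(1/\e') = \Omega(1/\e)$, this contradicts the $\Omega(1/\e)$ lower bound of \Cref{thm:trace_lb}, establishing the claim.

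I do not expect a serious obstacle. The only points needing care are (i) confirming the hard instances of \Cref{thm:trace_lb} lie within the matrix class covered by the SDE guarantee (they do, being PSD hence symmetric with $\|A\|_2 \le 1$), and (ii) tracking that the constant-factor rescaling $\e' = \e/4$ preserves the $\Omega(1/\e)$ bound and that the assumption $\tr(A) \ge n/4$ is exactly what converts an additive Wasserstein error on the mean functional into a multiplicative error on the trace.
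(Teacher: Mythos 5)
Your proposal is correct and follows essentially the same route as the paper: a direct reduction to the trace estimation lower bound of \Cref{thm:trace_lb}, using Kantorovich--Rubinstein duality with the $1$-Lipschitz test functions $\pm x$ to convert a Wasserstein error of $\e/4$ into an additive trace error of $n\e/4$, and then the assumption $\tr(A)\ge n/4$ to make it a relative error of $\e$. The only cosmetic difference is that you phrase it as a contradiction with an $o(1/\e)$-query algorithm, while the paper states the reduction directly; the content is identical.
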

\begin{proof}
The proof is via a direct reduction. 
Consider a PSD matrix $A$ with $\norm{A}_2 \leq 1$, $n/4 \leq \tr(A) \leq n$, and spectral density $p$. Suppose we have a spectral density estimate $q$ of $p$ such that $W_1(p,q) \leq \e/4$. We claim that $\tilde{t} = n\cdot \int_{-1}^1 xq(x)\,dx$ yields a relative error approximate to $A$'s trace, implying that computing such a $q$ requires $\Omega({1}/{\e})$ matrix-vector products by \Cref{thm:trace_lb}.

In particular, applying Kantorovich-Rubinstein duality (\Cref{fact:w1_dual}) with the $1$-Lipschitz functions $f(x) = x$ and $f(x) = -x$, we have that:
\begin{align}
\label{eq:trace_before_scale}
\int_{-1}^1 xp(x)\,dx - \int_{-1}^1 xq(x)\,dx &\leq \e/4 & &\text{and} & \int_{-1}^1 xq(x)\,dx - \int_{-1}^1 xp(x)\,dx &\leq \e/4.
\end{align}
We have that $\int_{-1}^1 xp(x)\,dx = \frac{1}{n}\tr(A)$. So \eqref{eq:trace_before_scale} implies that $\tilde{t} = n\cdot \int_{-1}^1 xq(x)\,dx$ satisfies:
\begin{align*}
|\tilde{t} - \tr(A)| \leq n\cdot \e/4 \leq \e \cdot \tr(A). &\qedhere
\end{align*}
\end{proof}

\end{document}